\tikzstyle{block}=[draw opacity=0.7,line width=1.4cm]
\tikzstyle{comment}=[rectangle, draw=black, fill=red!50!black, rounded corners, drop shadow,
\newtheorem{theorem}{Theorem}
\newtheorem{proposition}{Proposition}
\newtheorem{conjecture}{Conjecture}
\newtheorem{lemma}{Lemma}
\newtheorem{definition}{Definition}
\theoremstyle{remark}
\newtheorem{remark}{Remark}
\def\dfn{\stackrel{\textrm{def}}{=}}
\def\bern{\mathsf{Bern}}
\def\cgk{C_{\mathsf{GK}}}
\def\cw{C_{\mathsf{W}}}
\def\tn{{\otimes n}}
\def\dtv{d_{\mathsf{TV}}}
\def\dsbs{\mathsf{DSBS}}
\DeclareMathOperator{\E}{\mathbb{E}}
\DeclareMathOperator{\ind}{\mathds{1}}
\DeclareMathOperator{\Var}{\mathsf{Var}}
\DeclareMathOperator{\Cov}{\mathsf{Cov}}
\DeclareMathOperator{\rank}{rank}
\title{Distributed Source Simulation With No Communication}
\author{Tomer~Berg, Ofer~Shayevitz, Young-Han~Kim and Lele~Wang  \thanks{This work has been supported by an ERC grant no. 639573 and an ISF grant no. 1367/14. \newline T. Berg and O. Shayevitz  are with the Department of Electrical Engineering - Systems, Tel Aviv University, Tel Aviv, Israel (email: tomerberg@mail.tau.ac.il, ofersha@eng.tau.ac.il). \newline Young-Han Kim is with the Department of Electrical and Computer Engineering, University of California, San Diego, La Jolla, CA 92093 USA (email:yhk@ucsd.edu). \newline Lele Wang is with the Department of Electrical and Computer Engineering, the University of British Columbia, 2332 Main Mail, Vancouver, BC, V6T 1Z4, Canada (email: lelewang@ece.ubc.ca). }}
\date{}
\begin{document}

\allowdisplaybreaks 	
	\maketitle
	
	\begin{abstract}
	We consider the  problem of distributed source simulation with no communication, in which Alice and Bob observe sequences $U^n$ and $V^n$ respectively, drawn from a joint distribution $p_{UV}^\tn$, and wish to locally generate sequences $X^n$ and $Y^n$ respectively with a joint distribution that is close (in KL divergence) to $p_{XY}^\tn$. We provide a single-letter condition under which such a simulation is asymptotically possible with a vanishing KL divergence. Our condition is nontrivial only in the case where the G{\`a}cs-K{\"o}rner (GK) common information between $U$ and $V$ is nonzero, and we conjecture that only scalar Markov chains $X-U-V-Y$ can be simulated otherwise. Motivated by this conjecture, we further examine the case where both $p_{UV}$ and $p_{XY}$ are doubly symmetric binary sources with parameters $p,q\leq 1/2$ respectively. While it is trivial that in this case $p\leq q$ is both necessary and sufficient, we show that when $p$ is close to $q$ then any successful simulation is close to being scalar in the total variation sense. 
	\end{abstract}
	
	\section{Introduction and Main Results}
	Let us consider the following distributed simulation problem. Assume that $(U^n,V^n)$ are drawn by nature according to some i.i.d. distribution $p_{UV}$. Alice has access to $U^n$ and she outputs some sequence $X^n$, while Bob has access to $V^n$ and he outputs some sequence $Y^n$, such that $(X^n,Y^n)$ are approximately distributed according to some i.i.d. distribution $p_{XY}$. There is no communication between the parties nor do they share any common randomness (this setup is depicted in Fig.~\ref{fig:DSS_setup}). Our goal is to characterize the set of distributions $p_{XY}$ that can be reliably simulated using this scheme. 
\begin{center}
	\begin{figure}[H]
		\centering
		\resizebox*{!}{0.3\textwidth}{
			\begin{tikzpicture}[node distance=1cm, auto]  
			\tikzset{
			mynode/.style={rectangle,draw=black, top color=white, bottom color=white, thick, minimum width=1cm, minimum height = 1cm, text centered},
			myarrow1/.style={<-, >=latex', shorten >=1pt, line width=0.4mm},
			myarrow2/.style={->, >=latex', shorten >=1pt, line 	width=0.4mm},	
			mylabel/.style={text width=5em, text centered}  
			}   
			\draw (2,-1.6) node (XY_Dist) {$(X^n,Y^n)\underset{\tiny{approx}}{\sim} p_{XY}^\tn$};
			\draw (2,1.8) node (UV_Dist) {$(U^n,V^n)\sim p_{UV}^\tn$ };
			\draw (-0.8,1) node (U) {$U^n$};
			\draw (4.8,1) node (V) {$V^n$};
			\draw (-2.6,0) node (X) {$X^n$};
			\draw (6.5,0) node (Y) {$Y^n$};
			\draw (-0.5,0) node [mynode,text width=4em] (encoder_A) {$p_{X^n|U^n}$};
			\draw (4.5,0) node [mynode,text width=4em] (encoder_B) {$p_{Y^n|V^n}$};
			\draw (-0.5,-0.8) node (Alice) {Alice};
			\draw (4.5,-0.8) node (Bob) {Bob};
			
			\draw[myarrow1] (encoder_A.north) |-  (UV_Dist.west);
			\draw[myarrow1](encoder_B.north) |-  (UV_Dist.east);
			\draw[myarrow2] (encoder_A.west) |- (X.east);
			\draw[myarrow2] (encoder_B.east) |- (Y.west);			
			\end{tikzpicture} 
		}
		\caption{Distributed source simulation}
		\label{fig:DSS_setup}
	\end{figure}
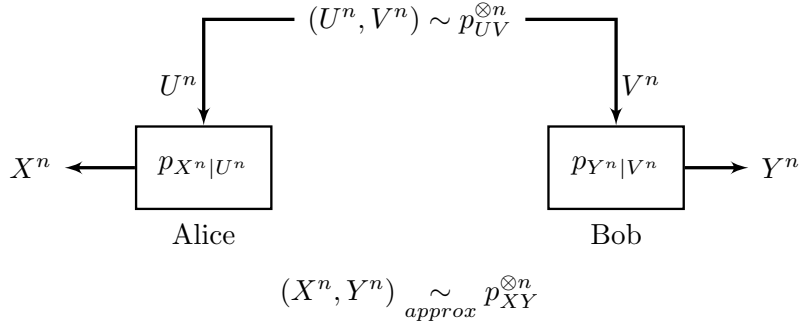
\end{center}
	To make this more formal, Let $p_{UV}$ be some joint discrete distribution, and let $(U^n,V^n) \sim p_{UV}^\tn$. We say that a joint distribution {\em $p_{XY}$ is $(n,\epsilon)$-simulable from $p_{UV}$}, if there exist conditional probability distributions $p_{X^n|U^n}$ and $p_{Y^n|V^n}$ such that the distribution 
	\begin{align*}
	&p_{X^nY^n}(x^n,y^n) \\&= \sum_{u^n,v^n} p_{UV}^\tn(u^n,v^n)p_{X^n|U^n}(x^n|  u^n) p_{Y^n|V^n}(y^n|  v^n)
	\end{align*}
	is $\epsilon$-close in relative entropy to $p_{XY}^\tn$, i.e., 
	\begin{align}
	D\left(p_{X^nY^n} \| p_{XY}^\tn\right) \leq \epsilon.
	\end{align} 
	We say that {\em $p_{XY}$ is simulable from $p_{UV}$} if it is $(n,\epsilon)$-simulable from $p_{UV}$ for every $\epsilon>0$ and $n$ sufficiently large.  
	\begin{remark}
	In our setup, we require Alice and Bob to generate one sample from $p_{XY}$ per each sample of $p_{UV}$. One can also consider a general conversion rate of $\alpha$, where Alice and Bob remotely use $U^n,V^n$ in order to generate a distribution that is approximately $p_{XY}^{\otimes \lceil \alpha n\rceil}$. However, this case is essentially equivalent to our setup. If $\alpha = k/m$ for some integers $k,m$, then one can look at the rate one problem with source distribution $P_{UV}^{\otimes m}$ and target distribution $P_{XY}^{\otimes k}$. 
	\end{remark}  
	For $U=V$, our question was already answered by Wyner:
	\begin{theorem}[ \cite{wyner1975common}]
    If $H(U) \geq \cw(X;Y) $, where
	\begin{align*}
	\cw(X;Y) \dfn \min_{W:X-W-Y} I(X,Y;W)
	\end{align*}
    then $p_{XY}$ is simulable from $p_U$.
    \end{theorem}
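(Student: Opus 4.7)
The plan is to invoke Wyner's soft-covering (channel resolvability) method, using $U^n$ as a source of approximately-uniform common randomness that both parties can extract in parallel without any communication. Fix any $W$ attaining $\cw(X;Y)$, so that $X-W-Y$ is Markov and $I(X,Y;W) = \cw(X;Y)$ (existence follows from compactness after bounding $|\m{W}|$ by the usual support-size argument for extremal $W$). Assuming strict inequality $H(U) > \cw(X;Y)$, pick a rate $R$ with $\cw(X;Y) < R < H(U)$; the boundary case is handled at the end.

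For the common randomness, I would use the AEP to partition the $\epsilon$-typical set of $U^n$ into $M_n = \lceil 2^{nR}\rceil$ cells of near-equal probability, obtaining a function $f_n:\m{U}^n\to[M_n]$ whose image $M\dfn f_n(U^n)$ has distribution $P_M$ satisfying $D(P_M\|\mathrm{Unif}[M_n]) \to 0$. Since both Alice and Bob see the same $U^n$, they compute the same $M$ with no communication. In parallel, generate a random codebook $\m{C}=\{W^n(m)\}_{m=1}^{M_n}$ i.i.d.\ from $p_W^\tn$. Alice's encoder produces $X^n$ by passing $W^n(M)$ through the memoryless channel $p_{X|W}$ (using local randomness), and Bob's encoder independently produces $Y^n$ via $p_{Y|W}$. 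Because $X-W-Y$ is Markov, the conditional output factors as $\prod_i p_{XY|W}(\cdot,\cdot|W_i(M))$, so the induced joint distribution, were $M$ exactly uniform, would be
\begin{align*}
q_{X^nY^n}(x^n,y^n) \dfn \frac{1}{M_n}\sum_{m=1}^{M_n}\prod_{i=1}^n p_{XY|W}(x_i,y_i|W_i(m)).
\end{align*}
The KL form of the soft-covering lemma (Han-Verd\'u, Hayashi) yields $\E_{\m{C}}\bigl[D(q_{X^nY^n}\|p_{XY}^\tn)\bigr] \to 0$ provided $R > I(W;X,Y)$, which holds by construction; hence some deterministic codebook $\m{C}^\star$ attains vanishing KL divergence.

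The actual protocol uses $M=f_n(U^n)$ rather than a uniform $M$, so I would absorb the small non-uniformity in one of two ways: (i) enlarge $R$ slightly and rerun the soft-covering argument, appealing to the data-processing inequality $D(p_{X^nY^n}\|q_{X^nY^n}) \leq D(P_M\|\mathrm{Unif}) \to 0$; or (ii) bound $D(p_{X^nY^n}\|p_{XY}^\tn)$ directly, combining the exponential decay of $D(q_{X^nY^n}\|p_{XY}^\tn)$ with a reverse-Pinsker inequality for fully-supported finite-alphabet targets. Either route gives $(n,\epsilon)$-simulability for every $\epsilon>0$ and sufficiently large $n$.

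The main obstacle is the \emph{KL} version of the soft-covering lemma: the total-variation counterpart is a simple second-moment/packing argument, but the KL version requires truncated-likelihood-ratio control and exponential concentration on atypical codewords, as carried out by Hayashi. A secondary issue is the boundary case $H(U)=\cw(X;Y)$, which I would handle by approximating $p_{XY}$ by targets of strictly smaller common information (via a small perturbation, e.g.\ mixing with an independent product distribution) and passing to the limit using continuity of KL between fully-supported distributions.
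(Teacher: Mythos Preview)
Your approach is correct in spirit and follows the classical Wyner route, but it is more roundabout than what the paper's own machinery gives. The paper does not prove Theorem~1 separately (it is cited), yet its soft-covering Lemma~\ref{lem:soft} yields it in one step: choose the Wyner-optimal $W$ with $X-W-Y$, set $p_{(X,Y)\mid U,W}=p_{X\mid W}\,p_{Y\mid W}$ so that $(X,Y)\perp U$ given $W$, and apply Lemma~\ref{lem:soft} with the pair $(X,Y)$ playing the role of ``$X$''. This produces a deterministic encoding $a_n:\m U^n\to\m W^n$ \emph{directly} from $U^n$, and the hypothesis becomes $H(U)>I(X,Y;U,W)=I(X,Y;W)=\cw(X;Y)$. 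Both parties (who in this $U=V$ setting observe the same $U^n$) compute $W^n=a_n(U^n)$ and pass it through $p_{X\mid W}^\tn$ and $p_{Y\mid W}^\tn$ respectively. No intermediate ``nearly uniform'' index $M$ is needed, so your non-uniformity patch disappears; this is precisely how Theorem~\ref{thrm:achievable_region} specializes (via Proposition~\ref{prop:better_than_cgk}) to recover the digital solution.

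That patch is in fact the soft spot of your write-up. Option~(i) does not close the gap as stated: data processing gives $D(p_{X^nY^n}\| q_{X^nY^n})\le D(P_M\|\mathrm{Unif})\to 0$, but KL has no triangle inequality, so this together with $D(q_{X^nY^n}\| p_{XY}^\tn)\to 0$ does not by itself yield $D(p_{X^nY^n}\| p_{XY}^\tn)\to 0$; ``enlarging $R$'' does not help here. Option~(ii) via reverse Pinsker needs $\dtv(p_{X^nY^n},p_{XY}^\tn)=o(1/n)$, since for an i.i.d.\ reference one only has $D\in O\big((n+\log(1/\dtv))\,\dtv\big)$ (the paper quotes exactly this bound); a bare typical-set partition does not obviously produce $D(P_M\|\mathrm{Unif})=o(1/n^2)$ without additional work. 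Both issues are repairable---e.g., via a KL leftover-hash extractor, or by invoking a soft-covering statement that tolerates a non-uniform index of sufficient R\'enyi entropy---but the one-step Lemma~\ref{lem:soft} simply sidesteps them. Your perturbation argument for the boundary case $H(U)=\cw(X;Y)$ is reasonable and is an aspect the paper does not address explicitly (Lemma~\ref{lem:soft} also carries a strict inequality).
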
 
    Wyner originally considered the case where $U\sim \bern(1/2)$, but the extension above is pretty obvious. $\cw(X;Y)$ is the so-called {\em Wyner common information}, defined as the minimum number of common i.i.d. random bits that must be supplied to Alice and Bob in order for them to be able to locally create sequences $X^n$ and $Y^n$ respectively, where $p_{X^n,Y^n}$ is arbitrarily close (in total variation, or in KL divergence) to being i.i.d. $p_{XY}^\tn$, in the limit of large $n$. Note that this solution is "digital", in the sense that it uses codebooks. One naive approach that comes to mind is a reduction to Wyner's setup, by generating a ``common part'' $f(U) = g(V)$ from $U\neq V$. This corresponds to using the so-called {\em G{\`a}cs-K{\"o}rner (GK) common information}~\cite{gacs1973common}, which is defined as 
	\begin{align}
	\cgk(U;V) \dfn \max_{f,g:\,\Pr(f(U)=g(V)) = 1}H(f(U)).
	\end{align}
	$\cgk(U;V)$ is the maximum amount of randomness that can be agreed upon by two separate agents, Alice and Bob, observing $U$ or $V$ respectively. The (unique) random variable $K=f(U)=g(V)$ that attains the maximum above is called the {\em GK common part} of $(U,V)$. It is well known that the GK common information tensorizes, in the sense that $\cgk(X^n; Y^n) = n\cgk(X;Y)$ where $(X^n,Y^n)\sim p_{XY}^\tn$. In other words, the GK common part of $(X^n,Y^n)$ is simply the vector of scalar common parts pertaining to each $(X_i,Y_i)$. Moreover, this tensorization is stable in the sense that it remains asymptotically valid even if a vanishing error is allowed~\cite{gacs1973common,witsenhausen1975sequences}.
	
	Combining the two results, Alice and Bob can both extract the GK common part $K^n$ from $U^n$ and $V^n$ respectively, and use Wyner coding, which leads to the following simple solution:
    \begin{proposition}[digital solution]\label{prop:digital}
    If $H(K)=\cgk(U;V) \geq \cw(X;Y)$, then $p_{XY}$ is simulable from $p_{UV}$.
    \end{proposition}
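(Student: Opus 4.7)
The plan is a direct composition of the two building blocks stated in the excerpt: Alice and Bob first locally extract the Gács–Körner common part of $(U^n,V^n)$, after which they share a perfectly correlated i.i.d.\ sequence whose per-letter entropy equals $\cgk(U;V)\ge \cw(X;Y)$; they then invoke Wyner's theorem, using this common sequence as the source of shared randomness, to locally generate $X^n$ and $Y^n$ with a joint law close to $p_{XY}^\tn$ in KL divergence.

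First I would formalize the extraction step. Let $f,g$ attain the maximum in the definition of $\cgk(U;V)$, and set $K\dfn f(U)=g(V)$ so that $\Pr(f(U)=g(V))=1$ and $H(K)=\cgk(U;V)$. Alice computes $\tilde K_i = f(U_i)$ and Bob computes $\hat K_i = g(V_i)$ for each $i=1,\ldots,n$; by a union bound, $\tilde K^n=\hat K^n$ with probability one, and under $p_{UV}^\tn$ the (common) sequence $K^n$ is i.i.d.\ $\sim p_K$ with $H(K)\ge \cw(X;Y)$.

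Next I would invoke Wyner's theorem with $K$ playing the role of $U$ in that statement. This yields, for every $\epsilon>0$ and $n$ sufficiently large, conditional kernels $p_{X^n|K^n}$ and $p_{Y^n|K^n}$ such that the distribution
\begin{align*}
q_{X^nY^n}(x^n,y^n) = \sum_{k^n}p_K^\tn(k^n)\,p_{X^n|K^n}(x^n\mid k^n)\,p_{Y^n|K^n}(y^n\mid k^n)
\end{align*}
satisfies $D(q_{X^nY^n}\,\|\,p_{XY}^\tn)\le\epsilon$.

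To conclude, I would compose the kernels by defining $p_{X^n|U^n}(x^n\mid u^n)\dfn p_{X^n|K^n}(x^n\mid f^{\otimes n}(u^n))$ and $p_{Y^n|V^n}(y^n\mid v^n)\dfn p_{Y^n|K^n}(y^n\mid g^{\otimes n}(v^n))$, where $f^{\otimes n}$ and $g^{\otimes n}$ act componentwise. Since $f^{\otimes n}(U^n)=g^{\otimes n}(V^n)=K^n$ almost surely, the law induced on $(X^n,Y^n)$ via $p_{UV}^\tn$ coincides with $q_{X^nY^n}$, and the KL bound transfers immediately, proving the claim. There is no genuine obstacle here: the content of the proposition is precisely the observation that the (per-letter, tensorizing) GK common part supplies exact common randomness at rate $\cgk(U;V)$ that can be fed into Wyner's achievability scheme, and the proof is essentially this one-line reduction.
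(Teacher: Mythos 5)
Your proposal is correct and follows essentially the same route as the paper: the paper treats Proposition~\ref{prop:digital} as an immediate consequence of combining GK common-part extraction with Wyner's theorem, and your argument simply makes that one-line reduction explicit (extract $K^n$ deterministically and losslessly, apply Wyner with $K$ in place of $U$, and compose the kernels). No substantive difference.
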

    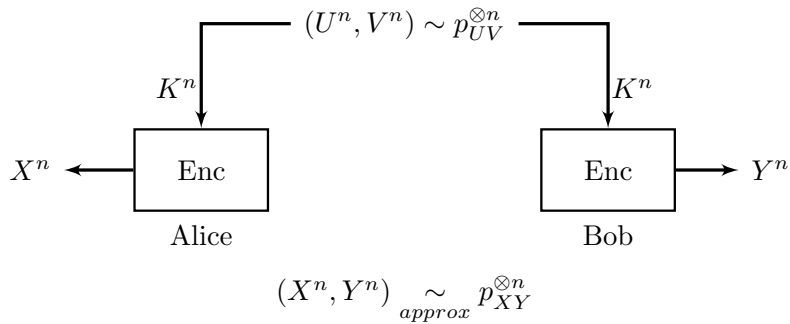
\begin{figure}[H]
	\centering
	\resizebox*{!}{0.296\textwidth}{
		\begin{tikzpicture}[node distance=1cm, auto]  
		\tikzset{
			mynode/.style={rectangle,draw=black, top color=white, bottom color=white, thick, minimum width=1cm, minimum height = 1cm, text centered},
			myarrow1/.style={<-, >=latex', shorten >=1pt, line width=0.4mm},
			myarrow2/.style={->, >=latex', shorten >=1pt, line width=0.4mm},	
			mylabel/.style={text width=5em, text centered}  
		}  
		\draw (2,-1.6) node (XY_Dist) {$(X^n,Y^n)\underset{\tiny{approx}}{\sim} p_{XY}^\tn$};
		\draw (2,1.8) node (UV_Dist) {$(U^n,V^n)\sim p_{UV}^\tn$ };
		\draw (-0.8,1) node (K1) {$K^n$};
		\draw (4.8,1) node (K2) {$K^n$};
		\draw (-2.6,0) node (X) {$X^n$};
		\draw (6.5,0) node (Y) {$Y^n$};
		\draw (-0.5,0) node [mynode,text width=4em] (encoder_A) {Enc};
		\draw (4.5,0) node [mynode,text width=4em] (encoder_B) {Enc};
		\draw (-0.5,-0.8) node (Alice) {Alice};
		\draw (4.5,-0.8) node (Bob) {Bob};
		
		\draw[myarrow1] (encoder_A.north) |-  (UV_Dist.west);
		\draw[myarrow1](encoder_B.north) |-  (UV_Dist.east);
		\draw[myarrow2] (encoder_A.west) |- (X.east);
		\draw[myarrow2] (encoder_B.east) |- (Y.west);
		
		\end{tikzpicture} 
	}
	\caption{Digital solution}
	\label{fig:Digital}
\end{figure}
    This digital approach is viable only when $\cgk(U;V)>0$. There is an even simpler analog approach that does not use common information -- Alice and Bob pass their corresponding sequences through memoryless channels $p_{X^n|U^n}=p_{X|U}^\tn$ and $p_{Y^n|V^n}=p_{Y|V}^\tn$, respectively, symbol-by-symbol.
    \begin{proposition}[analog solution]\label{prop:analog}
    If $X-U-V-Y$ form a Markov chain, then $p_{XY}$ is simulable from $p_{UV}$.
    \end{proposition}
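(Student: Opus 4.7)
The plan is to exhibit the explicit memoryless simulation suggested in the statement and check that it produces exactly the target distribution, so that the KL divergence is in fact zero for every $n$ (not merely vanishing).

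First, observe that the Markov chain $X-U-V-Y$ is equivalent to asserting the existence of a joint law
\[
p_{XYUV}(x,y,u,v) = p_{UV}(u,v)\,p_{X|U}(x|u)\,p_{Y|V}(y|v)
\]
whose $(X,Y)$-marginal coincides with the target $p_{XY}$, and whose $(U,V)$-marginal is the given $p_{UV}$. So the single-letter channels $p_{X|U}$ and $p_{Y|V}$ extracted from this joint distribution are well-defined and already satisfy
\[
\sum_{u,v} p_{UV}(u,v)\,p_{X|U}(x|u)\,p_{Y|V}(y|v) = p_{XY}(x,y).
\]

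Next, let Alice set $p_{X^n|U^n} = p_{X|U}^{\tn}$ and let Bob set $p_{Y^n|V^n} = p_{Y|V}^{\tn}$, i.e., they apply the corresponding channels symbol-by-symbol. Because the source $p_{UV}^{\tn}$ and both channels factor across coordinates, the induced joint distribution is
\begin{align*}
p_{X^nY^n}(x^n,y^n)
&= \sum_{u^n,v^n} \prod_{i=1}^n p_{UV}(u_i,v_i)\,p_{X|U}(x_i|u_i)\,p_{Y|V}(y_i|v_i) \\
&= \prod_{i=1}^n \sum_{u_i,v_i} p_{UV}(u_i,v_i)\,p_{X|U}(x_i|u_i)\,p_{Y|V}(y_i|v_i) \\
&= \prod_{i=1}^n p_{XY}(x_i,y_i) = p_{XY}^{\tn}(x^n,y^n),
\end{align*}
so $D(p_{X^nY^n}\|p_{XY}^{\tn}) = 0$ for all $n$, and in particular $p_{XY}$ is $(n,\epsilon)$-simulable from $p_{UV}$ for every $n$ and every $\epsilon>0$.

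There is essentially no obstacle here: the only content is that a Markov chain condition on single letters immediately tensorizes to give product distributions, so the proof is a one-line computation once one writes down the right channels. The interesting propositions lie elsewhere — namely, in combining the digital (Wyner + GK) and analog (Markov) strategies, and in showing that for the $\dsbs$ case essentially nothing beyond scalar Markov strategies is possible when $\cgk(U;V) = 0$.
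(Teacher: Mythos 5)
Your proof is correct and matches the paper's (implicit) argument exactly: the paper introduces the analog scheme in the sentence preceding Proposition~\ref{prop:analog}, pointing out that passing $U^n$ and $V^n$ through the memoryless channels $p_{X|U}^\tn$ and $p_{Y|V}^\tn$ produces exactly $p_{XY}^\tn$, and leaves the one-line tensorization computation unstated. You have simply written out that computation, so there is nothing to add.
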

	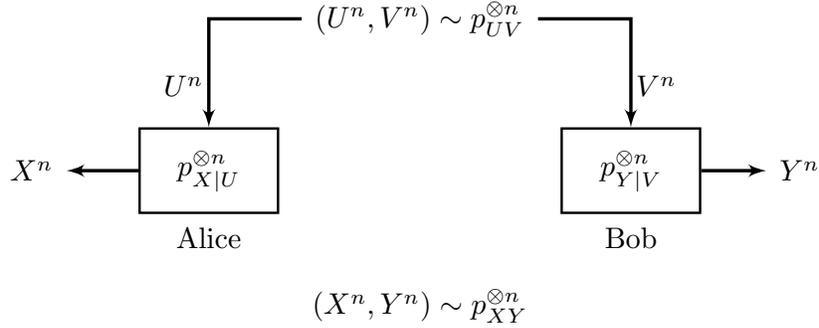
\begin{figure}[H]
	\centering
	\resizebox*{!}{0.3\textwidth}{
		\begin{tikzpicture}[node distance=1cm, auto]  
		\tikzset{
			mynode/.style={rectangle,draw=black, top color=white, bottom color=white, thick, minimum width=1cm, minimum height = 1cm, text centered},
			myarrow1/.style={<-, >=latex', shorten >=1pt, line width=0.4mm},
			myarrow2/.style={->, >=latex', shorten >=1pt, line width=0.4mm},	
			mylabel/.style={text width=5em, text centered}  
		}  
		\draw (2,-1.6) node (XY_Dist) {$(X^n,Y^n)\sim p_{XY}^\tn$};
		\draw (2,1.8) node (UV_Dist) {$(U^n,V^n)\sim p_{UV}^\tn$ };
		\draw (-0.8,1) node (U) {$U^n$};
		\draw (4.8,1) node (V) {$V^n$};
		\draw (-2.6,0) node (X) {$X^n$};
		\draw (6.5,0) node (Y) {$Y^n$};
		\draw (-0.5,0) node [mynode,text width=4em] (encoder_A) {$p_{X|U}^\tn$};
		\draw (4.5,0) node [mynode,text width=4em] (encoder_B) {$p_{Y|V}^\tn$};
		\draw (-0.5,-0.8) node (Alice) {Alice};
		\draw (4.5,-0.8) node (Bob) {Bob};
		
		\draw[myarrow1] (encoder_A.north) |-  (UV_Dist.west);
		\draw[myarrow1](encoder_B.north) |-  (UV_Dist.east);
		\draw[myarrow2] (encoder_A.west) |- (X.east);
		\draw[myarrow2] (encoder_B.east) |- (Y.west);
		
		\end{tikzpicture} 
	}
	\caption{Analog solution}
	\label{fig:Analog}
\end{figure}
	The first contribution of this work is the following characterization of a generally larger set of simulable distributions. 
	\begin{theorem}\label{thrm:achievable_region}
		Let $K$ be the GK common part of $(U,V)$. Suppose that 
		\begin{gather}
			\label{eq:markov1}W-K-(U,V)\\
			\label{eq:markov2}X-(U,W)-(V,W)-Y
		\end{gather} 
		are Markov chains, and 
		\begin{align}
			\cgk(U;V) \geq  I(X,Y;K,W).\label{eq:cgk_suff_cond}
		\end{align}
		Then $p_{XY}$ is simulable from $p_{UV}$.  	
	\end{theorem}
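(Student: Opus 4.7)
My plan is to combine the digital scheme of Proposition~\ref{prop:digital} with the analog scheme of Proposition~\ref{prop:analog} into a layered protocol in which the GK common part $K^n$ plays the dual role of shared randomness and Markov bridge for the auxiliary $W^n$. In the first stage, Alice applies $f$ componentwise to $U^n$ and Bob applies $g$ componentwise to $V^n$, so that both parties end up holding the same sequence $K^n \sim p_K^\tn$. In the second stage, both share a common codebook $\mathcal{C} = \{w^n(k^n,m): m \in [2^{nR}]\}$ whose codewords are drawn i.i.d.\ from $\prod_{i=1}^n p_{W\mid K}(\cdot\mid k_i)$, and they use a deterministic rule applied to $K^n$ (e.g.\ a joint-typicality or likelihood encoder) to produce a common index $M = M(K^n)$; both then set $W^n := w^n(K^n, M)$. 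A soft-covering analysis in expected KL divergence over the random codebook shows that, for an appropriate rate $R$, the induced $p_{K^n W^n}$ is close to $p_{K,W}^\tn$; combined with the Markov chain \eqref{eq:markov1}, this lifts to closeness of $(U^n,V^n,W^n)$ to $p_{UVW}^\tn$ in KL.

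In the third stage, Alice passes $(U^n, W^n)$ through the memoryless conditional channel $p_{X\mid U,W}^\tn$ to obtain $X^n$, and Bob passes $(V^n, W^n)$ through $p_{Y\mid V,W}^\tn$ to obtain $Y^n$. The Markov chain \eqref{eq:markov2} together with the data-processing inequality then lifts KL-closeness of $(U^n,V^n,W^n)$ to $p_{UVW}^\tn$ into KL-closeness of $(X^n,Y^n)$ to $p_{XY}^\tn$, which is the desired conclusion. The last step is a standard derandomization: since the expected KL divergence over the random codebook is vanishing, a single deterministic codebook achieves the same bound.

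The main obstacle is the second stage. Unlike the classical Wyner setup, no external common randomness is available here: the codebook index $M$ must be a deterministic function of $K^n$ itself, so that the randomness serving as the ``Wyner index'' and the randomness feeding the downstream channel are one and the same. One must therefore show that the entropy inherent in $K^n$ simultaneously suffices to select a jointly typical $W^n$ and to reproduce the correct joint $(K,W)$ marginal up to vanishing KL. The rate condition \eqref{eq:cgk_suff_cond} is precisely what makes this balance work: the left-hand side $\cgk(U;V) = H(K)$ is the rate of common randomness on hand, while the right-hand side $I(X,Y;K,W)$ is the effective Wyner-type rate that $(K,W)$ must carry when the auxiliary is judged only through its $(X,Y)$-marginal.
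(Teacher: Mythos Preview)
Your layered plan contains a genuine gap at the second stage. You claim that soft covering makes $p_{K^nW^n}$ close to $p_{KW}^\tn$ in KL, but this is impossible whenever $H(W\mid K)>0$. Since $W^n=a(K^n)$ is a deterministic function of $K^n$ (for any fixed codebook), one has
\[
D\bigl(p_{K^nW^n}\,\big\|\,p_{KW}^\tn\bigr)=\E\Bigl[-\log p_{W|K}^\tn\bigl(a(K^n)\,\big|\,K^n\bigr)\Bigr],
\]
and even after averaging over a random codebook with $a(k^n)\sim p_{W|K}^\tn(\cdot\mid k^n)$ this equals exactly $nH(W\mid K)$, which grows linearly rather than vanishing. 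The same obstruction blocks the lifted claim that $(U^n,V^n,W^n)$ is close to $p_{UVW}^\tn$: $W^n$ is still a deterministic function of $U^n$. Soft covering makes the \emph{output of a channel} look i.i.d.; it never makes the joint of the source and the deterministically selected codeword look i.i.d., so the data-processing step you rely on to pass from $(U^n,V^n,W^n)$ down to $(X^n,Y^n)$ has no valid premise.

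The paper sidesteps this by applying soft covering directly at the $(X^n,Y^n)$ level. The crucial observation is that, once $U^n,V^n$ are marginalized out, the distributed generation $X^n\sim p_{X|UW}^\tn(\cdot\mid U^n,a(K^n))$, $Y^n\sim p_{Y|VW}^\tn(\cdot\mid V^n,a(K^n))$ is exactly the same as passing $(K^n,a(K^n))$ through the single memoryless channel $p_{XY|KW}^\tn$; verifying this uses the Markov chains $(U,V)-K-W$ from~\eqref{eq:markov1} and $(X,Y)-(U,V,W)-K$. One is then precisely in the setting of Lemma~\ref{lem:soft} with source $K$, encoding $a$, and channel $p_{XY|KW}$, so the single condition $H(K)>I(X,Y;K,W)$ yields $D(p_{X^nY^n}\|p_{XY}^\tn)\to 0$ directly, without ever needing $(K^n,W^n)$ or $(U^n,V^n,W^n)$ to be approximately i.i.d. Your final paragraph correctly names $I(X,Y;K,W)$ as the operative rate, but that quantity arises only when soft covering is invoked with $(X,Y)$ as the channel output, not at the intermediate $(K,W)$ layer.
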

	Let $\mathcal{S}_{\mathrm{dig}}(p_{UV})$, $\mathcal{S}_{\mathrm{ana}}(p_{UV})$, and $\mathcal{S}(p_{UV})$ denote the collections all $p_{XY}$ simulable from $p_{UV}$ via a digital scheme (Proposition~\ref{prop:digital}), an analog scheme (Proposition~\ref{prop:analog}) and a hybrid scheme (Theorem~\ref{thrm:achievable_region}), respectively. The following proposition shows that the statement of Theorem~\ref{thrm:achievable_region} is not trivial. 
    \begin{proposition}\label{prop:better_than_cgk}
    $\mathcal{S}_{\mathrm{dig}}(p_{UV}) \cup \mathcal{S}_{\mathrm{ana}}(p_{UV}) \subseteq \mathcal{S}(p_{UV})$, and the inclusion is strict for some $p_{UV}$. Moreover, $\mathcal{S}(p_{UV})$ is strictly larger than $\mathcal{S}_{\mathrm{ana}}(p_{UV})$ if and only if $\cgk(U;V)>0$.
    \end{proposition}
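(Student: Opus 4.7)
The proposition has three parts, which I would prove in turn.

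\textbf{Inclusion.} For each component scheme, I would exhibit a valid auxiliary $W$ for Theorem~\ref{thrm:achievable_region}. For $p_{XY} \in \mathcal{S}_{\mathrm{ana}}$, take $W$ constant: the chain $W-K-(U,V)$ is trivial, $X-(U,W)-(V,W)-Y$ collapses to the given $X-U-V-Y$, and $I(X,Y;K,W) = I(X,Y;K) \leq H(K) = \cgk(U;V)$. For $p_{XY} \in \mathcal{S}_{\mathrm{dig}}$, take $W$ to be a Wyner auxiliary for $(X,Y)$ (so $X-W-Y$ with $I(X,Y;W) = \cw(X;Y)$) drawn independently of $(U,V,K)$; then $W-K-(U,V)$ holds, the longer Markov chain follows from $X-W-Y$, and $I(X,Y;K,W) = \cw(X;Y) \leq \cgk(U;V)$.

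\textbf{Strictness.} I would construct a source with both a nontrivial common part and nontrivial residual correlation, and a target that needs both pieces but for which neither component scheme alone suffices. Let $U=(K,U'),\, V=(K,V')$ with $K \sim \bern(1/2)$ independent of $(U',V') \sim \dsbs(p)$ for some small $0 < p < 1/4$, so $\cgk(U;V) = 1$. Let the target be $X=(K', X_2),\, Y=(K', Y_2)$ with $K' \sim \bern(1/4)$, $(X_2,Y_2) \sim \dsbs(q)$ for $p \leq q < 1/4$, and $(K',K') \perp (X_2,Y_2)$. The hybrid scheme succeeds via Theorem~\ref{thrm:achievable_region} with $W = K'$ (having $H(W) = h(1/4) < 1$) and analog DSBS$(q)$ channels on $(U',V')$; the three conditions of the theorem are routine to verify. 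Digital alone fails because $\cw(X;Y) \geq I(X;Y) = h(1/4) + 1 - h(q) > 1 = \cgk(U;V)$. Analog alone fails because $X_1 = Y_1$ a.s.\ under the Markov chain $X-U-V-Y$ forces $X_1$ to be a deterministic common function of $U$ and $V$, hence of the binary $K$, which is incompatible with a $\bern(1/4)$ marginal.

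\textbf{Dichotomy.} The ``if'' direction ($\cgk(U;V) > 0$) uses an example of the same flavor: take $X = Y$ with $0 < H(X) \leq \cgk(U;V)$ whose marginal is not a pushforward of $p_K$ under any deterministic map---always possible because discrete $K$ yields only finitely many pushforwards, while Bernoulli parameters vary continuously. Then $p_{XY} \in \mathcal{S}_{\mathrm{dig}} \subseteq \mathcal{S}$ while $p_{XY} \notin \mathcal{S}_{\mathrm{ana}}$ by the argument of the previous paragraph. For the ``only if'' direction ($\cgk(U;V) = 0$), $K$ is a.s.\ constant, so $W-K-(U,V)$ becomes $W \perp (U,V)$ and $I(X,Y;K,W) \leq 0$ becomes $W \perp (X,Y)$. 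Marginalizing the hybrid Markov chain gives $p(x,y|w) = \sum_{u,v} p(u,v)\, p(x|u,w)\, p(y|v,w)$, which by $W \perp (X,Y)$ equals $p(x,y)$ for every $w$ in the support; fixing any such $w_0$ yields analog channels $p(\cdot|u,w_0), p(\cdot|v,w_0)$ realizing $p_{XY}$, so $p_{XY} \in \mathcal{S}_{\mathrm{ana}}$. The main technical hurdle lies in this last step---turning the two marginal independences of $W$, together with the hybrid Markov chain, into a bona fide single-slice analog factorization.
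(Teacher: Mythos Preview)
Your proposal is correct, and for the inclusion and the ``only if'' direction of the dichotomy it matches the paper almost verbatim (the paper also fixes a slice $w_0$ and reads off scalar channels). The differences lie in the other two pieces.

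For strictness, the paper takes a quaternary source whose GK part is ternary (two symbols are linked by a $\dsbs(p)$, the other two are deterministic), targets a $\dsbs(q)$, and then computes numerically: digital covers $q\geq 0.065$, analog covers $q\geq 0.08$, while a hybrid that uses Wyner coding only on the $\dsbs$ block reaches $q\approx 0.052$. Your product-source example $U=(K,U'),V=(K,V')$ with target $(K',X_2),(K',Y_2)$ avoids all numerics: digital is ruled out by the one-line bound $\cw(X;Y)\geq I(X;Y)>1$, and analog is ruled out by the clean observation that an exact common coordinate forces a deterministic function of the GK part, which cannot have a $\bern(1/4)$ law. Your example is conceptually cleaner and self-contained; the paper's has the virtue of living in a single $\dsbs$ target family (tying in with the rest of the paper) but requires the closed-form Wyner common information of a $\dsbs$.

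For the ``if'' direction of the dichotomy, the paper argues via rank: any analog $p_{XY}$ satisfies $\rank(\mathbf{P}_{XY})\leq\rank(\mathbf{P}_{UV})$, and then one constructs a high-rank $p_{XY}$ with arbitrarily small $I(X,Y;W)$ by perturbing a rank-one conditional (using continuity of mutual information versus discontinuity of rank). Your argument---take $X=Y$ with $H(X)\leq\cgk(U;V)$ and a marginal that is not one of the finitely many pushforwards of $p_K$---is considerably more elementary and gets the job done; the rank route is more elaborate but yields a somewhat richer family of non-analog targets.
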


In Theorem~\ref{thrm:achievable_region}, our agents' ability to cooperate stems from having some common information. No common part means no perfect cooperation, and this motivates us to conjecture that only analog simulation is possible in such a case.
	\begin{conjecture}
		If $\cgk(U;V)=0$ then $p_{XY}$ is simulable from $p_{UV}$ if and only if $X-U-V-Y$ forms a Markov chain. 
	\end{conjecture}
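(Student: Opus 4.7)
The ``if'' direction is the analog scheme of Proposition~\ref{prop:analog}. For the ``only if'' direction, I would assume $\cgk(U;V)=0$ and a sequence of simulations with $D(p_{X^nY^n} \| p_{XY}^{\otimes n}) \to 0$, and try to show that $p_{XY}$ admits a scalar Markov simulation $X-U-V-Y$. The natural first step is single-letterization: pick $I \sim \mathrm{Unif}[n]$ and define $(\tilde X, \tilde Y, \tilde U, \tilde V) = (X_I, Y_I, U_I, V_I)$, so that by convexity and the chain rule the marginal of $(\tilde X, \tilde Y)$ converges in total variation to $p_{XY}$ while $(\tilde U, \tilde V) \sim p_{UV}$. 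The target scalar Markov condition is equivalent to $p_{\tilde X \tilde Y \mid \tilde U \tilde V} = p_{\tilde X \mid \tilde U}\, p_{\tilde Y \mid \tilde V}$ in the limit; the block-level conditional independence $X^n \perp Y^n \mid (U^n, V^n)$ already yields a factored structure, and the only obstruction is that $X_i$ may depend on $U^n$ beyond $U_i$ (and likewise $Y_i$ on $V^n$ beyond $V_i$).

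To close the argument, I would leverage the fact that $\cgk(U;V)=0$ together with the Witsenhausen--G\'acs--K\"orner stability result cited in the text rules out approximate common parts of $U^n$ and $V^n$, and combine this with a strong data-processing or hypercontractivity bound. The plan is to argue that any ``non-local'' dependence of $X_i$ on $U^n \setminus U_i$ that correlates with a symmetric dependence of $Y_i$ on $V^n \setminus V_i$ would manifest as a pair of nearly-agreeing functions of $(U^n, V^n)$ with non-vanishing entropy, contradicting the stability of $\cgk$. Quantitatively, the hope is to control $\sum_i I(X_i; U^n \setminus U_i \mid U_i)$ (and its symmetric counterpart) by the block KL divergence plus a term that vanishes as $n \to \infty$ under $\cgk(U;V)=0$, so that in the limit each per-coordinate channel collapses to a single-letter $p_{X|U}$ and $p_{Y|V}$.

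The main obstacle, in my view, is precisely this last step: quantitatively converting vanishing block KL divergence into a pointwise Markov rigidity. Hypercontractivity and maximal-correlation bounds (the tools behind prior non-interactive simulation results, notably the DSBS case that the paper later tackles separately) are tailored to rule out particular targets rather than to force a structural factorization of the simulator. Upgrading them to deliver rigidity --- i.e., showing that the \emph{only} way to match an i.i.d.\ target in KL under $\cgk(U;V)=0$ is via coordinate-wise channels --- appears to require a new stability theorem for the associated extremal inequality, and is plausibly why the statement is posed as a conjecture rather than a theorem.
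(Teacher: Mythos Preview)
The paper does not prove this statement; it is explicitly labeled a \emph{conjecture}, and the text immediately following it says ``We are currently unable to prove or refute this conjecture.'' There is therefore no paper proof to compare against. Your final paragraph already recognizes this, so in that sense your proposal is honest about its own status.

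On the substance of your sketch: the ``if'' direction is indeed just Proposition~\ref{prop:analog}. For the ``only if'' direction, your single-letterization step is standard and correct, and the observation that the block-level conditional independence $X^n\!\perp\! Y^n \mid (U^n,V^n)$ already factorizes is fine. The genuine gap is exactly where you place it: the claim that any non-local dependence of $X_i$ on $U^{[n]\setminus\{i\}}$ which correlates with a symmetric non-local dependence of $Y_i$ would yield an approximate common part of $(U^n,V^n)$, to be ruled out by the G\'acs--K\"orner--Witsenhausen stability. That inference is not justified. GK stability forbids Alice and Bob from \emph{agreeing} on a high-entropy value with high probability; it does not, by itself, forbid correlated but non-agreeing vector processing that still reproduces the target $p_{XY}$. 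Your proposed control of $\sum_i I(X_i;U^{[n]\setminus\{i\}}\mid U_i)$ by the block KL plus a vanishing term is precisely the missing stability theorem, and no known hypercontractivity or maximal-correlation inequality delivers it in this form.

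The paper's own partial evidence confirms how delicate this is: Proposition~\ref{prop:better_than_cgk} shows only that the \emph{hybrid achievability scheme} of Theorem~\ref{thrm:achievable_region} collapses to analog when $\cgk(U;V)=0$ (a statement about one specific construction, not about all simulators), and Theorem~\ref{thrm:DSBS} establishes rigidity only in the $\dsbs$ case with $\epsilon,\delta=o(1/n)$, via Fourier-analytic tools (FKN) rather than GK stability. Neither argument extends to the general conjecture, and the paper does not claim otherwise.
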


We are currently unable to prove or refute this conjecture. Note however that in some simple restricted cases the conjecture holds due to other impossibility results. For example, if $(U,V)$ is a $\dsbs(p)$ for some $p<1/2$ (hence $\cgk(U;V) = 0$) and we are only interested in simulating $\dsbs(q)$, then it is easy to see that $q\in[p,1-p]$ is both necessary and sufficient, and can be attained by a scalar Markov chain. Our next result shows that this is true in a stronger way; namely, when $q$ is close to $p$, then not only is the scalar Markov chain optimal, but it is essentially the only way to simulate a $\dsbs(q)$.

Let $\sigma$ be a permutation on $[n]$. With some abuse of notation, we refer to $\sigma$ as a {\em coordinate permutation} when applied to any $n$-vector, i.e., $\sigma(x^n) \dfn (x_{\sigma(1)}, x_{\sigma(2)},\ldots, x_{\sigma(n)})$. We write $\dtv(P,Q)$ to denote the total variation distance between the probability distributions $P$ and $Q$.
\begin{theorem}\label{thrm:DSBS}
	Let $(U,V)$ and $(X,Y)$ be $\dsbs(p)$ and $\dsbs(p+\delta)$ respectively, where $0\leq p \leq p+\delta \leq \frac{1}{2}$. Suppose that $p_{XY}$ is $(n,\epsilon)$-simulable from $p_{UV}$ via $p_{X^n|U^n}$ and $p_{Y^n|V^n}$. Then there exists a coordinate permutation $\sigma$ and scalar conditional distributions $q_{X_i|U_i}$ and $q_{Y_i|V_i}$ such that 
	\begin{align}	
	D_1 &\triangleq \dtv\left(p_{\sigma(X^n)|U^n}(\cdot\mid U^n)\,,\, \prod_{i=1}^nq_{X_i|U_i}(\cdot | U_i)\right)  \to  0,\label{eq:dtv_dsbs_u}\\
	D_2 &\triangleq \dtv\left(p_{\sigma(Y^n)|V^n}(\cdot\mid V^n)\,,\, \prod_{i=1}^nq_{Y_i|V_i}(\cdot | V_i)\right) \to  0, \label{eq:dtv_dsbs_v}
	\end{align}
	in probability, provided that $\epsilon,\delta=o(1/n)$. Conversely, if $\delta = \omega(1/\sqrt{n})$ or $\epsilon =\omega(1/\sqrt{n})$ then no such guarantee can be made, i.e., it is possible for $D_1,D_2$ to be bounded away from zero in probability as $n\to\infty$ for any scalar conditional distributions $q_{X_i|U_i}$ and $q_{Y_i|V_i}$.
\end{theorem}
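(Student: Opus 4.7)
My plan is to reduce the statement to a Fourier rigidity problem on the Boolean cube. I begin from the decomposition
\begin{align*}
D(p_{X^nY^n}\|p_{XY}^\tn) = D\Bigl(p_{X^nY^n}\,\Big\|\,\prod_{i=1}^n p_{X_iY_i}\Bigr) + \sum_{i=1}^n D(p_{X_iY_i}\|p_{XY}),
\end{align*}
in which both summands are non-negative. Hence each $D_i \triangleq D(p_{X_iY_i}\|\dsbs(p+\delta))$ is at most $\epsilon$, and by Pinsker the per-pair correlation $\rho_i \triangleq \E[(1-2X_i)(1-2Y_i)]$ obeys $\rho_i = 1-2(p+\delta) + O(\sqrt{D_i})$. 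Setting $f_i(u^n) \triangleq \E[1-2X_i \mid U^n=u^n] \in [-1,1]$ and $g_j(v^n)$ analogously, and using that $(U^n,V^n)\sim\dsbs(p)^\tn$ acts on $\{\pm 1\}^n$ as the noise operator $T_{1-2p}$, Fourier expansion yields $\rho_i = \sum_S \hat f_i(S)\hat g_i(S)(1-2p)^{|S|}$.

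\textbf{Extracting dictators.} Cauchy--Schwarz gives $\rho_i^2 \le F_iG_i$ with $F_i \triangleq \sum_S \hat f_i(S)^2(1-2p)^{|S|}$. Since the marginal $p_{X_i}$ is $o(1)$-close to $\bern(1/2)$, $\hat f_i(\emptyset)=o(1)$, so $F_i \le (1-2p)\|f_i\|_2^2 + o(1) \le 1-2p+o(1)$. Matching the lower bound $\rho_i^2 = (1-2p)^2 - O(\delta+\sqrt{D_i})$ forces the rigidity estimates $\sum_{|S|\ge 2}\hat f_i(S)^2 = O(\delta+\sqrt{D_i})$ and $1-\|f_i\|_2^2 = O(\delta+\sqrt{D_i})$; combined with $\|f_i\|_\infty \le 1$, the latter implies $f_i$ is nearly $\pm 1$-valued in $L^2$. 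The Friedgut--Kalai--Naor dictator theorem, applied to the sign rounding of $f_i$, then produces an index $j_i^X\in[n]$ and a sign $s_i^X\in\{\pm 1\}$ with $\|f_i - s_i^X(1-2U_{j_i^X})\|_2^2 = O(\delta+\sqrt{D_i})$, and symmetrically $j_i^Y,s_i^Y$ for $g_i$.

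\textbf{Permutation alignment and TV conversion.} Two compatibility checks fix the permutation. The target's pairwise independence $\E[\tilde X_i\tilde X_k]\approx 0$ for $i\neq k$, combined with the dictator identity $\E[\tilde X_i\tilde X_k]\approx s_i^Xs_k^X\ind\{j_i^X=j_k^X\}$, shows $i\mapsto j_i^X$ is essentially a permutation; and $\E[\tilde X_i\tilde Y_i]\approx 1-2p$ combined with $\E[(1-2U_{j_i^X})(1-2V_{j_i^Y})]=(1-2p)\ind\{j_i^X=j_i^Y\}$ forces $j_i^X=j_i^Y$. Let $\sigma$ be the inverse of this common permutation and let $q_{X_i|U_i}$ be the induced marginal channel from $U_i$ to $X_{\sigma(i)}$. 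The chain-rule identity
\begin{align*}
&D\bigl(p_{\sigma(X^n)|U^n}\,\bigm\|\,\textstyle\prod_i q_{X_i|U_i}\,\bigm|\,p_{U^n}\bigr) \\
&\quad= D\bigl(p_{\sigma(X^n)|U^n}\,\bigm\|\,\textstyle\prod_i p_{X_{\sigma(i)}|U^n}\,\bigm|\,p_{U^n}\bigr) + \sum_i D\bigl(p_{X_{\sigma(i)}|U_i}\,\bigm\|\,q_{X_i|U_i}\,\bigm|\,p_{U_i}\bigr)
\end{align*}
reduces the task to bounding a conditional total-correlation term (via the $o(1)$ slack in pairwise uncorrelatedness of the $X$'s given $U^n$, made quantitative by the dictator approximation) plus per-coordinate KL gaps (via the FKN $L^2$ estimates). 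Jensen plus Pinsker then yields $\E D_1 = o(1)$ and hence $D_1\to 0$ in probability; $D_2$ is symmetric. The main technical obstacle is the uniform-in-$i$ bookkeeping: summing the per-coordinate Fourier errors produces an aggregate of order $n\delta + \sqrt{n\epsilon}$, which is $o(1)$ precisely under the hypothesis $\delta,\epsilon=o(1/n)$, and I expect the quantitative FKN step for $[-1,1]$-valued (rather than Boolean) functions to be the most delicate ingredient.

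\textbf{Converse.} For the threshold, I would exhibit the following non-scalar scheme. Alice first adds i.i.d.\ noise $N^A\sim\bern(r)^\tn$ to $U^n$, and then, independently across the pairs $(2i-1,2i)$ and using private randomness, swaps the two coordinates of each pair with probability $\alpha=1/n$; Bob does the symmetric thing on $V^n$. A per-block KL computation shows that, once $r$ is tuned so that the scalar-plus-swap per-pair marginal matches $\dsbs(p+\delta)$ (absorbing a $\Theta(\alpha)$ marginal deviation into the KL in the $\delta=0$ case), the total KL from $\dsbs(p+\delta)^\tn$ is $O(n\alpha^2)=O(1/n)=o(1)\le\epsilon$, compatible with any $\epsilon,\delta=\omega(1/\sqrt n)$. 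On the other hand, with overwhelming probability over $U^n$ a constant fraction of pairs have $U_{2i-1}\neq U_{2i}$, and on each such pair the conditional $p_{X^n|U^n}$ is a genuine $2$-point mixture over Alice's private swap coin that no product distribution $\prod q_{X_i|U_i}$ can approximate in TV, so $D_1$ (and by symmetry $D_2$) stays bounded away from $0$ in probability for any choice of $\sigma$ and $q$'s.
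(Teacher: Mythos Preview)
Your direct part follows the same Fourier-analytic skeleton as the paper (correlation $\rho_i$ expressed through the noise operator, level-1 rigidity, FKN, then alignment of dictators), but it handles randomization differently. The paper invokes the functional representation lemma to write $X_i = f_i(U^n,A)$ with Boolean $f_i^a = f_i(\cdot,a)$, and then spends most of its effort (Lemmas~10--13) showing that with high probability over the local randomness $(A,B)$ the Boolean functions $f_i^A, g_i^B$ are near-dictators and that the dictator indices/signs agree across realizations of $A,B$; the maximal-correlation tensorization bound $\rho_m(X_i;Y_i\mid A,B)\le 1-2p$ is the key tool there. You instead work with the averaged $[-1,1]$-valued function $f_i(u^n)=\E[1-2X_i\mid U^n=u^n]$ directly. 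This is more economical---no conditioning on $(A,B)$, no maximal-correlation machinery---and once you have $\hat f_i(\{k_i\})\ge 1-O(\eta_i)$ with $\sum_i\eta_i=O(n\delta+\sqrt{n\epsilon})$, you can finish more cleanly than you propose: simply take $q_{X_i|U_i}$ to be the \emph{identity} channel (this is what the paper does), so that by Lemma~\ref{lem:TV_from_const}
\[
\E\,\dtv\bigl(p_{\sigma(X^n)\mid U^n},\textstyle\prod_i q_{X_i|U_i}\bigr)=\Pr(X^n\neq\sigma(U^n))\le\sum_i\Pr(X_i\neq U_{k_i})=\sum_i\tfrac{1-\hat f_i(\{k_i\})}{2}=o(1).
\]
Your stated TV-conversion route (Pythagorean identity plus bounding the conditional total correlation ``via pairwise uncorrelatedness'') does not work as written: pairwise uncorrelatedness of the $X_i$'s given $U^n$ does not control the multi-information $\sum_i H(X_i\mid U^n)-H(X^n\mid U^n)$, and the second summand in your chain-rule display should involve $p_{X_{\sigma(i)}\mid U^n}$, not $p_{X_{\sigma(i)}\mid U_i}$. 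The identity-channel shortcut above sidesteps all of this.

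For the converse you propose a pairwise-swap scheme, whereas the paper uses a block-parity construction (the last coordinate of each length-$n^\beta$ block is, with probability $\Theta(n^{-\alpha})$, replaced by the parity of its block). Your scheme is simpler and the KL accounting $D=O(n\alpha^2)=O(1/n)$ is right (the per-pair law is an $O(\alpha)$ perturbation of $p_{XY}^{\otimes 2}$ in $\chi^2$). The paper's construction has the feature that a single coordinate is replaced by a \emph{highly non-scalar} function (an $n^\beta$-bit parity), making the ``far from any product $\prod q_{X_i|U_i}$'' conclusion immediate; in your scheme the per-pair deviation from the best product is only $\Theta(\alpha)$ in TV, so you need a Hellinger tensorization argument over $\Theta(n)$ pairs to conclude that the overall TV stays bounded away from zero---this works (per bad pair $H^2\asymp\alpha$, so $1-H^2(\text{total})\asymp e^{-\Theta(n\alpha)}=e^{-\Theta(1)}$), but you should make that step explicit rather than asserting it from the $2$-point-mixture observation alone.
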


Loosely speaking, the above result means that if $\delta$ and $\epsilon$ are small enough, then the actual mechanism under the hood of any successful simulation scheme is truly scalar, in the sense that no statistical test with access to the inputs of the mechanism can tell it apart from a scalar one. We note that Theorem~\ref{thrm:DSBS} is well known in combinatorics for the case where $\epsilon=\delta=0$ (see e.g.~\cite{frucht1949groups}), hence our result can be interpreted as a stable version of the aforementioned one. Furthermore, our result is close to being tight; when $\epsilon = \omega(1/\sqrt{n})$ or $\delta = \omega(1/\sqrt{n})$, successful simulation is possible using vector operations, for example by using other coordinates as noise.

\subsection*{Related work}
	In their classical paper on common randomness generation~\cite{ahlswede1993common}, Ahlswede and K{\"o}rner considered a setup in which Alice and Bob observe correlated i.i.d. r.v. pairs, and a noiseless channel with capacity $R$ from Alice to Bob is given. They defined the so-called {\em CR capacity} as the maximum entropy rate that Alice and Bob can agree upon with probability approaching one. The case of $R=0$ is related to our problem, but their setup is in some sense weaker since they only care about generating randomness, and not about simulating specific distributions. Cuff {\em et al}~\cite{cuff2010coordination} studied the joint distributions that can be generated by nodes in a network under communication constraints in which some of the nodes actions are randomly selected by nature. The predominant measure of successful simulation is the {\em empirical coordination}, which is defined as the total variation between the joint type of the actions and some prescribed distribution. Cover {\em et al}~\cite{cover2007capacity} characterized the empirical coordination needed for some 3-node problems. Abroshan {\em et al}~\cite{abroshan2015zero} considered an exact, zero error coordination instead of an asymptotically vanishing error and employed the notion of set coordination, which bears similarities with the empirical notion of coordination. The problem of channel simulation became a subject of interest in recent years. Soljanin~\cite{soljanin2002compressing} studied this in the context of quantum compression with unlimited common randomness. Bennett {\em et al} introduced a "reverse Shannon theorem"~\cite{bennett2002entanglement} (see also~\cite{bennett1999entanglement},~\cite{bennett2014quantum}, and~\cite{berta2011quantum}). While Shannon’s channel coding theorem simulates a noiseless channel from a noisy channel, the reverse Shannon theorem does the opposite — simulating a noisy channel from a noiseless channel. Hence, given unlimited common randomness, any memoryless channel can simulate any other channel of lower capacity. 	Cuff~\cite{cuff2013distributed,4595216} also considered the problem of channel simulation, but where the common randomness is a limited resource at rate $R_0$ and a clean channel of rate $R$ from Alice to Bob is given. He fully characterized the rate pairs $(R,R_0)$ for which Alice and Bob can simulate a channel that is arbitrarily close in total variation to a given memoryless channel. Haddadpour {\em et al}~\cite{haddadpour2017simulation} studied a similar problem, but where the channel from Alice to Bob is a noisy memoryless one, instead of a bit pipe. Other extensions to this problem can be found in the literature (e.g~\cite{gohari2011generating},~\cite{yassaee2015channel},~\cite{haddadpour2012coordination}, and~\cite{satpathy2016secure}). One can also consider other notions of channel simulation, e.g., average distortion measure as in\cite{neuhoff1979channels},~\cite{neuhoff1982channel}, and~\cite{steinberg1994channel}, agreement probability as in~\cite{bogdanov2011extracting} or the exact simulation of~\cite{kumar2014exact}. Ghazi {\em et al}~\cite{DBLP:journals/corr/GhaziKS16} and De {\em et al}~\cite{de2018non} studied the computational-theoretic problem of deciding whether certain distributions can be simulated from a given sequence of i.i.d. pairs in a setup similar to ours (but where the target distributions are more general), and gave conditions for decidability. Some impossibility results for our setup can be readily obtained from various forms of data processing inequalities. Clearly, a necessary condition for source simulation is that $I(X;Y) \leq I(U;V)$. Witsenhausen's results ~\cite{witsenhausen1975sequences} imply the maximal correlation necessary condition of $\rho _m(X;Y) \leq \rho _m(U;V)$. Recently, Kamath and Anantharam~\cite{kamath2016non} showed that source simulation is possible only if $\mathcal{R}(X;Y) \subseteq \mathcal{R}(U;V)$ where $\mathcal{R}$ is the hypercontractivity ribbon, which is the set of all pairs $(p,q)$ for which $(X,Y)$ is $(p,q)$-hypercontractive. 

\subsection*{Organization} 
The rest of the paper is organized as follows. In Section~\ref{sec:Prelim} notations and necessary mathematical background are provided. Inspired by a variant of the soft-covering lemma, we formulate a proof for Theorem \ref{thrm:achievable_region} and Proposition~\ref{prop:better_than_cgk} in Section \ref{sect:Sufficient Conditions}. Section \ref{sect:Convecture} is dedicated to the proof of Theorem~\ref{thrm:DSBS}, which is accomplished progressively by analyzing steps of increasing complexity. Summary and discussion appear in section~\ref{sect:Sum}.

\section{Preliminaries}\label{sec:Prelim}
\subsection{General Background and Notation}
Random variables (r.v.s) are denoted by upper-case letters, their realizations by corresponding lower-case letters. We use $p_X$ to denote the probability distribution of a random variable $X$ with alphabet $\mathcal{X}$, and we write $X \sim p_X$. We denote the i.i.d. distribution on $\mathcal{X}^n$ as $p_{X}^\tn$. The Shannon entropy of $X \sim p_X$ is defined as
\begin{align}
H(X)\triangleq -\sum_{x\in \mathcal{X}}p_X(x)\log p_X(x).
\end{align}
It is clear that for $X^n \sim p_{X}^\tn$, $H(X^n)=nH(X)$.
The set of probability measures on $\mathcal{X}$ is denoted as $\mathcal{P}(\mathcal{X})$.
The total variation ($\dtv$) distance between two probability mass functions $p_X$ and $q_X$ with a common alphabet $\mathcal{X}$ is defined as
\begin{align}
\dtv(p_X,q_X)\triangleq \underset{A \subseteq \mathcal{X}}{ \sup }|p_X(A)-q_X(A)| \label{eq:d_TV}.
\end{align}
The following Lemma is standard.
\begin{lemma} \label{lem:TV_from_const}
The total variation distance satisfies
\begin{align}
    \dtv(p_X,q_X)=\frac{1}{2}\sum_{x\in \mathcal{X}}|p_X(x)-q_X(x)|. 
\end{align}
Specifically, $\dtv(p_X,\ind(x=\alpha)) = \Pr(X\neq \alpha)$ where $X\sim p_X$. 
\end{lemma}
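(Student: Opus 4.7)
The plan is to prove both parts by identifying the maximizing set in the definition \eqref{eq:d_TV} and then doing an explicit accounting of positive versus negative terms in the sum $\sum_x (p_X(x)-q_X(x))$, which is zero since both are probability measures.

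For the first identity, I would let $A^\star \triangleq \{x\in\mathcal{X} : p_X(x) \geq q_X(x)\}$ and argue that this set attains the supremum. Indeed, for any $A\subseteq\mathcal{X}$, adding to $A$ a point $x$ with $p_X(x)<q_X(x)$ only decreases $p_X(A)-q_X(A)$, while removing from $A$ a point $x$ with $p_X(x)\geq q_X(x)$ likewise only decreases it; hence $p_X(A^\star)-q_X(A^\star) \geq p_X(A)-q_X(A)$ for every $A$. By symmetry (swapping $p_X$ and $q_X$), the complementary set $(A^\star)^c$ attains $q_X((A^\star)^c)-p_X((A^\star)^c) = \sup_A |p_X(A)-q_X(A)|$ as well. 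The key observation is that $\sum_{x\in\mathcal{X}}(p_X(x)-q_X(x))=0$, so
\begin{align*}
\sum_{x\in A^\star}(p_X(x)-q_X(x)) = \sum_{x\in (A^\star)^c}(q_X(x)-p_X(x)).
\end{align*}
Each side equals $\dtv(p_X,q_X)$, and their sum is exactly $\sum_{x\in\mathcal{X}}|p_X(x)-q_X(x)|$. Dividing by two yields the claimed identity.

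For the second identity, I would simply plug $q_X(x)=\ind(x=\alpha)$ into the formula just established:
\begin{align*}
\dtv(p_X, \ind(x=\alpha)) = \tfrac12\bigl(|p_X(\alpha)-1| + \sum_{x\neq\alpha} p_X(x)\bigr) = \tfrac12\bigl(2(1-p_X(\alpha))\bigr) = \Pr(X\neq\alpha),
\end{align*}
using that $\sum_{x\neq\alpha} p_X(x) = 1-p_X(\alpha)$. There is no real obstacle here; the lemma is a standard textbook fact and the proof is a one-line rearrangement once the maximizing set $A^\star$ is identified. The only thing to be slightly careful about is confirming that $A^\star$ really is a maximizer, which is the monotonicity argument above.
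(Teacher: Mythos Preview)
Your argument is correct and is the standard textbook proof of this identity. The paper itself does not supply a proof: it simply introduces the lemma with ``The following Lemma is standard'' and moves on, so there is nothing to compare against beyond noting that your write-up fills in exactly the routine verification the authors chose to omit.
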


The Kullback Leibler (KL) Divergence between $p_X$ and $q_X$ is defined as
\begin{align}
D(p_X\|q_X)\triangleq \sum_{x\in \mathcal{X}}p_X(x)\log \frac{p_X(x)}{q_X(x)} .
\end{align}
Now consider another r.v $Y \sim p_Y$. Then the mutual information between $X$ and $Y$ is defined as
\begin{align}
I(X;Y)\triangleq D(p_{XY}\|p_Xp_Y)= \sum_{x\in \mathcal{X},y\in \mathcal{Y}}p_{XY}(x,y)\log \frac{p_{X|Y}(x|y)}{p_X(x)}.
\end{align}
The following lemma shows that we can break down the divergence between some general distribution and an i.i.d. distribution into two nonnegative quantities: one that captures the deviation of the sequence from being i.i.d., and the other that captures the deviation of the marginals from the target marginal. Any upper bound on the divergence will therefore also upper bound each of these two quantities. 

\begin{lemma} \label{lem:Divbound2}
	It holds that 
	\begin{align}
		D\left( p_{X^nY^n}\| p^\tn_{XY}\right) =  \sum_{i=1}^{n}I\left(X_i,Y_i;X^{i-1},Y^{i-1}\right) + \sum_{i=1}^{n}D\left( p_{X_iY_i}\| p_{XY}\right). 
	\end{align}
\end{lemma}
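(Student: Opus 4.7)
The plan is to combine two elementary identities: the chain rule for KL divergence, and the ``insert the marginal'' trick that splits a divergence against a product law into a dependence term and a marginal mismatch term.

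First I would invoke the chain rule for relative entropy, writing
\[
D\bigl(p_{X^nY^n}\,\|\,p_{XY}^{\otimes n}\bigr) = \sum_{i=1}^n \E\bigl[D\bigl(p_{X_iY_i\mid X^{i-1}Y^{i-1}}(\cdot\mid X^{i-1},Y^{i-1})\,\big\|\,p_{XY}\bigr)\bigr],
\]
where the simplification on the right uses that the reference measure $p_{XY}^{\otimes n}$ is i.i.d., so its conditional distribution of $(X_i,Y_i)$ given $(X^{i-1},Y^{i-1})$ is just $p_{XY}$.

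Next, for each fixed $i$, I would insert the (unconditional) marginal $p_{X_iY_i}$ inside the logarithm via
\[
\log\frac{p_{X_iY_i\mid X^{i-1}Y^{i-1}}(X_i,Y_i\mid X^{i-1},Y^{i-1})}{p_{XY}(X_i,Y_i)} = \log\frac{p_{X_iY_i\mid X^{i-1}Y^{i-1}}}{p_{X_iY_i}} + \log\frac{p_{X_iY_i}(X_i,Y_i)}{p_{XY}(X_i,Y_i)}.
\]
Taking expectations under $p$, the first summand becomes $I(X_i,Y_i;X^{i-1},Y^{i-1})$ by the standard characterization of mutual information as the expected divergence between the conditional law and the marginal, while the second summand becomes precisely $D(p_{X_iY_i}\,\|\,p_{XY})$.

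Summing over $i$ yields the claimed identity, and as a byproduct one sees that both families of terms are nonnegative, which is the qualitative takeaway the authors highlight before the statement. There is no real obstacle: the only minor point to check is that the two log-ratios integrate cleanly as claimed, which follows because $\E[\log(p_{X_iY_i}(X_i,Y_i)/p_{XY}(X_i,Y_i))]$ depends only on the marginal of $(X_i,Y_i)$ under $p$, and $\E\bigl[\log(p_{X_iY_i\mid X^{i-1}Y^{i-1}}/p_{X_iY_i})\bigr]$ is by definition $I(X_i,Y_i;X^{i-1},Y^{i-1})$.
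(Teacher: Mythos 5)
Your proof is correct. Both steps are valid: the chain rule for relative entropy with an i.i.d.\ reference law gives
\begin{align}
D\bigl(p_{X^nY^n}\,\|\,p_{XY}^{\otimes n}\bigr) = \sum_{i=1}^n \E\bigl[D\bigl(p_{X_iY_i\mid X^{i-1}Y^{i-1}}(\cdot\mid X^{i-1},Y^{i-1})\,\big\|\,p_{XY}\bigr)\bigr],
\end{align}
and inserting the marginal $p_{X_iY_i}$ in the log-ratio splits each summand into $I(X_i,Y_i;X^{i-1},Y^{i-1}) + D(p_{X_iY_i}\|p_{XY})$, since the cross term depends only on the marginal of $(X_i,Y_i)$.

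Your route differs from the paper's. The paper never invokes the chain rule for relative entropy; it expands the divergence as $-H(X^n,Y^n)-\E[\log p_{XY}^{\otimes n}(X^n,Y^n)]$, factors the log of the i.i.d.\ reference into a sum over coordinates, recognizes each coordinate term as $H(X_i,Y_i)+D(p_{X_iY_i}\|p_{XY})$, and then identifies $\sum_i H(X_i,Y_i) - H(X^n,Y^n)$ as $\sum_i I(X_i,Y_i;X^{i-1},Y^{i-1})$ via the chain rule for entropy. In short: the paper splits the divergence into an entropy piece and a cross-entropy piece first, then applies the entropy chain rule; you apply the divergence chain rule first, then split each conditional divergence. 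The two decompositions are dual and equally elementary. Your version is arguably slightly cleaner because it sidesteps the explicit entropy bookkeeping, while the paper's version has the small advantage of making the role of $-H(X^n,Y^n)$ as the sole ``cross-time'' quantity visually transparent.
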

 \begin{proof}
 	Write 
 	\begin{align} 
 	D\left( p_{X^nY^n}\| p^\tn_{XY}\right)&=\sum_{x^n,y^n}p_{X^nY^n}(x^n,y^n)\log\frac{p_{X^n,Y^n}(x^n,y^n)}{p^\tn_{XY}(x^n,y^n)}\\
 	&=-H(X^n,Y^n)-\sum_{x^n,y^n}p_{X^nY^n}(x^n,y^n)\log p^\tn_{XY}(x^n,y^n)\\
 	&=-H(X^n,Y^n)-\sum_{x^n,y^n}p_{X^nY^n}(x^n,y^n)\sum_{i=1}^{n}\log p_{XY}(x_i,y_i)\\
 	&=-H(X^n,Y^n)-\sum_{i=1}^{n}\sum_{x,y}\left(\sum_{\underset{(x_i,y_i)=(x,y)}{x^n,y^n}}p_{X^nY^n}(x^n,y^n)\right)\log p_{XY}(x,y)\\
 	&=-H(X^n,Y^n)-\sum_{i=1}^{n}\sum_{x,y}p_{X_iY_i}(x,y)\log p_{XY}(x,y)\\
 	&=-H(X^n,Y^n)+\sum_{i=1}^{n}\left(H(X_i,Y_i)+D\left( p_{X_iY_i}\parallel p_{XY}\right)\right)\\
 	&=\sum_{i=1}^{n}H(X_i,Y_i)-H(X^n,Y^n)+\sum_{i=1}^{n}D\left( p_{X_iY_i}\parallel p_{XY}\right)\\
 	&= \sum_{i=1}^{n}I\left(X_i,Y_i;X^{i-1},Y^{i-1}\right) + \sum_{i=1}^{n}D\left( p_{X_iY_i}\| p_{XY}\right), 
 	\end{align}  
 	as desired.
 \end{proof}

The correlation between $X$ and $Y$ is defined as
\begin{align}
\rho(X,Y) = \frac{\Cov (X,Y)}{\sqrt{\Var(X)}\sqrt{\Var(Y)}},
\end{align}
while the {\em Hirschfled-Gebelein-R{\`e}nyi maximal correlation} between $X$ and $Y$ (which we will refer to simply is maximal correlation) was defined in~\cite{witsenhausen1975sequences} as: 
\begin{align}
\rho_m (X;Y) = \underset{f,g}{\sup}\rho \left(f(X),g(Y)\right) .
\end{align}
We introduce the following well-known lemma without proof.
\begin{lemma}
The maximal correlation holds the following properties:
\begin{enumerate}
	\item (data processing inequality) For any functions $f,g$, $\rho_m (X;Y) \geq \rho_m (f(X);g(Y))$.
	\item (tensorization~\cite{witsenhausen1975sequences}) If $(X_1,Y_1), (X_2,Y_2)$ are independent, then
	\begin{align}
	\rho_m (X_1,X_2;Y_1,Y_2)=\max \{\rho_m (X_1;Y_1),\rho_m (X_2;Y_2)\}.  
	\end{align}
\end{enumerate}
Furthermore, $X$ and $Y$ are independent if and only if they have zero maximal correlation~\cite{renyi1959measures}. 
\end{lemma}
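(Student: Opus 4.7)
My plan is to dispatch the three assertions in turn, with increasing effort. The data processing inequality is essentially by definition: the supremum defining $\rho_m(f(X);g(Y))$ ranges over pairs $(\phi\circ f,\,\psi\circ g)$, which are in particular (measurable) real functions of $X$ and of $Y$. Hence it is a supremum over a subset of the candidates considered in $\rho_m(X;Y)$, and $\rho_m(f(X);g(Y)) \leq \rho_m(X;Y)$ follows with no calculation.

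The equivalence between independence and zero maximal correlation splits into two direct steps. If $X$ and $Y$ are independent, then $\Cov(\phi(X),\psi(Y)) = 0$ for every pair of square-integrable functions, so $\rho_m(X;Y) = 0$. For the converse, I would specialize the supremum to the indicator pair $\phi(\cdot) = \ind(\cdot = x)$ and $\psi(\cdot) = \ind(\cdot = y)$; vanishing of $\Cov(\phi(X),\psi(Y))$ for every $(x,y)$ is precisely the factorization $\Pr(X=x, Y=y) = \Pr(X=x)\Pr(Y=y)$, which is independence (on non-discrete alphabets one replaces the point indicators with indicators of measurable sets).

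The tensorization identity is the only substantive item, and the place where I expect the main effort. The $\geq$ direction is easy: feeding the joint pair to functions depending only on $X_1$ and only on $Y_1$ (respectively, only on $X_2,Y_2$) reduces, by independence of $(X_2,Y_2)$ from $(X_1,Y_1)$, to a scalar correlation $\rho(f(X_1),g(Y_1))$; taking the supremum gives $\rho_m(X_1;Y_1) \leq \rho_m(X_1,X_2;Y_1,Y_2)$, and symmetrically for the second pair.

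For the reverse inequality I would adopt the operator-theoretic viewpoint. Identify $\rho_m(X;Y)$ with the second-largest singular value of the conditional-expectation operator $T_{X\to Y}\colon L^2(p_X) \to L^2(p_Y)$ given by $(T\phi)(y) = \E[\phi(X)\mid Y=y]$, where the top singular value $1$ is carried by the constants. Equivalently, $\rho_m(X;Y)$ is the operator norm of the bilinear form $(\phi,\psi) \mapsto \E[\phi(X)\psi(Y)]$ restricted to mean-zero, unit-variance functions. Under independence, the joint conditional-expectation operator factorizes as the tensor product $T_{X_1 \to Y_1}\otimes T_{X_2 \to Y_2}$, and the singular values of a tensor product of operators are the products $\sigma_i^{(1)}\sigma_j^{(2)}$. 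Since all singular values lie in $[0,1]$ and the pair $(1,1)$ corresponds to constants (excluded from $\rho_m$), the second-largest such product is exactly $\max\{\rho_m(X_1;Y_1),\rho_m(X_2;Y_2)\}$, which is the desired upper bound. On the discrete alphabets used throughout the paper, this reduces to finite-dimensional SVD with no functional-analytic subtleties; the main obstacle is setting up the operator/singular-value correspondence cleanly. An elementary fallback would be to expand a putative maximizing pair $f(X_1,X_2), g(Y_1,Y_2)$ in orthonormal eigenbases of the two factor operators and bound the covariance term by term; this avoids operator language at the cost of heavier bookkeeping.
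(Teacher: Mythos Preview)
Your proposal is correct. The paper itself does not prove this lemma at all: it is introduced explicitly as a ``well-known lemma without proof,'' with the tensorization attributed to Witsenhausen and the independence characterization to R\'enyi. So there is nothing to compare against; you are supplying what the paper elects to cite.

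A couple of minor remarks on execution. For the data processing inequality and the independence equivalence, your arguments are clean and complete (just note that for the indicator argument you implicitly need $0<p_X(x),p_Y(y)<1$ so that the correlations are well-defined; the degenerate atoms are harmless). For tensorization, your operator/SVD route is the standard one and is fully rigorous on the finite alphabets used in the paper. The one step worth spelling out carefully is why the norm of $T_1\otimes T_2$ on $(1\otimes 1)^\perp$ equals the maximum of the two restricted norms: decompose $(1\otimes 1)^\perp$ as $(L_0^2\otimes \mathbb{C}1)\oplus(\mathbb{C}1\otimes L_0^2)\oplus(L_0^2\otimes L_0^2)$, on which $T_1\otimes T_2$ has norms $\rho_m^{(1)}$, $\rho_m^{(2)}$, and $\rho_m^{(1)}\rho_m^{(2)}$ respectively; the maximum of these is $\max\{\rho_m^{(1)},\rho_m^{(2)}\}$ since both are at most $1$. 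Your ``second-largest product of singular values'' phrasing is equivalent but relies on the (true) fact that $1$ is the top singular value of each $T_i$, which you might state explicitly.
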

\subsection{Boolean Functions and Fourier Analysis} \label{sec:MB_Boolean}
Any real-valued function $f:\{-1,1\}^n \rightarrow \mathbb{R}$ on the Hamming cube can be uniquely expressed as a multilinear polynomial~\cite{o2014analysis}
\begin{align}
f(u^n)=\sum_{S \subseteq [n]}^{}\hat{f}_Su^S, \label{eq:line_parity}
\end{align} 
where $u^S = \prod_{i\in S}u_i$. This is known as the {\em Fourier expansion} of $f$, and the real numbers $\hat{f}_S$ are called the {\em Fourier coefficients} of $f$. Collectively, the coefficients are called the Fourier spectrum of $f$. This simple representation will encourage us to transform our state space from $\{0,1\}^n$ to $\{-1,1\}^n$. We define an inner product $\langle \cdot,\cdot \rangle$ on pairs of functions $f,g$ by:
\begin{align}
\langle f,g\rangle  = \E[f(U^n)g(U^n)],
\end{align}
where it is assumed that $U^n$ is distributed uniformly over $\{-1,1\}^n$. Hence the norm of a Boolean function is
\begin{align}
\lVert f \rVert_2^2=\langle f,f \rangle  = \E[f^2(U^n)] .
\end{align}
It is readily observed that the number $u^S$ is a Boolean function; it computes the logical \textit{parity}, or \textit{exclusive-or} (XOR), of the bits $(u_i)_{i \in S}$.~\eqref{eq:line_parity} then means that any $f$ can be represented as a linear combination of parity functions over the reals. Moreover, the $2^n$ parity functions form an orthonormal basis for the vector space $V$ of functions $\{-1,1\}^n \rightarrow \mathbb{R}$, i.e 
\[\langle U^S,U^T\rangle= \left\{ \begin{array}{ll}
1 & if \hspace{2mm} S=T,\\
0 & S \neq T.\end{array} \right. \]
This follows since $u^S u^T=u^{S\Delta T}$, where $S\Delta T$ denotes symmetric
difference, and 
\[\E[U^s]=\E\left[\prod_{i\in S}^{}U_i\right]= \left\{ \begin{array}{ll}
1 & if \hspace{2mm} S=\emptyset,\\
0 & S \neq\emptyset.\end{array} \right. \]
Hence, The Fourier expansion of $f:\{-1,1\} \rightarrow \mathbb{R}$ is essentialy the representation of $f$ over the orthonormal basis of parity functions $\left(U^S\right)_{S \subseteq [n]}$, equivalently $\left\langle f,U^S\right\rangle=\hat{f}_S$.
The orthonormal basis of parities also allows us to measure the norm of $f:\{-1,1\}^n \rightarrow \mathbb{R}$ efficiently: It is just the sum of the squares of $f$'s Fourier coefficients, a fact known as \textit{Parseval's Theorem}.
\begin{align}
\langle f,f \rangle  = \E[f^2(U^n)]=\sum_{S \subseteq [n]}^{}\hat{f}_S^2.
\end{align} 
More generally, given two functions $f,g:\{-1,1\}^n \rightarrow \mathbb{R}$ we can compute their inner product by taking the "dot product" of their corresponding Fourier coefficients, which is known as  \textit{Plancherel's Theorem}.
\begin{align}
\langle f,g \rangle  =	\left\langle \sum_{S \subseteq [n]}\hat{f}_SU^S,\sum_{T \subseteq [n]}\hat{g}_TU^T \right\rangle =\sum_{S,T \subseteq [n]}\hat{f}_S\hat{g}_S\left\langle U^S,U^T \right\rangle=\sum_{S \subseteq [n]}\hat{f}_S\hat{g}_S.
\end{align}
$f$ is called Boolean if $f:\{-1,1\}^n\to\{-1,1\}$. In this case, note that
\begin{align}
\sum_{S \subseteq [n]}^{}\hat{f}_S^2=1.
\end{align} 
The expected value of a Boolean function $f$ can be calculated either by
\begin{align}
\E[f]=\Pr (f=1)-\Pr (f=-1)=2\Pr (f=1)-1,
\end{align}
or directly from the Fourier coefficients, since $\E[f]=\langle f,1 \rangle = \hat{f}_\emptyset$.
The {\em Fourier weight of f at degree k} is defined as
\begin{align}
W^k[f]=\sum_{|S|=k}\hat{f}_S^2.
\end{align}
Note that $\sum_{k=0}^n W^k[f] =\E\left[f^2(U^n)\right]$ and if $f$ is Boolean then $\sum_{k=0}^n W^k[f] = 1$. 
For later reference, we prove the following lemma:  
\begin{lemma}\label{lem:corrbound}
Let $(U^n,V^n) \sim p_{UV}^\tn$ where $p_{UV}$ is a $\dsbs(p)$. Then
\begin{align}
\E[f(U^n)g(V^n)]&=\sum_{S\subseteq [n]}(1-2p)^{|S|}\hat{f}_S\hat{g}_S\\
&\leq \frac{1}{2}\sum_{k=0}^n(1-2p)^k(W^k[f]+W^k[g]).
\end{align} 
\end{lemma}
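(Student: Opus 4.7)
The plan is to compute $\E[f(U^n)g(V^n)]$ directly via the Fourier expansions of $f$ and $g$, reducing the calculation to evaluating the $2^{2n}$ cross-moments $\E[U^S V^T]$ for $S,T\subseteq [n]$. To make this tractable, I would first introduce the noise representation of a $\dsbs(p)$: write $V_i=U_i Z_i$ where $U^n$ is uniform on $\{-1,1\}^n$ and $Z^n$ is i.i.d.\ with $\Pr(Z_i=-1)=p$ and independent of $U^n$. Then $\E[Z_i]=1-2p$, and for any $T\subseteq [n]$, $\E[Z^T]=(1-2p)^{|T|}$.

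With this representation, for any $S,T\subseteq [n]$ we have $U^S V^T = U^{S\Delta T} Z^T$. By independence of $U^n$ and $Z^n$,
\begin{align}
\E[U^S V^T]=\E[U^{S\Delta T}]\,\E[Z^T]=\ind(S=T)\,(1-2p)^{|T|}.
\end{align}
Substituting $f=\sum_S \hat f_S\, U^S$ and $g=\sum_T \hat g_T\, V^T$ and using bilinearity of expectation kills all off-diagonal terms, yielding the claimed identity
\begin{align}
\E[f(U^n)g(V^n)]=\sum_{S\subseteq [n]}(1-2p)^{|S|}\hat f_S\,\hat g_S.
\end{align}

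For the inequality I would group the sum by $|S|=k$ and apply the elementary bound $\hat f_S\hat g_S\leq \tfrac12(\hat f_S^2+\hat g_S^2)$ (AM-GM), so that $\sum_{|S|=k}\hat f_S\hat g_S\leq \tfrac12(W^k[f]+W^k[g])$. Since $p\leq 1/2$ gives $1-2p\geq 0$, the per-level weights $(1-2p)^k\geq 0$ preserve the inequality, and summing over $k$ yields the stated upper bound. The only point that requires a touch of care is the non-negativity of $1-2p$, which is implicit in the $\dsbs(p)$ convention $p\leq 1/2$ used throughout the paper; no real obstacle arises beyond bookkeeping the Fourier expansions.
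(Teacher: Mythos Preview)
Your proof is correct and follows essentially the same approach as the paper: the identity is obtained via the noise representation $V_i=U_iZ_i$ and orthogonality of the parity basis, and the inequality by grouping terms level by level. The only cosmetic difference is that the paper first applies Cauchy--Schwarz to get $\sum_{|S|=k}\hat f_S\hat g_S\le\sqrt{W^k[f]W^k[g]}$ and then AM--GM, whereas you apply AM--GM termwise directly; both routes yield the same bound $\tfrac12(W^k[f]+W^k[g])$, and yours is slightly more direct.
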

\begin{proof}
	It holds that $V_i=U_iZ_i$ where $Z_i \sim \bern(p)$ are i.i.d. r.vs.
	Write 	
	\begin{align}
	\E[f(U^n)g(V^n)]&=\sum_{S,T\subseteq [n]}\hat{f}_S\hat{g}_T\E[U^{S \Delta T}]\E[Z^T]\\&=\sum_{S\subseteq [n]}\hat{f}_S\hat{g}_S(1-2p)^{|S|}=\sum_{k=1}^{n}\sum_{|S|=k}(1-2p)^{k}\hat{f}_S\hat{g}_S.
	\end{align}
	Applying the Cauchy-Schwartz and the arithmetic-geometric mean inequalities, we have: 
	\begin{align}
	&\forall k, \sum_{|S|=k}\hat{f}_S\hat{g}_S\leq \sqrt{\sum_{|S|=k}\hat{f}_S^2\sum_{|S|=k}\hat{g}_S^2}=\sqrt{W^k[f]W^k[g]}, \\
	&\sum_{k=1}^{n}(1-2p)^{k}\sqrt{W^k[f]W^k[g]} \leq  \sum_{k=1}^{n}(1-2p)^{k}\frac{W^k[f]+W^k[g]}{2},
	\end{align}
	and the result follows.	
\end{proof}
\begin{definition}
If $f(u^n)=u_i$ for some $i \in [n]$, then $f$ is called a dictator function.
\end{definition}
Another result that will provide important insight for us going forward is the Friedgut-Kalai-Naor Theorem \cite{friedgut2002boolean}, which states that if the Fourier coefficients are concentrated on the first level, then the function is close to a dictator function:
\begin{lemma} \label{lem:FKN}
(FKN \cite{friedgut2002boolean}) If $f$ is a Boolean function and
\begin{align}
    \sum_{k\neq 1}W^k[f]\leq \delta
\end{align}
then there is a $j \in [n]$ and $b \in \{-1,1\}$ such that 
\begin{align}
    	\Pr(f(U^n)\neq b\cdot U_j) &\leq K\cdot \delta
\end{align}
for some constant K.
\end{lemma}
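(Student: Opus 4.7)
The plan is to show that the degree-one Fourier component of $f$ must concentrate almost entirely on a single coordinate, which forces $f$ to be close to a signed dictator $\pm U_j$. Let $a_i := \hat{f}_{\{i\}}$ and $L(u^n) := \sum_{i=1}^{n} a_i u_i$ denote the degree-one part of $f$. Since $f$ is Boolean, Parseval gives $\sum_S \hat{f}_S^2 = 1$, so the hypothesis directly implies $s := \sum_i a_i^2 = W^1[f] \geq 1 - \delta$ and $\E[(f-L)^2] = \sum_{k \neq 1} W^k[f] \leq \delta$. It then suffices to establish $\max_i a_i^2 \geq 1 - O(\delta)$: taking $j \in \argmax_i |a_i|$ and $b = \mathrm{sign}(a_j) \in \{\pm 1\}$, a direct computation (using $f, U_j, b \in \{\pm 1\}$) gives
\begin{align*}
\Pr\bigl(f(U^n) \neq b\, U_j\bigr) = \tfrac{1}{2}\bigl(1 - b\,\hat{f}_{\{j\}}\bigr) = \tfrac{1}{2}\bigl(1 - |a_j|\bigr) \leq O(\delta),
\end{align*}
yielding the conclusion with a universal constant $K$.

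To establish the concentration $\max_i a_i^2 \geq 1 - O(\delta)$, observe that $\sum_i a_i^4 \leq (\max_i a_i^2)\cdot s$, so it suffices to show the spectral concentration $\sum_i a_i^4 \geq 1 - O(\delta)$. A direct Fourier expansion, using $U_i^2 = 1$, yields the identity
\begin{align*}
\E\bigl[(1 - L^2)^2\bigr] = (1 - s)^2 + 2\bigl(s^2 - \textstyle\sum_i a_i^4\bigr).
\end{align*}
Combined with $1-s \leq \delta$ and $s^2 \geq 1 - 2\delta$, this identity reduces the concentration bound to the single analytic estimate $\E[(1-L^2)^2] = O(\delta)$.

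To prove this estimate, I would exploit $f^2 \equiv 1$ to obtain the factorization $1 - L^2 = f^2 - L^2 = (f-L)(f+L)$. Using $(f+L)^2 \leq 2(f^2 + L^2) = 2(1+L^2)$, we get
\begin{align*}
\E\bigl[(1-L^2)^2\bigr] \leq 2\E[(f-L)^2] + 2\E[(f-L)^2 L^2] \leq 2\delta + 2\E[(f-L)^2 L^2].
\end{align*}
Rewriting $L^2 = 1 - (1-L^2)$ and applying Cauchy-Schwarz on the second term produces a self-bounded inequality of the form $A \leq C_1 \delta + C_2 \sqrt{\delta\, A}$ for $A := \E[(1-L^2)^2]$, provided the mixed moment $\E[(f-L)^4]$ is suitably controlled; solving then gives $A = O(\delta)$. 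The main obstacle is precisely this control of $\E[(f-L)^4]$: the error $f-L$ has no Fourier-degree restriction, so the Bonami-Beckner hypercontractive inequality $\|h\|_4 \leq 3^{d/2}\|h\|_2$ does not apply to it directly. One has to peel off the Fourier levels of $f-L$ and apply hypercontractivity level-by-level (using the assumption that $\sum_{k\neq 1} W^k[f] \leq \delta$ to tame each high-degree piece), or follow the refined argument in the original paper of Friedgut, Kalai, and Naor. All remaining steps are then routine algebra.
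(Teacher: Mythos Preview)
The paper does not give its own proof of this lemma; it is stated as the Friedgut--Kalai--Naor theorem and simply cited from \cite{friedgut2002boolean}. There is therefore nothing in the paper to compare your argument against.

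That said, your sketch follows the standard route to FKN: reduce to showing $\E[(1-L^2)^2]=O(\delta)$ via the identity $\E[(1-L^2)^2]=(1-s)^2+2(s^2-\sum_i a_i^4)$, and then conclude concentration of the $a_i$'s on a single coordinate. Your reductions and the self-bounding inequality $A\leq C_1\delta+C_2\sqrt{\delta A}$ are correct. You also correctly identify the genuine obstacle: controlling $\E[(f-L)^4]$. The naive bounds (e.g.\ $|f-L|\leq 1+|L|$ together with hypercontractivity on $L$) give only $\|f-L\|_4=O(1)$, not $O(\delta^{1/4})$, so the self-bounding step as written would yield only $A=O(\sqrt{\delta})$ rather than $A=O(\delta)$; this still suffices to get a dictator conclusion, but with a worse constant. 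Getting the sharp $O(\delta)$ dependence does require the more delicate argument of the original FKN paper (or the Kindler--Safra refinement), as you note. In short: your plan is sound and honest about where the real work lies, but ``peel off Fourier levels and apply hypercontractivity level by level'' is not by itself enough to close the gap---that step hides the entire content of the FKN proof.
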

	
\section{Simulable Distributions - Achievable Region} \label{sect:Sufficient Conditions}
The main tool used in our proof of Theorem~\ref{thrm:achievable_region} is the so-called \textit{soft-covering lemma}, which has its origins in Wyner's work~\cite{wyner1975common}. 
\begin{lemma}[Lemma VII.9 in ~\cite{cuff2013distributed}]\label{lem:soft}
		Let $p_{XUW}$ be given. If $H(U)>I(X;U,W)$, then there exists a sequence of encodings $a_n: \mathcal{U}^n \rightarrow \mathcal{W}^n$ such that if $U^n\sim p_U^\tn$ and $X^n\sim p_{X|UW}^\tn(\cdot \mid  U^n,a_n(U^n))$, then $p_{X^n}$ converges in relative entropy to an i.i.d. distribution with marginal $p_X$, i.e., 
		\begin{align}
		\lim_{n\to\infty}D\left(p_{X^n}\parallel p_X^\tn\right) = 0.
		\end{align} 
	\end{lemma}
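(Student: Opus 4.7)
The plan is to establish this via a random coding argument. View $V=(U,W)$ as the input to the memoryless channel $p_{X|UW}$ with desired output marginal $p_X$, and think of the mapping $u^n\mapsto (u^n,a_n(u^n))$ as a ``codebook'' over $\mathcal{V}^n$ weighted by $p_U^\tn$ (rather than by the uniform measure). Its effective exponential size is $2^{nH(U)}$, which by hypothesis strictly exceeds the channel-resolvability threshold $2^{nI(X;U,W)}$, and this is the quantitative source of the vanishing divergence.

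First, I would introduce the random ensemble in which, for each $u^n\in\mathcal{U}^n$, the codeword $a_n(u^n)\sim p_{W|U}^\tn(\cdot\mid u^n)$ is drawn independently. Setting $P_{X^n}^{(a_n)}(x^n):=\sum_{u^n} p_U^\tn(u^n)\, p_{X|UW}^\tn(x^n\mid u^n, a_n(u^n))$, a direct marginalization yields $\E_{a_n}\!\left[P_{X^n}^{(a_n)}(x^n)\right]=p_X^\tn(x^n)$, so the scheme is correct in expectation. The central step is to concentrate $P_{X^n}^{(a_n)}(x^n)$ about this mean: after restricting attention to jointly typical $(u^n,a_n(u^n),x^n)$ under $p_{UWX}$, a Chernoff/Bernstein bound applied to the sum of independent bounded contributions from each $u^n$ shows that $|P_{X^n}^{(a_n)}(x^n) - p_X^\tn(x^n)|$ is small with doubly-exponentially high probability, exactly because the number of ``effective'' codewords $2^{nH(U)}$ dominates $2^{nI(X;U,W)}$.

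Next, I would upgrade the resulting total-variation concentration to convergence in relative entropy. Split $D(P_{X^n}^{(a_n)}\|p_X^\tn)$ into contributions from typical and atypical $x^n$: on typical $x^n$ the concentration step gives $\log\!\left(P_{X^n}^{(a_n)}(x^n)/p_X^\tn(x^n)\right)=o(1)$ pointwise, while on atypical $x^n$ the log-ratio is crudely at most $O(n)$ but is weighted by an exponentially small probability mass. Summing the two contributions, taking expectation over the random codebook, and applying Markov's inequality then extracts a deterministic $a_n$ for which $D(p_{X^n}\|p_X^\tn)\to 0$, proving the lemma.

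The main obstacle is exactly this typical/atypical split in KL, since relative entropy is not Lipschitz in total variation and a single atypical $x^n$ with $p_X^\tn(x^n)=0$-ish could in principle blow up the log-ratio. The typicality parameter must be tuned so that the typical-set probability is $1-o(1)$ while the Chernoff exponent remains strong enough to swamp the $n\log|\mathcal{X}|$ worst-case log-likelihood contribution from atypical $x^n$. This delicate balancing act is the technical heart of the argument and is precisely what is carried out in Lemma~VII.9 of~\cite{cuff2013distributed}, on which the statement above relies.
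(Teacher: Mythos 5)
Your proposal follows the same high-level ``soft covering'' strategy as the paper — draw $a_n(u^n)\sim p_{W|U}^{\otimes n}(\cdot\mid u^n)$ independently, note that the ensemble average reproduces $p_X^{\otimes n}$, restrict attention to jointly typical triples $(u^n,w^n,x^n)$, and use $H(U)>I(X;U,W)$ as the source of the vanishing divergence — but the technical route you sketch is genuinely different from what the paper does in Appendix~\ref{app:soft}. You propose a Han–Verd\'u-style resolvability argument: a pointwise Chernoff/Bernstein concentration of $P_{X^n}^{(a_n)}(x^n)$ around $p_X^{\otimes n}(x^n)$ for each typical $x^n$ (implicitly followed by a union bound over exponentially many $x^n$, which is why you need doubly-exponential tails), yielding control in total variation, and then a typical/atypical split \emph{over $x^n$} to lift this to relative entropy. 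This is essentially Cuff's original TV-based proof plus the KL upgrade that the paper explicitly mentions and deliberately sidesteps. The paper instead works directly with $\E_A D\left(p_{X^n|A}\|p_X^{\otimes n}\right)$: for each fixed $x^n$ it splits $p(x^n|a)=p_1(x^n|a)+p_2(x^n|a)$ according to whether the \emph{contributing triple} $(x^n,u^n,a(u^n))$ is typical, applies the log-sum inequality to decompose the KL into three pieces $g_1,g_2,g_3$, and then bounds $\E g_1$ via a second-moment (variance) estimate — not Chernoff — exploiting that the codewords $a(u^n)$ are independent across $u^n$. The remaining terms $\E g_2,\E g_3$ are controlled by a simple Markov argument on the atypical mass. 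The payoff of the paper's approach is that it never needs high-probability pointwise concentration or a union bound, and it automatically handles the contribution of atypical triples to typical $x^n$ (a piece your sketch doesn't explicitly account for: restricting the Chernoff bound to typical triples leaves a residual atypical-triple contribution to each $x^n$ that must be bounded separately, which is precisely the role of $p_2$ and the terms $g_2,g_3$). Your route would likely go through with care, but it is noticeably more delicate; the paper's variance-plus-log-sum decomposition is cleaner precisely because everything is done in expectation from the start.
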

	\begin{center}
		\begin{figure}[H]
			\centering
			\begin{tikzpicture}
			\node (u) {$U^n$};
			\node (encu) [rectangle, rounded corners, draw, fill=white, drop shadow, anchor=west, text=black, minimum size=12mm] at ($(u.east)+(6mm,0mm)$)  {Enc};
			\node (chan) [rectangle, rounded corners, draw, fill=white, drop shadow, anchor=west, text=black, minimum size=20mm] at ($(encu.east)+(12mm,-4mm)$) {$p_{X|W,U}$};
			\node (x) at ($(chan.east)+(8mm,0mm)$)  {$X^n$};
			\path (u) edge[thick, ->] (encu);
			\draw[thick,->] (0,-0.2) -- (0,-1.1)-- (3.2,-1.1);
			\draw[thick,->] (2.2,0) -- (3.2,0) node [above left] {$W^n$};
			\path (chan) edge[thick, ->] (x);
			\end{tikzpicture}
			\caption{Lemma~\ref{lem:soft} - Soft covering}
			\label{fig:Soft_Covering}
		\end{figure}
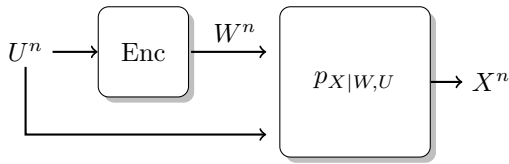
	\end{center}
This lemma was proved by Cuff~\cite{cuff2013distributed} for a weaker convergence in total variation. He showed that
	\begin{align}
	\E \dtv \left(p_{X^n},p_X^\tn\right)\leq \frac{3}{2}\exp(-\gamma n)
	\end{align} 
where $\gamma$ is some positive constant, hence there exists a codebook for which the total variation is exponentially decaying. However, as noted in~\cite{cuff2013distributed}, an inequality from~\cite{cover2012elements}
can be used to show that in this case convergence in total variation also implies convergence in KL divergence. Specifically, the inequality states that if $\Pi$ is absolutely continues with respect to $\Gamma$ and $\Gamma$ is an i.i.d. discrete distribution, then
	\begin{align}
	D\left(\Pi\| \Gamma\right) \in O\left(\left(n+\log\frac{1}{\dtv \left(\Pi,\Gamma\right)}\right)\dtv \left(\Pi,\Gamma\right)\right).
	\end{align}	
This implies that in our setup, the KL divergence is controlled by the total variation distance, and the soft-covering lemma follows. Nevertheless, we provide an alternative proof of Lemma~\ref{lem:soft}, based on ideas from~\cite{cuff2013distributed} and~\cite{cuff2015stronger}, that works with the KL divergence directly in Appendix~\ref{app:soft}. 

\subsection{Proof of Theorem~\ref{thrm:achievable_region}}	
Our construction is based on {\em hybrid coding} in the spirit of~\cite{minero2015unified},~\cite{soundararajan2009hybrid}. We use the GK common part as the digital part, and $U^n$ (resp. $V^n$) as the analog (scalar) part. Alice and Bob both remotely compute the GK common part $K^n$ of $(U^n,V^n)$ from their respective components, and create $W^n = a(K^n)$ using some encoding $a:\mathcal{K}^n\to\mathcal{W}^n$. Alice then generates $X^n\sim p_{X|UW}^\tn(\cdot\mid U^n, a(K^n))$ and Bob generates $Y^n\sim p_{Y|VW}(\cdot\mid V^n,a(K^n))$ using local randomness. This setup is depicted in Figure~\ref{fig:DSS_achievability}.
\begin{center}
	\begin{figure}[H]
	\centering
	\resizebox*{0.5\textwidth}{!}{
		\begin{tikzpicture}[node distance=0.1cm, auto]  
		\tikzset{
			mynode1/.style={rectangle,draw=black, top color=white, bottom color=white, thick, minimum width=0.8cm, minimum height = 0.8cm, text centered},
			mynode2/.style={rectangle,draw=black, top color=white, bottom color=white, thick, inner ysep=2.5em, text centered},
			myarrow/.style={->, >=latex',shorten >=1pt, line width=0.3mm},
			mylabel/.style={text width=5em, text centered}  
		} 
		\draw (4,2.6) node (Alice) {\Large{Alice} };
		\draw (0.8,1.7) node (U) {$U^n$};
		\draw (9,1) node (X) {$X^n$};
		\draw (2.9,1.9) node (K) {$K^n$};
		\draw (5.1,1.9) node (W) {$W^n$};
		\draw (4,1.7) node [mynode1,text width=3em] (encoder) {Enc};
		\draw (7,1) node [mynode2,text width=4em] (channel) {$P_{X|U,W}^\tn$};
		\draw (2.2,1.7) node [mynode1,text width=1em] (fun) {f};
		
		\draw[myarrow] (U.east) |- (fun.west);
		\draw[myarrow] (fun.east) |- (encoder.west);
		\draw[myarrow] (U.east) |- (6,0);
		\draw[myarrow] (encoder.east) |- (6,1.7) ;
		\draw[myarrow] (channel.east) |- (X.west);		
		\end{tikzpicture} 
	}
		\caption{Theorem~\ref{thrm:achievable_region} - Hybrid coding}
		\label{fig:DSS_achievability}
\end{figure}
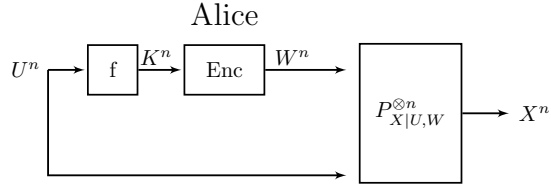
\end{center} 
Let us define $\widetilde{U}^n = (U^n,V^n)$ and $\widetilde{X}^n = (X^n,Y^n)$. Note that we cannot use Lemma~\ref{lem:soft} directly on $\widetilde{X}^n,\widetilde{U}^n, W^n$, since $W^n$ is generated from $K^n$ and not from the entire $\widetilde{U}^n$. Instead, we show that $\widetilde{X}^n$ is generated from $K^n$ and $W^n = a(K^n)$ in a memoryless fashion via $p_{\widetilde{X}|KW}^\tn$. 
\begin{align}
	p_{\widetilde{X}^n}(\widetilde{x}^n)&=\sum_{\widetilde{u}^n,k^n} p_{\widetilde{U}K}^\tn(\widetilde{u}^n,k^n)p_{\widetilde{X}|\widetilde{U}W}^\tn(\widetilde{x}^n|\widetilde{u}^n,a(k^n))\\
	&=\sum_{k^n}p_K^\tn(k^n)\sum_{\widetilde{u}^n} p_{\widetilde{U}|K}^\tn(\widetilde{u}^n|k^n)p_{\widetilde{X}|\widetilde{U}W}^\tn(\widetilde{x}^n|\widetilde{u}^n,a(k^n))\\
	&=\sum_{k^n}p_K^\tn(k^n)\sum_{\widetilde{u}^n} p_{\widetilde{U}|KW}^\tn(\widetilde{u}^n|k^n,a(k^n))p_{\widetilde{X}|\widetilde{U}WK}^\tn(\widetilde{x}^n|\widetilde{u}^n,a(k^n),k^n)\\
	&=\sum_{k^n}p_{K^n}(k^n)\sum_{\widetilde{u}^n} p_{\widetilde{X}\widetilde{U}|K,W}^\tn(\widetilde{x}^n,\widetilde{u}^n|k^n,a(k^n))\\
	&=\sum_{k^n}p_{K}^\tn(k^n)p_{\widetilde{X}|KW}^\tn(\widetilde{x}^n|k^n,a(k^n)), 
\end{align}
where we have used the fact that $\widetilde{U}-K-W$ and $\widetilde{X}-(\widetilde{U},W)-K$ are Marokv chains. Applying Lemma~\ref{lem:soft} with $(X,U,W) \leftarrow (\widetilde{X},K,W)$, we find that if $H(K)>I(\widetilde{X};K,W)=I(X,Y;K,W)$, then there exist encodings such that the statement of the theorem holds.
	
\subsection{Proof of Proposition~\ref{prop:better_than_cgk}}
We first show that the digital solution is covered by Theorem 2, i.e., that $\cgk(U;V) \geq \cw(X;Y)$ is a sufficient condition for $p_{XY}$ to be simulable from $p_{UV}$. To show that, let us choose $p(w|k)$ such that the Markov chain~\eqref{eq:markov1} is satisfied. Let us further impose $X-W-(U,V)$ and $Y-W-(X,U,V)$, which imply the Markov chain~\eqref{eq:markov2}. The theorem then indicates that $p_{XY}$ is simulable if $\cgk(U;V)\geq I(X,Y; W)$. We can now minimize over all suitable $W$ to obtain the sufficient condition.
To show that the analog solution is covered by the theorem, let us choose $W$ to be independent of $(U,V,X,Y)$. The Markov chain~\eqref{eq:markov1} is satisfied and also $I(X,Y; K,W) = I(X,Y;K) \leq H(K) = \cgk(U;V)$ holds. The only additional condition is the Markov chain~\eqref{eq:markov2}, which in this case reduces to $X-U-V-Y$. 

To show that the inclusion is strict for some $p_{UV}$, let $p_U(0) = p_U(1) = 0.4$, $p_U(2) = p_U(3) = 0.1$, $p = 0.1$, and $p_{V|U}$ is given in Figure~\ref{fig:pUV_new_region} as follows.
\begin{center}
	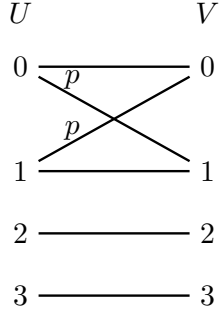
\begin{figure}[H]
		\centering
		\resizebox*{0.22\textwidth}{!} {
		\begin{tikzpicture}
		\node (u0) {$0$};
		\node (u) at ($(u0.north)+(0mm,4mm)$){$U$};
		\node (u1) at ($(u0.south)+(0mm,-10mm)$) {$1$};
		\node (u2) at ($(u1.south)+(0mm,-5mm)$) {$2$};
		\node (u3) at ($(u2.south)+(0mm,-5mm)$) {$3$};
		\node (v0) at ($(u0.east)+(20mm,0mm)$) {$0$};
		\node (v) at ($(v0.north)+(0mm,4mm)$){$V$};
		\node (v1) at ($(v0.south)+(0mm,-10mm)$) {$1$};
		\node (v2) at ($(v1.south)+(0mm,-5mm)$) {$2$};
		\node (v3) at ($(v2.south)+(0mm,-5mm)$) {$3$};
		\node (p1) at ($(u0.east)+(4mm,-1.5mm)$) {$p$};
		\node (p2) at ($(u0.east)+(4mm,-7.5mm)$) {$p$};
		\path (u0) edge[thick, -] (v0);
		\path (u1) edge[thick, -] (v1);
		\path (u2) edge[thick, -] (v2);
		\path (u3) edge[thick, -] (v3);
		\path (u0) edge[thick, -] (v1);
		\path (u1) edge[thick, -] (v0);
		\end{tikzpicture}
	}
		\caption{An illustration of a $p_{UV}$ for which  $\mathcal{S}_{\mathrm{dig}}(p_{UV}) \cup \mathcal{S}_{\mathrm{ana}}(p_{UV})\subset \mathcal{S}(p_{UV})$}
		\label{fig:pUV_new_region}
	\end{figure}
\end{center}
Now let $(X,Y) \sim \dsbs(q)$ where $q\geq 0$. Using strictly digital scheme (i.e Wyner coding) we can approximately simulate all $\dsbs(q)$ such that
\begin{align}
     &H(K)=\cgk(U;V)\geq \cw(X;Y)\\  &H(0.1,0.1,0.8)\geq 1+h_b(q)-2h_b\left(\frac{1}{2}-\frac{1}{2}\sqrt{1-2q}\right),
\end{align}
where we have used the expression for the Wyner common information of a $\dsbs(q)$. For the given $p_K$, we can simulate all $q\geq 0.065$. Now, the scalar scheme that achieves the lowest possible $q$ is the following: If $U \in \{0,2\}$ then $X=0$, otherwise $X=1$. In a similar way, If $V \in \{0,2\}$ then $Y=0$, otherwise $Y=1$. In this case we can exactly simulate $\dsbs\left(0.8\cdot p\right)$. Now consider the following hybrid scheme: Alice and Bob generate a codebook from $K^n$ to $W^n$ that achieves the minimum in Wyner's common information. The simulation protocol is the following: If $U_i=V_i=2$, Alice and Bob output $0$, and if $U_i=V_i=3$, Alice and Bob output $1$. However, if $(U_i,V_i) \in \{0,1\}$, Alice and Bob use $W_i$ from the encoding of $K^n$ and pass it through the correct $p_{X|W}$ (resp. $p_{Y|W}$) in the Wyner scheme. In this way, they can approximately simulate a $\dsbs(0.8 \cdot 0.065)$, better than both schemes.

We now prove the second statement of Proposition~\ref{prop:better_than_cgk}. Assume first that $\cgk(U;V)=0$. Then clearly $W$ is independent of $(U,V)$ and also independent of $(X,Y)$. Now fix any $w_0\in \mathcal{W}$, and write 
	\begin{align}
		p_{XY}(x,y) &= p_{XY|W}(x,y|w_0)\\
			&=\sum_{u,v}p_{UV}(u,v)p_{XY|UVW}(x,y|u,v,w_0)\\
			&= \sum_{u,v}p_{UV}(u,v)p_{X|UW}(x|u,w_0)p_{Y|VW}(y|v,w_0)
	\end{align}	  
Hence, considering the r.v.s $(\tilde{X},\tilde{Y})$ generated via 
		\begin{align}
			&\tilde{p}_{\tilde X|U}(\tilde x|u) \dfn  p_{X|U,W}(\tilde x|u,w_0)
			\\&\tilde{p}_{\tilde Y|V}(\tilde y|v)  \dfn p_{Y|V,W}(\tilde y|v,w_0), 
	\end{align}	
we have that $\tilde{X}-U-V-\tilde{Y}$ forms a Markov chain, and also $(\tilde{X},\tilde{Y})\sim p_{XY}$. 
		
Conversely, suppose $\cgk(U;V) = \epsilon > 0$. Consider the set of simulable distributions generated by some scalar Markov chain $X-U-V-Y$. Each of these distributions can be written in matrix form as
	\begin{align}
	\mathbf{P}_{XY}=\mathbf{P}_{X|U}\mathbf{P}_{UV}\mathbf{P}_{Y|V}^T, 
	\end{align}	
	hence in particular, recalling that $\rank(AB) \leq \min(\rank(A),\rank(B))$, it must hold that 
	\begin{align}
		    \rank(\mathbf{P}_{XY}) \leq \rank(\mathbf{P}_{UV}).
	\end{align}
Now, appealing to the digital approach, it suffices to show that there exists a Markov chain $X-W-Y$ such that $\rank(\mathbf{P}_{XY}) \geq \rank(\mathbf{P}_{UV})$. To that end, choose $W$ to have support over an alphabet of cardinality $M > \rank(\mathbf{P}_{UV})$, and let $|\mathcal{X}| = |\mathcal{Y}| = M$ as well. The Markov structure implies that \begin{align}
		 \mathbf{P}_{XY}=\mathbf{P}_{X|W}\mathbf{P}_{W}\mathbf{P}_{Y|W}^T.
\end{align}
Since $\rank(\mathbf{P}_W) = M$ by construction, it suffices to show one can choose $\mathbf{P}_{X|W}$ and $\mathbf{P}_{Y|W}$ to have full rank, while keeping $I(X,Y;W) = I(X;W) + I(Y;W) \leq \epsilon$. This is an easy consequence of the fact that mutual information is continuous w.r.t. the $L^\infty$ metric, whereas matrix rank is not. In particular, pick some small $\alpha>0$ and fix any  column probability vector $v\in \mathbbm{R}^M$ with all entries are in the $(2\alpha, 1-2\alpha)$ range. Let $A\in \mathbbm{R}^{M\times M}$ be a matrix whose columns are all equal to $v$. Now, pick $P_{X|W}$ at random inside an $L^\infty$ ball of radius $\frac{\alpha}{M}$ around $A$ within the space of conditional probability matrices, w.r.t. the Lebesuge measure restricted to that space. Since the volume of subspaces of dimension smaller than $M$ is zero, then $\Pr(\rank(P_{X|W}) = M)=1$. On the other hand, noting that $P_{X|W} = A$ yields $I(X;W)=0$, the continuity of the mutual information w.r.t. the $L^1$ metric~\cite{csiszar2011information} along with the fact that $||\cdot ||_1 \leq M||\cdot ||_\infty$ implies that $\Pr(I(X;W) < \epsilon/2) = 1$, if $\alpha>0$ is taken to be small enough. Hence, there exists a specific $P_{X|W}$ satisfying both the rank and the mutual information requirements. A similar argument can be made for $P_{Y|W}$, concluding the proof.

\section{The Binary Symmetric Case}\label{sect:Convecture}
Loosely speaking, Theorem~\ref{thrm:DSBS} says that if we are able to reliably distributively simulate a $\dsbs(p+\delta)$ from a $\dsbs(p)$, and if $\delta$ is sufficiently small, then the channels from $U^n$ to $X^n$ and from $V^n$ to $Y^n$ are close to being scalar and memoryless. We find it instructive to prove this claim in steps: In section~\ref{subsec:exact_sim} we show that to simulate a $\dsbs(p)$ exactly, $X^n$ and $Y^n$ must necessarily be the same signed coordinate permutation of $U^n$ and $V^n$.  In section~\ref{subsec:almost_exact_sim} we show that to simulate a $\dsbs(p)$ with some KL divergence of at most $\epsilon$ via deterministic scheme, the mappings $f(u^n), g(v^n)$ must be almost equal to the same signed coordinate permutation.  In section~\ref{subsec:almost_exact_sim_rand} we show that to simulate a $\dsbs(p)$ with some KL divergence of at most $\epsilon$ via randomized scheme, \textit{with high probability} the mappings must be almost equal to the same signed coordinate permutation. Finally, In section~\ref{subsec:theorem2} we prove Theorem~\ref{thrm:DSBS}.

\subsection{Deterministic Schemes} 
We begin by limiting our discussion to deterministic simulation schemes, i.e., where $X^n$ (resp. $Y^n$) is a deterministic function of $U^n$ (resp. $V^n$). 

\subsubsection{Exact Simulation ($\epsilon=0, \delta=0$)}\label{subsec:exact_sim}
In this subsection, we prove Theorem~\ref{thrm:DSBS} for the case where both $\epsilon=0$ and $\delta=0$, i.e., where $(X^n,Y^n)$ is a memoryless $\dsbs(p)$. Clearly, one way to guarantee this is to generate $X^n$ and $Y^n$ from $U^n$ and $V^n$ respectively via the same {\em signed} coordinate permutation, i.e., a coordinate permutation that possibly flips some of the coordinates as well. Theorem~\ref{thrm:DSBS} indicates that this is the {\em only} way to do this. 

Let $X^n=f(U^n)$ and $Y^n=g(V^n)$ be such that $p_{X^nY^n} = p_{XY}^\tn$, where $p_{XY}$ is a $\dsbs(p)$. In particular, $X^n$ (resp. $Y^n$) are uniformly distributed over the entire Hamming cube, hence it is clear that $f$ and $g$ must be permutations of the Hamming cube. Thus on the one hand, by assumption, we have that 
\begin{align}
	p_{X^nY^n}(x^n,y^n) &=  2^{-n}p^{d_H(x^n,y^n)}(1-p)^{n-d_H(x^n,y^n)}
\end{align}
and on the other hand 
\begin{align}
p_{X^nY^n}(x^n,y^n) &=  \Pr\left(f(U^n)=x^n,g(V^n)=y^n\right)\\
&=\Pr\left(U^n=f^{-1}(x^n),V^n=g^{-1}(y^n)\right)\\
&=2^{-n}p^{d_H(f^{-1}(x^n),g^{-1}(y^n))}(1-p)^{n-d_H(f^{-1}(x^n),g^{-1}(y^n))}.
\end{align}
hence it must be that 
\begin{align}
	d_H(x^n,y^n) = d_H(f(x^n), g(y^n))
\end{align}
for any $x^n,y^n \in\{0,1\}^n$. Substituting $y^n=x^n$ in the above, we see that $d_H(f(x^n), g(x^n)) = 0$ for any $x^n$. We thus conclude that $f=g$ must hold. The problem is now reduced to establishing the following Lemma. 
\begin{lemma}
A bijection $f:\{0,1\}^n\to \{0,1\}^n$ preserves the Hamming distance if and only if $f$ is a signed coordinate permutation. 
\end{lemma}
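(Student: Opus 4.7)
The plan is to prove both directions. The ``if'' direction is essentially immediate: any signed coordinate permutation $\sigma(x)_i = x_{\tau(i)} \oplus b_i$ factors as a coordinate permutation $\tau$ (a trivial Hamming isometry) composed with coordinate-wise XOR by a fixed $b \in \{0,1\}^n$, and the XOR is an isometry because $(x_i \oplus b_i) \neq (y_i \oplus b_i)$ iff $x_i \neq y_i$, so $d_H$ is preserved in both steps.

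For the ``only if'' direction, my first step is to reduce to the normalized case $f(0^n) = 0^n$. Set $b \dfn f(0^n)$ and let $\tilde f(x) \dfn f(x) \oplus b$. Since $y \mapsto y \oplus b$ is itself a signed coordinate permutation (and hence, by the easy direction, a Hamming isometry), $\tilde f$ is again a distance-preserving bijection with $\tilde f(0^n) = 0^n$. It thus suffices to show that $\tilde f$ is an ordinary (unsigned) coordinate permutation, because writing $f(x) = \tilde f(x) \oplus b$ then exhibits $f$ as a signed coordinate permutation.

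The isometry property combined with $\tilde f(0^n) = 0^n$ forces $\tilde f$ to preserve Hamming weight, $d_H(\tilde f(x), 0^n) = d_H(x, 0^n)$. In particular $\tilde f$ restricts to a bijection on the weight-$1$ layer $\{e_1, \ldots, e_n\}$, which defines a permutation $\pi \in S_n$ via $\tilde f(e_i) = e_{\pi(i)}$. The main step is then to show, for every $x = e_{i_1} \oplus \cdots \oplus e_{i_k}$ of weight $k$, that $\tilde f(x) = e_{\pi(i_1)} \oplus \cdots \oplus e_{\pi(i_k)}$. Writing $S = \mathrm{supp}(\tilde f(x))$, weight preservation gives $|S| = k$, and the constraint $d_H(\tilde f(x), e_{\pi(i_j)}) = d_H(x, e_{i_j}) = k-1$ becomes $|S \triangle \{\pi(i_j)\}| = k-1$, which forces $\pi(i_j) \in S$ (otherwise the symmetric difference would have cardinality $k+1$). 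Running this for $j = 1, \ldots, k$ pins down $S = \{\pi(i_1), \ldots, \pi(i_k)\}$ uniquely, yielding the desired formula.

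I do not anticipate any genuine obstacle; the only mildly delicate point is recognizing that the $k$ distance constraints arising from the weight-$1$ images collectively determine $\mathrm{supp}(\tilde f(x))$ via this symmetric-difference count, which obviates any explicit induction on the weight of $x$.
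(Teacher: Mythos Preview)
Your proof is correct and follows essentially the same approach as the paper: reduce to the case $f(0^n)=0^n$ by XOR-ing with $f(0^n)$, observe that weight is preserved so $f$ permutes the weight-one layer, and then argue this forces $f$ to be the corresponding coordinate permutation everywhere. The paper's proof is a terse two-line sketch that simply asserts ``it permutes the vectors of weight one, hence it must be a coordinate permutation''; your symmetric-difference argument with the weight-one images is exactly the detail that justifies this ``hence''.
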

\begin{proof}
This is a well known fact, see e.g.~\cite{frucht1949groups}, but we nevertheless provide a short proof. A signed coordinate permutation is clearly a bijection that preserves the Hamming distance. To prove the other direction, assume first that $f(0^n) = 0^n$. Then it must be that $f$ preserves the Hamming weight, and specifically, it permutes the vectors of weight one, hence it must be a coordinate permutation. The case where $f(0^n)$ is mapped to any other nonzero vector is similar, with the exception that $f$ is now a signed coordinate permutation, flipping exactly those coordinates where $f(0^n)$ is one. 
\end{proof}

\subsubsection{Almost Exact Simulation ($\epsilon>0, \delta =0$)}\label{subsec:almost_exact_sim}
We saw that the only way to simulate a $\dsbs(p)$ from  a $\dsbs(p)$ is the trivial way, by signed coordinate permutations. Next, we examine the stability of this claim. Namely, we allow the simulation to be slightly imperfect, such that that KL divergence between the simulated distribution and a $\dsbs(p)$ is at most $\epsilon$, and show that both the functions $f(u^n)$ and $g(v^n)$ will be almost equal to the same signed coordinate permutation.      

First, although we do not directly use this fact, it is instructive to  note that both $f$ and $g$ are almost permutations of the Hamming cube.
\begin{lemma}\label{lem:mostly_bijection}
$\Pr\left(\max\left\{|f^{-1}(X^n)|, |g^{-1}(Y^n)|\right\} >  1\right) \leq 2\epsilon$. 
\end{lemma}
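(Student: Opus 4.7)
The plan is to bound $\Pr(|f^{-1}(X^n)|>1)$ and $\Pr(|g^{-1}(Y^n)|>1)$ separately by $\epsilon$ each and finish with a union bound. The key observation is that a $\dsbs(p)$ has uniform $\bern(1/2)$ marginals, so the target marginal of either $X^n$ or $Y^n$ under $p_{XY}^\tn$ is simply the uniform distribution on the Hamming cube $\{0,1\}^n$.

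First I would invoke the monotonicity of KL divergence under marginalization on the hypothesis $D(p_{X^nY^n}\|p_{XY}^\tn)\leq \epsilon$ to deduce $D(p_{X^n}\|p_X^\tn)\leq \epsilon$ and, symmetrically, $D(p_{Y^n}\|p_Y^\tn)\leq \epsilon$. Since $f$ is deterministic and $U^n$ is uniform on $\{0,1\}^n$, the push-forward marginal has the clean form $p_{X^n}(x^n)=|f^{-1}(x^n)|/2^n$, and the divergence against the uniform target reduces to
\begin{align}
D(p_{X^n}\|p_X^\tn) \;=\; \sum_{x^n:\,|f^{-1}(x^n)|\geq 1}\frac{|f^{-1}(x^n)|}{2^n}\log |f^{-1}(x^n)|.
\end{align}
Every $x^n$ with $|f^{-1}(x^n)|\geq 2$ contributes at least $p_{X^n}(x^n)\cdot \log 2 = p_{X^n}(x^n)$ to this sum (working in bits), so the whole sum lower bounds $\Pr(|f^{-1}(X^n)|>1)$. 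This yields $\Pr(|f^{-1}(X^n)|>1)\leq \epsilon$, and the identical computation applied to $g$ together with a union bound gives the claimed $2\epsilon$ bound.

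I do not foresee any serious obstacle: the lemma essentially captures the information-theoretic fact that any ``collision'' of the deterministic map $f$ costs a positive amount of divergence from the uniform target, and the stated inequality is just the quantitative version of this. The only point requiring mild care is keeping track of the base of the logarithm, since the $\log 2 = 1$ contribution from a doubled fiber is exactly what calibrates the $\epsilon$ on the right-hand side.
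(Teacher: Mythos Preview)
Your proof is correct and essentially identical to the paper's. The paper routes the same computation through the identity $D(p_{X^n}\|p_X^\tn)=n-H(X^n)=H(U^n\mid X^n)$ and then observes that each fiber with $|f^{-1}(x^n)|\geq 2$ contributes at least one bit of conditional entropy, which is exactly your bound $\log|f^{-1}(x^n)|\geq 1$ applied to the same sum.
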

\begin{proof}
	Observe that by the chain rule, the marginal divergence must also be bounded by $\epsilon$, hence  
	\begin{align}	
	D\left( p_{X^n}\parallel p^\tn_{X}\right)  = n - H(X^n) \leq \epsilon, 
	\end{align}
	and therefore $H(X^n) \geq n -\epsilon$. Then 
	\begin{align}
	n &= H(U^n)\\
	&= H(U^n, X^n)\\
	&= H(X^n) + H(U^n | X^n)\\
	&\geq  H(X^n) + \Pr(|f^{-1}(X^n)|=1)\cdot 0 + \Pr(|f^{-1}(X^n)| > 1)\cdot 1,  
	\end{align}
	which implies that 
	$\Pr(|f^{-1}(X^n)| >  1) \leq \epsilon $. A similar arguments applies to $g$, and the claim follows from the union bound. 
\end{proof}

Next, we provide a useful lower bound on the KL divergence between the simulation $P_{X^nY^n}$ and the desired i.i.d. distribution $P_{XY}^n$, in terms of the expected Hamming distance only. 
\begin{lemma} \label{lem:Divbound}
Let $p_{UV}$ and $p_{XY}$ be $\dsbs(p)$ and $\dsbs(q)$ respectively. Let $X^n=f(U^n)$ and $Y^n = g(V^n)$. Then 
\begin{align}
D\left(p_{X^nY^n}\|p^\tn_{XY}\right) \geq \log{\frac{1-q}{q}}\cdot \E d_H(X^n,Y^n) +  n\left(\log{\frac{1-p}{1-q}} - p\log{\frac{1-p}{p}}\right).
\end{align}
with equality if and only if both $f$ and $g$ are bijections. Specifically, for $p=q$ we have 
\begin{align}
D\left(p_{X^nY^n}\|p^\tn_{XY}\right) \geq \log{\frac{1-p}{p}}\cdot \left(\E d_H(X^n,Y^n) -  np\right).
\end{align}
\end{lemma}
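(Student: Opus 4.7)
The plan is to expand the KL divergence in terms of the joint entropy $H(X^n,Y^n)$ and the expected log-likelihood under the target distribution, then exploit that $p_{XY}$ is a DSBS so that the log-likelihood depends on $(x^n,y^n)$ only through the Hamming distance. Writing
\begin{align*}
D\left(p_{X^nY^n}\|p_{XY}^{\otimes n}\right) = -H(X^n,Y^n) - \sum_{x^n,y^n} p_{X^nY^n}(x^n,y^n)\log p_{XY}^{\otimes n}(x^n,y^n),
\end{align*}
and substituting $p_{XY}^{\otimes n}(x^n,y^n)=2^{-n}q^{d_H(x^n,y^n)}(1-q)^{n-d_H(x^n,y^n)}$, the second term becomes a linear function of $\E d_H(X^n,Y^n)$:
\begin{align*}
-\E\log p_{XY}^{\otimes n}(X^n,Y^n) = n - n\log(1-q) + \log\tfrac{1-q}{q}\cdot \E d_H(X^n,Y^n),
\end{align*}
assuming $\log=\log_2$ (any other base is absorbed everywhere).

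The key inequality is an upper bound on $H(X^n,Y^n)$. Since $X^n=f(U^n)$ and $Y^n=g(V^n)$ are deterministic functions of $(U^n,V^n)$, we have $H(X^n,Y^n)\leq H(U^n,V^n)=n(1+h_b(p))$. Substituting this bound and simplifying using $h_b(p)=-p\log p-(1-p)\log(1-p)$ should rearrange precisely into
\begin{align*}
\log\tfrac{1-q}{q}\cdot \E d_H(X^n,Y^n) + n\left(\log\tfrac{1-p}{1-q} - p\log\tfrac{1-p}{p}\right),
\end{align*}
which is the claimed bound; the specialization $p=q$ follows by direct algebra.

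For the equality condition, note that $H(X^n,Y^n|U^n,V^n)=0$ so equality in $H(X^n,Y^n)\leq H(U^n,V^n)$ holds iff $(U^n,V^n)$ is a.s. determined by $(X^n,Y^n)$. Since $p_{UV}$ has full support, this means the joint product map $(u^n,v^n)\mapsto (f(u^n),g(v^n))$ must be a bijection on $\{0,1\}^{2n}$. A standard argument (if $f$ collapses two inputs, then the product map also collapses a pair) shows this is equivalent to both $f$ and $g$ being individually bijective. The main subtlety — which is really the only non-bookkeeping step — is confirming that this tight relation between injectivity of the product and injectivity of the factors carries through; but on finite sets with full-support product measure this is immediate, so I do not expect a serious obstacle here. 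The rest is simple algebraic reorganization.
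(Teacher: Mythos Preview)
Your proof is correct and is essentially the same as the paper's: the paper lower-bounds the numerator $p_{X^nY^n}(x^n,y^n)=\sum_{u^n\in f^{-1}(x^n),\,v^n\in g^{-1}(y^n)} p_{UV}^{\otimes n}(u^n,v^n)$ by the single summand $p_{UV}^{\otimes n}(u^n,v^n)$, which after summing over $(u^n,v^n)$ is exactly your inequality $H(X^n,Y^n)\le H(U^n,V^n)$, and then evaluates the remaining cross-entropy term via the DSBS form just as you do. The equality analysis is likewise the same, phrased in the paper as ``each preimage $f^{-1}(x^n)\times g^{-1}(y^n)$ is a singleton'' rather than via $H(U^n,V^n\mid X^n,Y^n)=0$.
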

\begin{proof}
Write
\begin{align}	
D\left(p_{X^nY^n}\|p^\tn_{XY}\right)&=\sum_{x^n,y^n}p_{X^nY^n}(x^n,y^n)\log\frac{p_{X^nY^n}(x^n,y^n)}{p^\tn_{XY}(x^n,y^n)}\\
&=\sum_{x^n,y^n}\left(\sum_{u^n\in f^{-1}(x^n), v^n\in g^{-1}(y^n)}p_{UV}^\tn(u^n,v^n)\right)\log\frac{\left(\sum_{u^n\in f^{-1}(x^n), v^n\in g^{-1}(y^n)}p_{UV}^\tn(u^n,v^n)\right)}{p^\tn_{XY}(x^n,y^n)}\\
&=\sum_{u^n,v^n}p_{UV}^\tn(u^n,v^n)\log\frac{\left(\sum_{\tilde{u}^n\in f^{-1}(f(u^n)), \tilde{v}^n\in g^{-1}(g(v^n))}p_{UV}^\tn(\tilde{u}^n,\tilde{v}^n)\right)}{p^\tn_{XY}(f(u^n),g(v^n))}\\
&\geq \sum_{u^n,v^n}p_{UV}^\tn(u^n,v^n)\log\frac{p_{UV}^\tn(u^n,v^n)}{p^\tn_{XY}(f(u^n),g(v^n))}\\
&=\sum_{u^n,v^n}p_{UV}^\tn(u^n,v^n)\log\frac{p^{d_H(u^n,v^n)}(1-p)^{n-d_H(u^n,v^n)}}{q^{d_H(f(u^n),g(v^n))}(1-q)^{n-d_H(f(u^n),g(v^n))}}\\&=\log{\frac{1-q}{q}}\cdot \E d_H(f(U^n),g(V^n)) +  n\left(\log{\frac{1-p}{1-q}} - p\log{\frac{1-p}{p}}\right).
\end{align} 
It is easy to see that the inequality holds with equality if and only if both $f$ and $g$ are bijections. 
\end{proof}

Let us now write $X_i=f_i(U^n),Y_i=g_i(V^n)$, where $f_i,g_i$ are the Boolean functions generating the $i$th coordinate in the respective sequences. It is clear that for a successful simulation, these functions must be close to unbiased, i.e., $W^0[f_i]\approx 0, W^0[g_i]\approx 0$ . The following Lemma quantifies this fact. 
\begin{lemma}\label{lem:almost_unbiased}
	$\sum_{i\in [n]} W^0[f_i] + W^0[g_i] \leq \epsilon\cdot 4\ln{2}$. 
\end{lemma}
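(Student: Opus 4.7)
The plan is to relate each $W^0[f_i]$ to the total-variation deviation of the marginal $p_{X_i}$ from the uniform distribution $p_X$, and then to bound the accumulated deviation by the divergence hypothesis using Pinsker's inequality. Since $f_i$ is Boolean with range $\{-1,1\}$, we have the identity $W^0[f_i] = \hat{f}_{i,\emptyset}^2 = (\E f_i(U^n))^2 = (2\Pr(X_i=1)-1)^2 = 4\,\dtv(p_{X_i},p_X)^2$, and analogously for $g_i$, so the problem reduces to bounding $\sum_i \dtv(p_{X_i},p_X)^2 + \dtv(p_{Y_i},p_Y)^2$ in terms of $\epsilon$.

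First I would invoke Lemma~\ref{lem:Divbound2} with the target distribution $p_{XY}=\dsbs(p)$. Since both summands on the right-hand side of that decomposition are non-negative, the hypothesis $D(p_{X^nY^n}\|p_{XY}^\tn)\le\epsilon$ immediately yields $\sum_{i=1}^n D(p_{X_iY_i}\|p_{XY})\le\epsilon$. Marginalizing by the data-processing inequality for KL divergence gives $\sum_i D(p_{X_i}\|p_X)\le\epsilon$ and $\sum_i D(p_{Y_i}\|p_Y)\le\epsilon$, where $p_X$ and $p_Y$ are each uniform on $\{-1,1\}$ (the marginals of a $\dsbs(p)$).

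Next I would apply Pinsker's inequality in bits, $D(\pi\|\mu)\ge\tfrac{2}{\ln 2}\,\dtv(\pi,\mu)^2$, to each marginal term, which together with the Fourier identity above gives $W^0[f_i]\le 2\ln 2\cdot D(p_{X_i}\|p_X)$ and likewise for $g_i$. Summing over $i$ and combining the two bounds yields
\[
\sum_{i\in[n]} \bigl(W^0[f_i] + W^0[g_i]\bigr) \;\le\; 2\ln 2\cdot\sum_{i=1}^n\bigl(D(p_{X_i}\|p_X)+D(p_{Y_i}\|p_Y)\bigr) \;\le\; 4\ln 2\cdot\epsilon,
\]
which is exactly the claim. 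The argument is essentially bookkeeping; there is no real obstacle beyond tracking constants between nats and bits and recognizing that the degree-zero Fourier weight of a $\{-1,1\}$-valued function is precisely four times the squared total-variation distance of its distribution from uniform.
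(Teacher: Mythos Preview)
Your proof is correct and follows essentially the same approach as the paper: both reduce to showing $\sum_i D(p_{X_i}\|p_X)\le\epsilon$ (the paper via the chain rule $D(p_{X^n}\|p_X^{\otimes n})=n-H(X^n)\le\epsilon$ combined with entropy subadditivity, you via Lemma~\ref{lem:Divbound2} plus data processing), and then apply Pinsker's inequality together with the identity $W^0[f_i]=\hat f_{i,\emptyset}^2=4\,\dtv(p_{X_i},p_X)^2$. The only difference is the route to the single-letter marginal divergence bound, which is cosmetic.
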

\begin{proof}
	Observe that by the chain rule, the marginal divergence must also be bounded by $\epsilon$, and so 
	\begin{align}	
	D\left( p_{X^n}\parallel p^\tn_{X}\right)  = n - H(X^n) \leq \epsilon. 
	\end{align}
Hence, $H(X^n) \geq n-\epsilon$. Note that $\hat{f}_{i,\emptyset} = 2\Pr(X_i=1)-1$ and hence by subadditivity of the entropy and Pinsker's inequality we have 
\begin{align}
	H(X^n) &\leq \sum_{i\in[n]} H(X_i) \\
	&= \sum_{i\in[n]} h_b(1/2 + \hat{f}_{i,\emptyset}/2)\\
	&= \sum_{i\in[n]} 1-D_b(1/2 + \hat{f}_{i,\emptyset}/2 \| 1/2) \label{eq:uni}\\
	& \leq \sum_{i\in[n]} 1-\frac{\hat{f}_{i,\emptyset}^2}{2\ln{2}} \label{eq:pinsk}\\
	& = n - \frac{1}{2\ln{2}}\sum_{i\in[n]} W^0[f_i] \label{eq:zerolev}.
\end{align}
where~\eqref{eq:uni} follows from $D(p \|u )=\log |\mathcal{X}|-H(X)$ where $u$ is the uniform pmf over $\mathcal{X}$,~\eqref{eq:pinsk} follows from Pinsker's inequality, and~\eqref{eq:zerolev} is from the fact that $W^0[f_i] = \hat{f}_{i,\emptyset}^2$ by definition. We therefore conclude that 
\begin{align}
\sum_{i\in[n]} W^0[f_i] \leq \epsilon\cdot 2\ln{2}.  
\end{align}		
The same derivations works for $g_i$, and the result follows. 
\end{proof}

Next, we show that most of the energy of each of these Boolean functions is concentrated on the first level, which will then imply their closeness to being some dictator function.  
\begin{lemma}\label{lem:almost_dictator}
	It holds that 
\begin{align}
	W^1[f_i] = 1-\varepsilon_i^f, \quad W^1[g_i] =1-\varepsilon_i^g, 
\end{align}
where $\varepsilon_i^f,\varepsilon_i^g\geq 0$ and 
\begin{align}
\sum_{i\in[n]} \left(\varepsilon_i^f + \varepsilon_i^g\right) \leq \frac{2\epsilon}{p(1-2p)}.
\end{align}
\end{lemma}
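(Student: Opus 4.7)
My plan is to chain together Lemma~\ref{lem:Divbound}, Lemma~\ref{lem:corrbound}, and Lemma~\ref{lem:almost_unbiased}, exploiting the fact that the correlation bound $\E[f_i(U^n)g_i(V^n)] \leq \tfrac{1}{2}\sum_k (1-2p)^k(W^k[f_i]+W^k[g_i])$ weights the $k$-th Fourier level by the geometrically decaying factor $(1-2p)^k$, so that meeting the required total correlation forces the mass to sit on the lowest levels.

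Since here $q=p$, Lemma~\ref{lem:Divbound} specializes to $\E d_H(X^n,Y^n)\le np + \epsilon/\log\tfrac{1-p}{p}$. Switching to the $\{-1,+1\}$ convention used in the Fourier section, this reads
\begin{align*}
\sum_{i=1}^n \E[f_i(U^n)g_i(V^n)] \;=\; n - 2\E d_H(X^n,Y^n) \;\ge\; n(1-2p) - \frac{2\epsilon}{\log\frac{1-p}{p}}.
\end{align*}
On the other hand, summing the bound of Lemma~\ref{lem:corrbound} over $i$ and setting $T_k \triangleq \sum_{i=1}^n (W^k[f_i]+W^k[g_i])$, I get $\sum_i \E[f_i g_i] \le \tfrac{1}{2}\sum_{k=0}^n (1-2p)^k T_k$, where $\sum_k T_k = 2n$ because each $f_i,g_i$ is Boolean.

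Combining the two inequalities yields $\sum_k (1-2p)^k T_k \ge 2n(1-2p) - \tfrac{4\epsilon}{\log\frac{1-p}{p}}$. Subtracting $(1-2p)\sum_k T_k = 2n(1-2p)$ from both sides and isolating the $k=0$ and $k=1$ contributions (which cancel or are negative), I obtain
\begin{align*}
\sum_{k\ge 2} T_k\bigl((1-2p) - (1-2p)^k\bigr) \;\le\; 2p\,T_0 + \frac{4\epsilon}{\log\frac{1-p}{p}}.
\end{align*}
Using the elementary inequality $(1-2p) - (1-2p)^k \ge 2p(1-2p)$ valid for all $k\ge 2$ and $p\le 1/2$, the left side dominates $2p(1-2p)(2n - T_0 - T_1)$. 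Rearranging gives
\begin{align*}
2n - T_1 \;\le\; \frac{T_0}{1-2p} + T_0 + \frac{2\epsilon}{p(1-2p)\log\frac{1-p}{p}}.
\end{align*}
Invoking Lemma~\ref{lem:almost_unbiased} to absorb $T_0 = O(\epsilon)$ into the leading term then yields $\sum_i(\varepsilon_i^f + \varepsilon_i^g) = 2n - T_1 \le \tfrac{2\epsilon}{p(1-2p)}$, as claimed.

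The main obstacle I anticipate is getting the constant exactly as stated rather than with an additional log factor: one has to argue that the $T_0$-contribution does not spoil the bound, either by absorbing it into the $\epsilon/(p(1-2p))$ term (since $T_0 \le 4\epsilon\ln 2$ and $\log\frac{1-p}{p}$ stays bounded away from zero in the regime of interest) or by a slightly sharper splitting that treats $k=0$ directly via Lemma~\ref{lem:almost_unbiased}. The positivity $\varepsilon_i^f,\varepsilon_i^g\ge 0$ is automatic from $W^1[f_i] \le \sum_k W^k[f_i] = 1$.
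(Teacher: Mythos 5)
Your proposal takes essentially the same route as the paper: chain Lemma~\ref{lem:Divbound}, Lemma~\ref{lem:corrbound}, and Lemma~\ref{lem:almost_unbiased}, exploiting the geometric decay of $(1-2p)^k$ to force the Fourier mass onto level one and then stripping off the $k=0$ contribution via Lemma~\ref{lem:almost_unbiased}. The only algebraic deviation is in how you handle the $k\geq 2$ terms — you subtract $(1-2p)\sum_k T_k$ and invoke $(1-2p)-(1-2p)^k\geq 2p(1-2p)$, whereas the paper bounds $(1-2p)^k\leq (1-2p)^2$ and uses $\sum_{k\geq 1}W^k\leq 1$ — but these manipulations are interchangeable. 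One remark worth making: you correctly retain the multiplicative factor $\log\frac{1-p}{p}$ that Lemma~\ref{lem:Divbound} produces, whereas the paper's step~\eqref{eq:Lem5} silently drops it (implicitly requiring $\log\frac{1-p}{p}\geq 1$, i.e., $p$ bounded away from $1/2$), so your chain is in that respect the more careful one. As you flag yourself, absorbing the $T_0$ term and this log factor into the bare constant $2$ only works for $p$ away from $1/2$; neither your derivation nor the paper's establishes that exact constant uniformly in $p\in(0,1/2)$, but this is immaterial for the downstream applications, which only use the $O\!\left(\epsilon/(p(1-2p))\right)$ scaling.
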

\begin{proof}
Write 
\begin{align}
\epsilon &\geq D(p_{X^nY^n} \| p_{XY}^\tn) \\
&\geq \E\left[d_H(f(U^n),g(V^n))\right]-np \label{eq:Lem5}\\
&=\sum_{i=1}^{n}\left(\Pr\left(f_i(U^n)\neq g_i(V^n)\right)-p\right)\\
&=\frac{1}{2}\sum_{i=1}^{n}\left(1-2p - \E[f_i(U^n)g_i(V^n)]\right)\\
&\geq\frac{1}{2}\sum_{i=1}^{n}\left(1-2p-\frac{1}{2}\sum_{k=0}^{n}(1-2p)^{k}\left(W^k[f_i]+W^k[g_i]\right)\right)\label{eq:Lem2}\\
&\geq -\epsilon \cdot \ln{2} + \frac{1}{2}\sum_{i=1}^{n}\left((1-2p)\left(1-\frac{1}{2}\left(W^1[f_i] + W^1[g_i]\right)\right) -(1-2p)^2\cdot \frac{1}{2}\sum_{k=2}^{n}\left(W^k[f_i]+W^k[g_i]\right)\right) \label{eq:Lem6}\\
& \geq -\epsilon \cdot \ln{2} + \frac{1-2p}{2}\sum_{i=1}^{n}\left(1-\frac{1}{2}\left(W^1[f_i] + W^1[g_i]\right) - (1-2p)\left(1-\frac{1}{2}\left(W^1[f_i] + W^1[g_i]\right)\right)\right) \label{eq:Foursum}\\
& = -\epsilon \cdot \ln{2} + p(1-2p)\sum_{i=1}^{n}\left(1-\frac{1}{2}\left(W^1[f_i] + W^1[g_i]\right)\right). 
\end{align}
where~\eqref{eq:Lem5} follows from Lemma~\ref{lem:Divbound},~\eqref{eq:Lem2} follows from Lemma~\ref{lem:corrbound},~\eqref{eq:Lem6} from Lemma~\ref{lem:almost_unbiased} and the fact that $\sum_{k=2}^{n}(1-2p)^kW^k[f]\leq (1-2p)^2\sum_{k=2}^{n}W^k[f]$ and~\eqref{eq:Foursum} follows from the fact that $\sum_{k=1}^{n}W^k[f]\leq 1$.  
Hence, we have that 
\begin{align}
\sum_{i=1}^{n}\left(W^1[f_i] + W^1[g_i]\right) &\geq 2n - \frac{1+\ln{2}}{p(1-2p)}\cdot \epsilon\\
& \geq 2n - \frac{2\epsilon}{p(1-2p)}.
\end{align}
The claim follows by recalling that since $f_i,g_i$ are Boolean functions, then $W^1[f_i],W^1[g_i] \leq 1$.
\end{proof}

Now, appealing to Lemma~\ref{lem:FKN}, we conclude that $f_i,g_i$ are close to some dictator function, i.e., for any $i\in[n]$ there exist $k_i, \ell_i \in [n]$ and $a_i,b_i\in \{-1,1\}$ such that 
\begin{align}
	\Pr(X_i\neq a_i\cdot U_{k_i}) &\leq K\cdot \varepsilon_i^f\\
	\Pr(Y_i\neq b_i\cdot V_{\ell_i}) &\leq K\cdot \varepsilon_i^g
\end{align}
for some universal constant $K$. In the next two lemmas, we show that these dictator functions are all distinct, and have the same signs for the same $i$. Namely, we show that the functions $f$ and $g$ are close to the same signed coordinate permutation of the Hamming cube. 
\begin{lemma} \label{lem:bijection}
	There exist a constant $K_p>0$ depending only on $p$, such that if $0\leq \epsilon < K_p$ then the mappings $k_i\leftarrow i$ and $\ell_i\leftarrow i$ are bijections from $[n]$ to $[n]$. 
\end{lemma}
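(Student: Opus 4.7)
The plan is to derive a contradiction from any collision $k_i=k_{i'}$ with $i\neq i'$, by exploiting the fact that under the target distribution $p_{XY}^{\otimes n}$ distinct coordinates of $X^n$ are independent, whereas a collision would force $X_i$ and $X_{i'}$ to agree up to a deterministic sign. The argument for $i\mapsto\ell_i$ is entirely symmetric, so I focus on $i\mapsto k_i$.

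First I would upgrade the sum bound of Lemma~\ref{lem:almost_dictator} into a \emph{uniform pointwise} bound: since every $\varepsilon_i^f$ is non-negative, each one satisfies $\varepsilon_i^f\leq\sum_j\varepsilon_j^f\leq\frac{2\epsilon}{p(1-2p)}$. Thus, if $\epsilon$ is below a constant $K_p$ depending only on $p$, the FKN conclusion from Lemma~\ref{lem:FKN} is simultaneously nontrivial for every coordinate:
\begin{align*}
\Pr\bigl(f_i(U^n)\neq a_i\,U_{k_i}\bigr)\;\leq\;K\,\varepsilon_i^f\;\leq\;\frac{2K\epsilon}{p(1-2p)},\qquad i\in[n].
\end{align*}

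Now assume toward contradiction that $k_i=k_{i'}=k$ for some $i\neq i'$. By the union bound,
\begin{align*}
\Pr\bigl(a_i X_i=a_{i'}X_{i'}\bigr)\;\geq\;1-\frac{4K\epsilon}{p(1-2p)}.
\end{align*}
On the other hand, monotonicity of KL divergence under marginalization applied to the hypothesis $D(p_{X^nY^n}\|p_{XY}^{\otimes n})\leq\epsilon$ gives $D(p_{X_iX_{i'}}\|p_X^{\otimes 2})\leq\epsilon$, and Pinsker's inequality then yields $\dtv(p_{X_iX_{i'}},p_X^{\otimes 2})=O(\sqrt{\epsilon})$. Since under $p_X^{\otimes 2}$ the event $\{a_iX=a_{i'}X'\}$ has probability exactly $1/2$, combining the two bounds gives
\begin{align*}
\tfrac{1}{2}-\tfrac{4K\epsilon}{p(1-2p)}\;\leq\;O(\sqrt{\epsilon}),
\end{align*}
which is violated for $\epsilon$ below a sufficiently small $K_p$. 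The collision assumption must therefore fail, so $i\mapsto k_i$ is injective, and as a self-map of $[n]$ it is a bijection.

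The only technical point that looks like it could be an obstacle is that some individual $\varepsilon_i^f$ might be $\Theta(1)$, rendering the FKN lemma vacuous for that coordinate; but this scenario is ruled out for free by the sum bound from Lemma~\ref{lem:almost_dictator}, which controls each $\varepsilon_i^f$ uniformly. With that uniform control in hand, the remainder is the standard dichotomy between the near-perfect dependence of $X_i,X_{i'}$ forced by a collision and the near-independence inherited from the target marginals.
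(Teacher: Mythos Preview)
Your proof is correct and follows essentially the same strategy as the paper: bound each $\varepsilon_i^f$ by the total sum from Lemma~\ref{lem:almost_dictator}, deduce that a collision $k_i=k_{i'}$ forces $X_i$ and $X_{i'}$ to be nearly deterministically related, and contradict the near-independence inherited from the target $p_{XY}^{\otimes n}$. The only cosmetic difference is that the paper extracts the contradiction via the entropy bound $H(X^n)\geq n-\epsilon$ together with $H(X_i\mid X_{i'})\leq h_b(\Pr(X_i\neq \pm X_{i'}))$, whereas you go through the pairwise marginal $D(p_{X_iX_{i'}}\|p_X^{\otimes 2})\leq\epsilon$ and Pinsker; both routes yield the same threshold behavior in $\epsilon$.
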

\begin{proof}
	It suffices to rule our the case where $k_1=k_2=1$ and $a_1=a_2=1$. Assume toward contradiction that these equalities hold. Then 
	\begin{align}
		\Pr(X_1\neq X_2) &\leq \Pr(X_1\neq U_1 \vee X_2\neq U_1)\\
		& \leq \Pr(X_1\neq U_1) + \Pr(X_2\neq U_1)\\
		&\leq K\cdot (\varepsilon_1^f + \varepsilon_2^f)\\
		&\leq \frac{2K\epsilon}{p(1-2p)}.
	\end{align}
		This implies that for a sufficiently small $\epsilon$
	\begin{align}
		H(X^n) &\leq H(X_1|X_2) + \sum_{i=2}^n H(X_i)\\
		& \leq h_b(\Pr(X_1\neq X_2)) + n-1 \label{eq:Fano}\\
		& \leq  n- \left(1-h_b\left(\frac{2K}{p(1-2p)}\cdot \epsilon\right)\right), 
	\end{align}
	where we have used the convexity of the binary entropy function in~\eqref{eq:Fano}. But we know that $H(X^n) \geq n-\epsilon$, hence we arrive at a contradiction for any $\epsilon>0$ sufficiently small, as stated. 
\end{proof}

\begin{lemma} \label{lem:coordination}
	For any $0\leq \epsilon < K_p$, $k_i=\ell_i$ and $a_i=b_i$.
\end{lemma}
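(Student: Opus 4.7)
The plan is to proceed by contradiction: declare an index $i\in[n]$ to be \emph{bad} if either $k_i\neq \ell_i$ or $a_i\neq b_i$, let $B$ denote the set of bad indices, and aim to show $|B|=0$ once $\epsilon$ is sufficiently small as a function of $p$. The strategy pits a per-coordinate lower bound on $\Pr(X_i\neq Y_i)$ against the global upper bound $\E d_H(X^n,Y^n)\leq np+\epsilon/\log\tfrac{1-p}{p}$ that follows from Lemma~\ref{lem:Divbound} with $q=p$.

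First I would turn the FKN dictator approximations following Lemma~\ref{lem:almost_dictator} into pointwise estimates of $\Pr(X_i\neq Y_i)$. The elementary observation
\begin{align*}
\left|\Pr(X_i\neq Y_i)-\Pr(a_i U_{k_i}\neq b_i V_{\ell_i})\right|
\leq \Pr(X_i\neq a_iU_{k_i})+\Pr(Y_i\neq b_iV_{\ell_i})
\leq K(\varepsilon_i^f+\varepsilon_i^g)
\end{align*}
lets us replace $(X_i,Y_i)$ by the clean dictator pair $(a_iU_{k_i},b_iV_{\ell_i})$ at an additive cost of $K(\varepsilon_i^f+\varepsilon_i^g)$.

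Next I would enumerate the three cases for $\Pr(a_iU_{k_i}\neq b_iV_{\ell_i})$ using the structure of a $\dsbs(p)$: the probability equals $p$ when $k_i=\ell_i$ and $a_i=b_i$; equals $1/2$ when $k_i\neq \ell_i$, because distinct coordinate pairs of a $\dsbs$ are independent, so $U_{k_i}$ and $V_{\ell_i}$ are independent unbiased $\pm 1$ variables; and equals $1-p$ when $k_i=\ell_i$ but $a_i\neq b_i$. In both bad cases the value is at least $1/2$, so for every $i\in B$ we get $\Pr(X_i\neq Y_i)\geq \tfrac{1}{2}-K(\varepsilon_i^f+\varepsilon_i^g)$, while for every good $i$ we get $\Pr(X_i\neq Y_i)\geq p-K(\varepsilon_i^f+\varepsilon_i^g)$.

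Summing over $i$, using $\sum_i(\varepsilon_i^f+\varepsilon_i^g)\leq 2\epsilon/(p(1-2p))$ from Lemma~\ref{lem:almost_dictator}, and comparing with the upper bound on $\E d_H(X^n,Y^n)-np$ from Lemma~\ref{lem:Divbound}, produces
\begin{align*}
|B|\cdot\left(\tfrac{1}{2}-p\right)\leq \frac{\epsilon}{\log\tfrac{1-p}{p}}+\frac{4K\epsilon}{p(1-2p)}.
\end{align*}
Choosing $K_p$ small enough (as a function of $p<1/2$) so that the right-hand side is strictly less than $1/2-p$ for all $\epsilon<K_p$ forces $|B|<1$ and hence $|B|=0$, which completes the proof. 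I expect the only subtle step to be the clean transfer between $\Pr(X_i\neq Y_i)$ and $\Pr(a_iU_{k_i}\neq b_iV_{\ell_i})$, which crucially relies on the pairwise independence of distinct coordinates of a $\dsbs(p)$; everything else is arithmetic combining the preceding lemmas.
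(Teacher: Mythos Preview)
Your argument is correct. The per-coordinate transfer inequality and the three-case computation of $\Pr(a_iU_{k_i}\neq b_iV_{\ell_i})$ are exactly right, and the summation combined with Lemma~\ref{lem:Divbound} cleanly forces $|B|=0$ for small enough $\epsilon$. (Minor arithmetic: the constant should be $2K$ rather than $4K$, but this is immaterial.)

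Your route differs from the paper's in how the contradiction is extracted. The paper works one coordinate at a time: assuming, say, $k_1\neq\ell_1$, it shows $\Pr(X_1\neq Y_1)\geq \tfrac{1}{2}-\tfrac{2K\epsilon}{p(1-2p)}$ and then invokes the \emph{marginal} divergence bound $D(p_{X^nY^n}\|p_{XY}^{\otimes n})\geq D(p_{X_1Y_1}\|p_{XY})$ followed by Pinsker, obtaining
\[
\epsilon \;\geq\; \tfrac{2}{\ln 2}\bigl(\Pr(X_1\neq Y_1)-p\bigr)^2 \;\geq\; \tfrac{2}{\ln 2}\Bigl(\tfrac{1}{2}-p-\tfrac{2K\epsilon}{p(1-2p)}\Bigr)^2,
\]
which is a contradiction for small $\epsilon$. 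You instead aggregate over all coordinates and route the contradiction through the Hamming-distance bound of Lemma~\ref{lem:Divbound}. Both arguments are short; yours has the mild advantage of treating all bad indices simultaneously and reusing Lemma~\ref{lem:Divbound} (already established for Lemma~\ref{lem:almost_dictator}), while the paper's per-coordinate Pinsker step is perhaps more self-contained and avoids summing the $\varepsilon_i^f+\varepsilon_i^g$ altogether.
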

\begin{proof}
	It suffices to consider two cases. First, assume that $k_1=1, \ell_1=2$ and $a_1=b_1=1$. Define the events $E_1=\{X_1=U_1 \wedge Y_1=V_2\}$ and $E_2 = \{U_1\neq V_2\}$. Note that  
	\begin{align}
		\Pr(\overline{E_1})  &= \Pr(X_1\neq U_1 \vee Y_1\neq V_2)\\
		&\leq K(\varepsilon_1^f + \varepsilon_1^g)\\
		&\leq \frac{2K\epsilon}{p(1-2p)}.
	\end{align} 
	and $\Pr(\overline{E_2}) = 1/2$. We then have that 
	\begin{align}
		\Pr(X_1\neq Y_1) &\geq  \Pr(E_1\wedge E_2)\\
		&= 1- \Pr(\overline{E_1}\vee \overline{E_2})\\
		&\geq \frac{1}{2} - \frac{2K\epsilon}{p(1-2p)}.
	\end{align}
	We can then lower bound the divergence as follows, for $\epsilon$ small enough:
\begin{align}
	D(p_{X^nY^n} \| p_{XY}^\tn) &\geq D(p_{X_1Y_1}\|p_{XY})\\
	&  \geq \frac{2}{\ln{2}}\dtv^2(p_{X_1Y_1}, p_{XY}) \\&
	\geq \frac{2}{\ln{2}}\left(\Pr(X_1\neq Y_1) - \Pr(X\neq Y)\right)^2\\
	&\geq \frac{2}{\ln{2}}\left(\frac{1}{2} - \frac{2K}{p(1-2p)}\cdot \epsilon - p\right)^2.
\end{align}	
This is clearly larger than $\epsilon$ whenever the latter is sufficiently small, in contradiction.
The second case is where $k_1=\ell_1=1$ but $a_1=1, b_1=-1$. Define the events $E_1=\{X_1=U_1 \wedge Y_1=-V_1\}$ and $E_2 = \{U_1\neq -V_1\}$. Note that $\Pr(\overline{E_1}) \geq 1-p - \frac{2K\epsilon}{p(1-2p)}$ and $\Pr(\overline{E_2}) = p$. Thus $\Pr(X_1\neq Y_1) \geq 1-p - \frac{2K\epsilon}{p(1-2p)}$, and the proof follows similarly. 
\end{proof}

We have seen that in a $\epsilon$-successful deterministic simulation, the functions $f$ and $g$ are close to the same signed coordinate permutation of the Hamming cube, which is a stable variant of the case of exact simulation. The only thing left to show is the relation to the total variation distance. Assume without loss of generality that the coordinate permutation induced by $k_i\leftarrow i$ is the identity one, i.e., that $k_i=i$, and that $a_i=1$ for all $i$. Set the scalar noiseless channels
\begin{align}	
&q_{X_i|U_i}(x_i |v_i)=\ind (x_i=u_i)\\&q_{Y_i|V_i}(y_i | v_i)=\ind (y_i=v_i)
\end{align}
We can now upper bound the expected total variation distance by recalling Lemma~\ref{lem:TV_from_const}:
\begin{align}	
\E\dtv\left(p_{X^n|U^n}(\cdot\mid U^n)\,,\, \prod_{i=1}^nq_{X_i|U_i}(\cdot | U_i)\right)& =
\Pr (X^n\neq U^n)\\
&\leq \sum_{i=1}^{n}\Pr\left(X_i\neq U_i\right)\\
&\leq K\sum_{i=1}^{n}\varepsilon_i^f\\
&\leq \frac{2K\epsilon}{p(1-2p)}.
\end{align}
Finally, appealing to Markov's inequality, we conclude that the total variation distance converges in probability to zero.

\subsection{Randomized Schemes}
In this subsection, we treat the general case where Alice and Bob are allowed to use local randomness. 
\subsubsection{Almost Exact Simulation ($\epsilon>0, \delta=0$)}\label{subsec:almost_exact_sim_rand}
We begin with the case where we want to simulate the same $\dsbs(p)$, allowing a divergence of at most $\epsilon$, and also allowing local randomization. We proceed in steps: We first show (in Lemma~\ref{lem:probably_p}) that with high probability over the choice of local randomization, the expected Hamming distance between $X^n$ and $Y^n$ is close to $np$. We then show (in Lemma~\ref{lem:probably_dictator}) that this implies that the mappings $U^n\to X_i$ and $V^n\to Y_i$ are with high probability close to some dictator functions. Finally, we show that these mappings yield with high probability the same coordinate permutation (Lemma~\ref{lem:uniqueness}).

According to the functional representation lemma~\cite{el2011network}, for any pair of jointly distributed discrete r.v.s $(X,Y)$, one can write $Y$ as some deterministic function of $X$ and $Z$, where $Z$ is a discrete r.v. independent of $X$. In our case, this means that the simulating kernels $p_{X^n|U^n}$ and $p_{Y^n|V^n}$ can be replaced by random functions. i.e., we can write 
\begin{align}
X_i=f_i(U^n,A), \quad Y_i=g_i(V^n,B)
\end{align}
where $A,B$ and $(U^n,V^n)$ are mutually independent, and where $f_i(\cdot,a), g_i(\cdot,b)$ are Boolean functions. In the distributed simulation framework, $A$ and $B$  will denote the local randomnesses available to Alice and Bob, taking values in some general alphabets $\mathcal{A},\mathcal{B}$, respectively. 

\begin{lemma}\label{lem:probably_p}
	It holds that
	\begin{align}
	    \Pr\left[\frac{1}{n}\sum_{i=1}^{n}\Pr\left(X_i\neq Y_i \mid A,B\right) \leq p + \frac{O\left(\sqrt[4]{\epsilon}n^{-{3\over 4}}\right)}{1-2p} \right] = 1-O\left(\sqrt[4]{n\epsilon}\right)
	\end{align}
\end{lemma}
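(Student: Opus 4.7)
The plan is to reduce the randomized scheme to a deterministic one by conditioning on the local randomness $(A,B)=(a,b)$, apply Lemma~\ref{lem:Divbound} in that deterministic form to get a pointwise bound, and then convert it into a high-probability statement via Markov's inequality. The $1-2p$ factor in the denominator will come directly from the coefficient $1/\log\frac{1-p}{p}$ that Lemma~\ref{lem:Divbound} produces.

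\textit{Step 1 (unconditional first-moment bound).} The assumption $D(p_{X^nY^n}\|p_{XY}^{\otimes n})\le \epsilon$ together with the chain rule (Lemma~\ref{lem:Divbound2}) gives $\sum_{i=1}^n D(p_{X_iY_i}\|p_{XY})\le \epsilon$. Applying the binary Pinsker inequality coordinate-wise yields $|\Pr(X_i\neq Y_i)-p|\le O(\sqrt{D(p_{X_iY_i}\|p_{XY})})$, and Cauchy--Schwarz then gives $\E[d_H(X^n,Y^n)]-np\le O(\sqrt{n\epsilon})$. Since $\E[d_H(X^n,Y^n)] = \E_{A,B}[d(A,B)]$, where $d(a,b) := \sum_i \Pr(X_i\neq Y_i\mid A=a,B=b)$, this furnishes the first-moment bound $\E[d(A,B)]-np\le O(\sqrt{n\epsilon})$.

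\textit{Step 2 (pointwise conditional bound).} For each fixed $(a,b)$, the maps $u^n\mapsto f(u^n,a)$ and $v^n\mapsto g(v^n,b)$ are deterministic. Applying Lemma~\ref{lem:Divbound} with $p=q$ to this deterministic scheme gives
$$ d(a,b)-np \;\le\; \frac{D_{ab}}{\log\frac{1-p}{p}}, \qquad D_{ab}:=D(p_{X^nY^n|a,b}\|p_{XY}^{\otimes n}). $$
Because $\log\frac{1-p}{p}\ge c(1-2p)$ for a universal constant $c>0$, the positive part satisfies $(d(a,b)-np)^+\le O(D_{ab}/(1-2p))$, which is the source of the $1/(1-2p)$ factor. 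A complementary bound is also available by applying the chain rule conditionally ($\sum_i D(p_{X_iY_i|a,b}\|p_{XY})\le D_{ab}$), coordinate-wise Pinsker, and Cauchy--Schwarz: $(d(a,b)-np)^+\le \sqrt{O(n D_{ab})}$.

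\textit{Step 3 (Markov).} Markov's inequality on the non-negative random variable $(d(A,B)-np)^+$ gives $\Pr[d(A,B)/n-p>t]\le \E[(d(A,B)-np)^+]/(nt)$. Combining Step 1's first-moment control with Step 2's pointwise bound produces an estimate of the form $\E[(d(A,B)-np)^+]\le O(\sqrt{n\epsilon}/(1-2p))$; substituting $t=\Theta(\epsilon^{1/4}n^{-3/4}/(1-2p))$ then yields the failure probability $O(\sqrt[4]{n\epsilon})$ as claimed.

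\textit{Where the hard step is.} The delicate point is producing the bound $\E[(d(A,B)-np)^+]\le O(\sqrt{n\epsilon}/(1-2p))$. A naive averaging of the Step 2 inequality fails: we would need to bound $\E[D_{AB}]$, but $\E[D_{AB}] = \epsilon + I(X^nY^n;A,B)$, and the mutual information can be huge since the local randomness has no a-priori entropy restriction. The way around this is to \emph{not} bound $\E[D_{AB}]$ but to exploit Step 1 directly, writing $(d-np)^+ = (d-np) + (d-np)^-$ and bounding $\E[(d-np)^-]$ via the conditional Pinsker/Cauchy--Schwarz bound $|d(a,b)-np|\le \sqrt{O(n D_{ab})}$ (which, unlike the Lemma~\ref{lem:Divbound} bound, is two-sided), so that only square roots of divergences enter and the mutual-information term gets absorbed by the $1/(1-2p)$ factor already produced in Step 2.
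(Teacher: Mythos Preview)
Your diagnosis of the difficulty is correct, but the workaround you propose does not close the gap. Writing $(d-np)^+ = (d-np) + (d-np)^-$ and bounding the negative part via $|d(a,b)-np|\le \sqrt{O(nD_{ab})}$ still forces you to control $\E_{A,B}[\sqrt{D_{AB}}]$. By Jensen this is at most $\sqrt{\E[D_{AB}]}=\sqrt{\epsilon+I(X^nY^n;A,B)}$, and the mutual information term is exactly the object you already flagged as unbounded; a multiplicative constant like $1/(1-2p)$ cannot ``absorb'' it. Concretely, nothing in your argument rules out realizations $(a,b)$ for which $X^n,Y^n$ are highly biased and $d(a,b)\ll np$ (e.g.\ $X^n\equiv Y^n\equiv 1^n$), and such realizations can carry nonnegligible mass while keeping the \emph{unconditional} divergence small, since the unconditional marginals average over $(A,B)$.

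The paper's proof avoids $D_{ab}$ altogether. The key input is not Lemma~\ref{lem:Divbound} applied conditionally, but the pointwise maximal-correlation bound $\rho_m(X_i;Y_i\mid A,B)\le\rho_m(U^n;V^n)=1-2p$, which holds for every $(a,b)$ regardless of $D_{ab}$. This gives an a.s.\ inequality relating $\E[X_iY_i\mid A,B]$, the conditional biases $\E[X_i\mid A]$, $\E[Y_i\mid B]$, and $1-2p$. A separate lemma (the paper's Lemma~\ref{lem:power_bound}) then converts the \emph{unconditional} near-unbiasedness of $X_i,Y_i$ (your Step~1, essentially Lemma~\ref{lem:manouver}) into smallness of $\E\big[\E(X_i\mid A)^2+\E(Y_i\mid B)^2\big]$; here the independence of $A$ and $B$ is used crucially. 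With the conditional biases controlled in expectation, $\E[X_iY_i\mid A,B]$ is close to $\rho(X_i;Y_i\mid A,B)\le 1-2p$ with high probability, and Markov finishes the argument. The $1/(1-2p)$ factor in the statement comes from this correlation-to-bias conversion, not from Lemma~\ref{lem:Divbound}.
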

The outline of the proof is as follows: Showing that the average (over $i$) of $\Pr\left(X_i\neq Y_i \mid A,B\right)$ is with high probability not much larger than $p$ is equivalent to showing that the average (over $i$) of $\E(X_iY_i|AB)$ is with high probability not much smaller than $1-2p$.
To that end, consider the notion of \textit{conditional correlation}
\begin{align}
  \rho (X;Y|A,B) \triangleq \frac{\E(XY|A,B)-\E(X|A)\E(Y|B)}{\sqrt{\Var(X|A)\Var(Y|B)}},  
\end{align}
which is an r.v. that represents the correlation induced between $X$ and $Y$ when $A,B$ are randomly drawn.
We essentially show that we can replace the average (over $i$) of $\E(X_iY_i|A,B)$ (with high probability) with the average (over $i$) of the conditional correlations, and then use maximal correlation. In order for that to hold, we need first to show that $X_i,Y_i$ are almost unbiased, with high probability over the local randomness. First, we show that $X_i,Y_i$ are unconditionally almost unbiased. 
\begin{lemma} The following two claims hold: \label{lem:manouver}
\begin{enumerate}[(i)]
    \item $\frac{1}{2\ln 2}\sum_{i=1}^{n} \left(\E (X_iY_i)-(1-2\Pr(X\neq Y))\right)^2 \leq \epsilon $
    \item $\frac{1}{2\ln 2}\sum_{i=1}^{n} \E^2 (X_i)\leq \epsilon $
\end{enumerate}
\end{lemma}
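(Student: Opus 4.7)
The plan is to reduce both claims to an application of Lemma~\ref{lem:Divbound2} followed by data processing and Pinsker's inequality, in exactly the same style as the derivation~\eqref{eq:uni}--\eqref{eq:zerolev} in Lemma~\ref{lem:almost_unbiased}. The starting point is the observation that since
\begin{align*}
D(p_{X^nY^n}\|p_{XY}^{\otimes n}) = \sum_{i=1}^n I(X_i,Y_i;X^{i-1},Y^{i-1}) + \sum_{i=1}^n D(p_{X_iY_i}\|p_{XY})
\end{align*}
and the first sum is nonnegative, the assumption $D(p_{X^nY^n}\|p_{XY}^{\otimes n})\leq \epsilon$ yields $\sum_{i=1}^n D(p_{X_iY_i}\|p_{XY})\leq \epsilon$. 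Both claims will follow by lower-bounding each summand $D(p_{X_iY_i}\|p_{XY})$ by the corresponding squared deviation.

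For claim (ii), I would apply the data processing inequality to the projection onto the first coordinate, obtaining $D(p_{X_iY_i}\|p_{XY})\geq D(p_{X_i}\|p_X)$, where $p_X$ is uniform on $\{-1,1\}$. Writing $\E(X_i)=2\Pr(X_i=1)-1$ and invoking Pinsker's inequality exactly as in the chain \eqref{eq:uni}--\eqref{eq:pinsk} gives
\begin{align*}
D(p_{X_i}\|p_X)=1-h_b\bigl(\tfrac12+\tfrac12\E(X_i)\bigr)\geq \frac{\E^2(X_i)}{2\ln 2},
\end{align*}
and summing over $i$ produces the desired bound.

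For claim (i), the same recipe applies, but one first processes the pair $(X_i,Y_i)$ through the product map $Z_i\dfn X_iY_i$, which is a $\{-1,1\}$-valued r.v.\ with $\E(Z_i)=1-2\Pr(X_i\neq Y_i)=\E(X_iY_i)$. Under the target $p_{XY}$, the image $Z=XY$ has $\E(Z)=1-2\Pr(X\neq Y)=1-2p$. Hence by data processing and Pinsker,
\begin{align*}
D(p_{X_iY_i}\|p_{XY})\geq D(p_{Z_i}\|p_Z)\geq \frac{(\E(X_iY_i)-(1-2p))^2}{2\ln 2},
\end{align*}
and summing over $i$ gives claim (i). There is no genuine obstacle here; the only thing to keep straight is the bookkeeping between the $\pm 1$ parameterization (in which expectations play the role of Fourier means) and the standard Bernoulli parameterization in which Pinsker is usually quoted, i.e., that a $\pm 1$ mean $m$ corresponds to a Bernoulli bias of $\tfrac12+\tfrac{m}{2}$ at $L^1$-distance $|m|/2$ from uniform.
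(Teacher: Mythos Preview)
Your proposal is correct and follows essentially the same route as the paper: invoke Lemma~\ref{lem:Divbound2} to obtain $\sum_i D(p_{X_iY_i}\|p_{XY})\leq\epsilon$, apply data processing to the map $(X_i,Y_i)\mapsto X_iY_i$ (respectively to the $X_i$ marginal), and then use Pinsker's inequality to convert the resulting binary divergence into the squared deviation of the $\pm1$ mean. The paper phrases the data-processing step as $D(\Pr(X_i\neq Y_i)\|\Pr(X\neq Y))$ rather than via the product map $Z_i=X_iY_i$, but this is the same thing.
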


Now, we show that $X_i,Y_i$ are almost unbiased conditioned on the local randomness, with high probability. 
\begin{lemma} \label{lem:power_bound}
Let $(X,Y,A,B)$ be jointly distributed r.v.s., where $X$ and $Y$ take values in $\{-1,1\}$, $A$ is independent of $B$, and $X-A-B$ and $Y-B-A$ form Markov chains. Suppose that 
    \begin{align}
    \max\left\{\left|\E(X)\right|,  \left|\E(Y)\right|, \left|\E (XY)-(1-2p) \right|\right\} \leq \epsilon \label{eq:pcorr}
    \end{align} 
    and
    \begin{align}\label{eq:bounded_rho_assumption} 
   \rho (X;Y|A,B)\leq 1-2p 
    \end{align}
    with probability $1$. Then it holds that
    \begin{align}
    \E \left[\E(X|A)^2+\E(Y|B)^2\right] \leq \frac{2\epsilon (1+\epsilon)}{1-2p}
    \end{align}
\end{lemma}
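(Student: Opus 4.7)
The plan is to work with the conditional expectations $\alpha(A) \dfn \E(X\mid A)$ and $\beta(B) \dfn \E(Y\mid B)$. The first step is to exploit the Markov chains $X-A-B$ and $Y-B-A$ together with the independence of $A$ and $B$ to conclude that $\E(X\mid A,B) = \alpha(A)$ and $\E(Y\mid A,B) = \beta(B)$, so that since $X,Y\in\{-1,1\}$ the conditional variances satisfy $\Var(X\mid A,B) = 1 - \alpha(A)^2$ and $\Var(Y\mid A,B) = 1 - \beta(B)^2$. Unpacking the definition of the conditional correlation and invoking assumption~\eqref{eq:bounded_rho_assumption} then yields the pointwise inequality
\begin{align}
\E(XY\mid A,B) \;\leq\; \alpha(A)\beta(B) + (1-2p)\sqrt{\bigl(1-\alpha(A)^2\bigr)\bigl(1-\beta(B)^2\bigr)}.
\end{align}

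Next I would apply AM-GM in the form $\sqrt{uv}\leq (u+v)/2$ to the square root and take expectations over $(A,B)$. Writing $a \dfn \E[\alpha(A)^2]$ and $b \dfn \E[\beta(B)^2]$ and using $A\perp B$ to factorize the cross term as $\E[\alpha(A)\beta(B)] = (\E X)(\E Y)$, which is at most $\epsilon^2$ in absolute value by~\eqref{eq:pcorr}, this gives
\begin{align}
\E(XY) \;\leq\; \epsilon^2 + (1-2p)\left(1 - \tfrac{1}{2}(a+b)\right).
\end{align}

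Finally, I would combine this upper bound with the lower bound $\E(XY) \geq 1-2p-\epsilon$ coming from~\eqref{eq:pcorr} and rearrange to obtain $a + b \leq \tfrac{2\epsilon(1+\epsilon)}{1-2p}$, which is exactly the claimed inequality $\E[\E(X\mid A)^2+\E(Y\mid B)^2] \leq \tfrac{2\epsilon(1+\epsilon)}{1-2p}$. I do not anticipate a serious obstacle: once the conditional correlation is unpacked via its definition, the argument reduces to a short algebraic manipulation. The only subtle points are (i) applying AM-GM in the correct direction (we need an upper bound on the square root, not a lower one), and (ii) keeping in mind that $\E[\alpha(A)\beta(B)] = (\E X)(\E Y)$ need not vanish, which is precisely the source of the $(1+\epsilon)$ factor in the final bound rather than a bare $\epsilon$.
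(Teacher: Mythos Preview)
Your proposal is correct and follows essentially the same approach as the paper's proof: unpack the conditional correlation bound using the Markov chains and $X,Y\in\{-1,1\}$, apply the AM--GM inequality $\sqrt{uv}\leq (u+v)/2$ to the square root, take expectations using $A\perp B$ to factor the cross term as $(\E X)(\E Y)$, and rearrange against the lower bound $\E(XY)\geq 1-2p-\epsilon$. The only cosmetic difference is the order in which AM--GM and the expectation are applied, which is immaterial since AM--GM is used pointwise.
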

The proofs of these lemmas appear in Appendix~\ref{app:lem1} and~\ref{app:lem2}, respectively. We now continue to prove Lemma~\ref{lem:probably_p}.
\begin{proof}[Proof of Lemma~\ref{lem:probably_p}]
From Lemma~\ref{lem:manouver}, without loss of generality, we can write 
\begin{align} 
&\left|\E (X_iY_i)-(1-2p)\right|\leq \epsilon_i \label{eq:p_bound}\\&
\left|\E(X_i)\right| \leq \epsilon_i; \hspace{2mm} \left|\E(Y_i)\right| \leq \epsilon_i \label{eq:epsilon_bias}
\end{align} 
where $\sum_{i=1}^{n}\epsilon_i^2\leq \epsilon \cdot 2\ln{2}$, and hence by Jensen's inequality also $\sum_{i=1}^{n}\epsilon_i \leq \sqrt{n\epsilon\cdot 2\ln 2}$. In order to use Lemma~\ref{lem:power_bound} we still need to show the bound on the conditional correlation. This follows by noting that $\rho_m(X_i;Y_i|A,B)$, which is an r.v. that represents the maximal correlation corresponding to the distribution $P_{X_i,Y_i|A,B}(\cdot,\cdot\mid A,B)$, has for all $i \in [n]$
\begin{align}
 \rho_m(X_i;Y_i|A,B) & = \rho_m (f_i(U^n,A);g_i(V^n,B)|A,B) \\
 & \leq  \rho_m (U^n,A;V^n,B|A,B) \label{eq:DPI2} \\
 &=\max \{\rho_m (U^n;V^n),\rho_m (A;B)\} \label{eq:ten_mc} \\
 & = 1-2p  \label{eq:mc_BCS}
\end{align}
where~\eqref{eq:DPI2} follows from the DPI for maximal correlation,~\eqref{eq:ten_mc} follows from the tensorization property and~\eqref{eq:mc_BCS} follows since $(A,B)$ are independent and the maximal correlation of a $\dsbs(p)$ is $1-2p$. Hence, since $\rho(X_i;Y_i|A,B) \leq \rho_m(X_i;Y_i|A,B)$ by definition, we can use Lemma~\ref{lem:power_bound} to obtain
\begin{align}
\E \left(\sum_{i=1}^{n}\E(X_i|A)^2+\E(Y_i|B)^2\right) &\leq \sum_{i=1}^{n}\frac{2\epsilon_i(1+\epsilon_i)}{1-2p} \label{eq:power_bound}\\
&\leq \frac{\epsilon \cdot 4\ln{2}+2\sqrt{n\epsilon\cdot 2\ln 2}}{1-2p}\label{eq:sum_power_bound}
\end{align}
Finally, we show that we can replace the average (over $i$) of $\E(X_iY_i|A,B)$ (with high probability) with the average (over $i$) of the conditional correlations, and that the latter is about $1-2p$. To show that $\E(X_iY_i|A,B)$ can be replaced by the conditional correlation, let us denote $1-\eta_i^{AB}\triangleq \sqrt{\left(1-\E(X_i|A)^2\right)\left(1-\E(Y_i|B)^2\right)}$ and note that
\begin{align}
\sum_{i=1}^{n}\eta_i^{AB} &\leq n-\sum_{i=1}^{n}\min\left\{1-\E(X_i|A)^2,1-\E(Y_i|B)^2\right\}\\&=\sum_{i=1}^{n}\max\left\{\E(X_i|A)^2,\E(Y_i|B)^2\right\}\\& \leq \sum_{i=1}^{n}\E(X_i|A)^2+\E(Y_i|B)^2 \label{eq:bound_chiAB}
\end{align}
where we used $\sqrt{ab} \geq \min(a,b)$. Appealing to Markov's inequality, we have: 
\begin{align}
\Pr\left(\sum_{i=1}^{n}\E(X_i|A)^2+\E(Y_i|B)^2\geq \frac{\Omega\left(\sqrt[4]{n\epsilon}\right)}{1-2p} \right)= O\left(\sqrt[4]{n\epsilon}\right) \label{eq:sum_Markov}
\end{align}
This implies that with probability of at least $1-O\left(\sqrt[4]{n\epsilon}\right)$, the following bound holds:
 \begin{align}
\sum_{i=1}^{n} \rho(X_i;Y_i|A,B)=&\sum_{i=1}^{n} \rho(X_i;Y_i|A,B)\left(1-\eta_i^{AB}\right)+\rho(X_i;Y_i|A,B)\cdot \eta_i^{AB}
\\ = &\sum_{i=1}^{n}\E(X_iY_i|A,B)+\E(X_i|A)\E(Y_i|B)+\rho(X_i;Y_i|A,B)\cdot\eta_i^{AB} \\ \leq &
\sum_{i=1}^{n}\E(X_iY_i|A,B)+\E(X_i|A)\E(Y_i|B)+(1-2p)\sum_{i=1}^{n}\eta_i^{AB} \label{eq:Expected_bound}\\ \leq & \sum_{i=1}^{n}\E(X_iY_i|A,B)+\frac{O\left(\sqrt[4]{n\epsilon}\right)}{1-2p} \label{eq:eta_bound1}
\end{align}
\eqref{eq:Expected_bound} follows from using~\eqref{eq:mc_BCS}, and~\eqref{eq:eta_bound1} follows from~\eqref{eq:bound_chiAB} and~\eqref{eq:sum_Markov} together with the inequality $2\alpha\beta \leq \alpha^2+\beta^2$. 

We showed that the average (over $i$) of $\E(X_iY_i|A,B)$ can be replaced (with high probability) with the average (over $i$) of the conditional correlations, so now let us prove that the latter is roughly $1-2p$. In a similar fashion to the above, consider 
\begin{align}
\E\left[\sum_{i=1}^{n}\rho(X_i;Y_i|A,B)\right] &= \sum_{i=1}^{n}\E\left(\E(X_iY_i|A,B)+\E(X_i|A)\E(Y_i|B)\right) +\sum_{i=1}^{n}\E\left(\rho(X_i;Y_i|A,B)\cdot \eta_i^{AB}\right)\\ &=\sum_{i=1}^{n}\E(X_iY_i)+\E(X_i)\E(Y_i) +\sum_{i=1}^{n}\E\left(\rho(X_i;Y_i|A,B)\cdot \eta_i^{AB}\right)\\ &\geq 
n(1-2p)-O\left(\sqrt{n\epsilon}\right) -(1-2p)\E\left[\sum_{i=1}^{n} \eta_i^{AB}\right] \label{eq:e_bound}
\\ &\geq  n(1-2p)-O\left(\sqrt{n\epsilon}\right)
\label{eq:use_eta_bound}
\end{align}
where~\eqref{eq:e_bound} follows from~\eqref{eq:epsilon_bias} and \eqref{eq:mc_BCS}, and~\eqref{eq:use_eta_bound} follows from~\eqref{eq:sum_power_bound} and \eqref{eq:bound_chiAB}. Now, note that the r.v. $Z\triangleq n(1-2p)-\sum_{i=1}^{n}\rho(X_i;Y_i|A,B)$ is non-negative (due to~\eqref{eq:mc_BCS}) and its expectation is $\E(Z)\leq O\left(\sqrt{n\epsilon}\right)$. Hence we conclude from Markov's inequality that 
\begin{align}
\Pr \left(n(1-2p)-\sum_{i=1}^{n}\rho(X_i;Y_i|A,B)\geq \Omega \left(\sqrt[4]{n\epsilon}\right)\right)= O\left(\sqrt[4]{n\epsilon }\right)\label{eq:cond_corr_bound}
\end{align}
From the union bound applied to the events in~\eqref{eq:cond_corr_bound} and~\eqref{eq:eta_bound1}, we see that
\begin{align}
\Pr\left(\sum_{i=1}^{n}\E(X_iY_i|AB) \geq  n(1-2p)- O\left(\sqrt[4]{\epsilon n}\right)\right) = 1-O\left(\sqrt[4]{\epsilon n}\right), 
\end{align}
The lemma now follows by substituting $\E(X_iY_i|AB)=1-2\Pr(X_i\neq Y_i|A,B)$.
\end{proof}
Let us know define $f_i^a \triangleq f_i(\cdot, a)$ and $g_i^b \triangleq g_i(\cdot, b)$ as the Boolean functions used to generate $X_i$ and $Y_i$ respectively, when $(A,B)=(a,b)$.
In the next lemma, we show that w.h.p. most of the energy of these Boolean functions is concentrated on the first level, which will then imply their closeness to being some dictator function.  

\begin{lemma} \label{lem:probably_dictator}
\begin{align}
    \Pr \left(\frac{1}{n} \sum_{i=1}^{n}\frac{W^1[f_i^A] + W^1[g_i^B]}{2} \geq 1 - \frac{O\left(\sqrt[4]{\epsilon}n^{-{3 \over 4}}\right)}{p(1-2p)^2}\right) = 1-O\left(\sqrt[4]{n\epsilon}\right).
\end{align}
\end{lemma}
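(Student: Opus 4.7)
The plan is to reduce the randomized setting to the deterministic situation pointwise in $(A,B)$, after restricting to a suitable high-probability event. Conditional on $A=a$ and $B=b$, the maps $f_i^a$ and $g_i^b$ are deterministic Boolean functions, so Lemma~\ref{lem:corrbound} applies directly to bound $\E[f_i^a(U^n)g_i^b(V^n)]$ by its Fourier weights. The proof is then essentially a conditional analogue of the chain of inequalities already carried out in Lemma~\ref{lem:almost_dictator}.

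The good event will be the intersection of (i) the high-probability event of Lemma~\ref{lem:probably_p}, which upper-bounds the conditional average Hamming distance by $p + O(\sqrt[4]{\epsilon}\,n^{-3/4})/(1-2p)$, and (ii) the event in~\eqref{eq:sum_Markov} that $\sum_i \E(X_i|A)^2 + \E(Y_i|B)^2$ is of order at most $\sqrt[4]{n\epsilon}/(1-2p)$. The key observation enabling (ii) to be useful here is the identity $\E(X_i\mid A=a)^2 = (\hat{f}_{i,\emptyset}^a)^2 = W^0[f_i^a]$ (and analogously for $g$), so event (ii) is in fact a high-probability upper bound on the sum $S_0 \triangleq \sum_i \bigl(W^0[f_i^A] + W^0[g_i^B]\bigr)$. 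A union bound shows the good event has probability $1-O(\sqrt[4]{n\epsilon})$.

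On the good event, I condition on $(A,B)=(a,b)$, apply Lemma~\ref{lem:corrbound} to each coordinate, and split the Fourier sum into the pieces $k=0$, $k=1$, and $k\geq 2$; the tail $k\geq 2$ is controlled using $\sum_k W^k[f_i^a]=1$ together with $(1-2p)^k \leq (1-2p)^2$ for $k\geq 2$. Summing over $i$ and denoting $S_k \triangleq \sum_i \bigl(W^k[f_i^a] + W^k[g_i^b]\bigr)$, one obtains after rearrangement
\[
\sum_i \E\bigl[f_i^a(U^n) g_i^b(V^n)\bigr] \;\leq\; n(1-2p)^2 + 2p(1-p)\,S_0 + p(1-2p)\,S_1 .
\]
Combining this with the lower bound $\sum_i \E[f_i^a g_i^b] \geq n(1-2p) - O(\sqrt[4]{n\epsilon})/(1-2p)$ from event (i), and the bound $S_0 \leq O(\sqrt[4]{n\epsilon})/(1-2p)$ from event (ii), solves to $S_1 \geq 2n - O(\sqrt[4]{n\epsilon})/(p(1-2p)^2)$. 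Dividing by $2n$ yields the stated conclusion.

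The main bookkeeping obstacle is tracking the several error terms through the three places they arise (Lemma~\ref{lem:probably_p}, equation~\eqref{eq:sum_Markov}, and the Fourier decomposition) and verifying that the union bound indeed preserves the stated failure probability $O(\sqrt[4]{n\epsilon})$. The analytic core is nearly identical to the deterministic case of Lemma~\ref{lem:almost_dictator}; the only genuinely new input is the identification of the $L^2$ norm of the conditional expectation $\E(X_i\mid A)$ with the zero-level Fourier weight $W^0[f_i^A]$, which converts a second-moment bound on local conditional expectations into a uniform-in-$(A,B)$ Fourier-weight bound on the random Boolean functions $f_i^A$ and $g_i^B$.
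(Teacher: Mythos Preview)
Your proposal is correct and follows essentially the same route as the paper: define the good event as the intersection of the event in Lemma~\ref{lem:probably_p} with the event in~\eqref{eq:sum_Markov}, use the identification $\E(X_i\mid A=a)^2 = W^0[f_i^a]$ to turn the latter into a bound on $\sum_i(W^0[f_i^A]+W^0[g_i^B])$, and then rerun the Fourier-weight inequality chain of Lemma~\ref{lem:almost_dictator} conditionally on $(A,B)=(a,b)$. The paper is terser---it simply asserts the two conditional bounds hold on a set $S$ of probability $1-O(\sqrt[4]{n\epsilon})$ and then points back to the proof of Lemma~\ref{lem:almost_dictator}---whereas you spell out the Fourier decomposition and the resulting inequality $\sum_i \E[f_i^a g_i^b] \leq n(1-2p)^2 + 2p(1-p)S_0 + p(1-2p)S_1$ explicitly; both arrive at the same bound $S_1 \geq 2n - O(\sqrt[4]{n\epsilon})/(p(1-2p)^2)$.
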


\begin{proof}
Lemma~\ref{lem:probably_p} implies that there exists a subset $S\subseteq \mathcal{A}\times \mathcal{B}$ with $\Pr((A,B)\in S) = 1-O\left(\sqrt[4]{n\epsilon }\right)$, such that for any $(a,b)\in S$ the following two inequalities hold: 
\begin{align}
&\sum_{i=1}^{n}W^0[f_i^a]+W^0[g_i^b]\leq \frac{O\left(\sqrt[4]{n\epsilon}\right)}{1-2p}\\&   \sum_{i=1}^{n}\left(1-2p-\E(f_i^a(U^n)g_i^b(V^n))\right) \leq  \frac{O\left(\sqrt[4]{\epsilon n}\right)}{1-2p}\label{eq:error_prob_bound}
\end{align}
Note that this is essentially the same inequality as in  Lemma~\ref{lem:almost_dictator}, eq.~\eqref{eq:Lem2} with $\epsilon$ replaced by the right-hand side of~\eqref{eq:error_prob_bound}, so we can continue along the same line of proof, which results in:   
\begin{align}
\sum_{i=1}^{n}\left(W^1[f_i^a] + W^1[g_i^b]\right) \geq  2n-\frac{O \left(\sqrt[4]{n\epsilon}\right)}{p(1-2p)^2}.
\end{align}
and the claim follows.
\end{proof}
By combining lemma~\ref{lem:probably_dictator} and lemma~\ref{lem:FKN}, we conclude that, for all $(a,b)\in S$, $f_i^a,g_i^b$ are close to some dictator function, i.e., for any $i\in[n]$ there exist $k_i^a, \ell_i^b \in [n]$ and $\alpha_i^a,\beta_i^b\in \{-1,1\}$ such that
\begin{align}
\Pr(f_i^a(U^n)\neq \alpha_i^a\cdot U_{k_i^a}) &\leq  \varepsilon_i^{f_a} \label{eq:dictatorA}\\
\Pr(g_i^b(V^n)\neq \beta_i^b\cdot V_{\ell_i^b}) &\leq  \varepsilon_i^{f_a}\label{eq:dictatorB}
\end{align}
for some set of constants $\varepsilon_i^{f_a},\varepsilon_i^{g_b}\geq 0$ that has $\sum_{i=1}^{n}\left(\varepsilon_i^{f_a} + \varepsilon_i^{g_b}\right)\leq \frac{O\left(\sqrt[4]{n\epsilon}\right)}{p(1-2p)^2}$. 
In the next lemmas, we show that with high probability the functions $f_i^A$ and $g_i^B$ are close to the same signed coordinate permutation of the Hamming cube, that this coordinate permutation does not depend on the local randomness and that it is unique. 

\begin{lemma} \label{lem:uniqueness}
	For all $(a,b)\in S$, it holds that
\begin{align}
k_i^a = \ell_i^b, \alpha_i^a=\beta_i^b.
\end{align}
Moreover, $k_i^a \neq k_j^a$ for $i\neq j$. 
\end{lemma}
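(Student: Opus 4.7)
The plan is to mimic, for each realization $(a, b) \in S$, the two deterministic proofs (Lemmas~\ref{lem:bijection} and~\ref{lem:coordination}), leveraging the quantitative bounds proven in Lemmas~\ref{lem:probably_p} and~\ref{lem:probably_dictator}. To establish $k_i^a = \ell_i^b$ and $\alpha_i^a = \beta_i^b$, I would fix $(a, b) \in S$ and start from the Hamming-type bound $\sum_i \Pr(X_i \neq Y_i \mid A=a, B=b) \leq np + O(\sqrt[4]{n\epsilon})/(1-2p)$ that holds on $S$ by construction. Substituting the FKN dictator approximations~\eqref{eq:dictatorA}--\eqref{eq:dictatorB}, each summand is within $\varepsilon_i^{f_a} + \varepsilon_i^{g_b}$ of $\Pr(\alpha_i^a U_{k_i^a} \neq \beta_i^b V_{\ell_i^b})$, which equals $p$ when both index and sign match, $\tfrac{1}{2}$ when the two indices differ, and $1-p$ when the indices match but the signs are opposite. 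Since each mismatched coordinate forces an excess of at least $\tfrac{1}{2}-p > 0$ on the left-hand side, and since the aggregate FKN error $\sum_i(\varepsilon_i^{f_a} + \varepsilon_i^{g_b})$ is $O(\sqrt[4]{n\epsilon})/\mathrm{const}$, the total number of mismatches is $o(1)$ under $\epsilon = o(1/n)$ and hence must vanish for $n$ large.

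For the injectivity statement $k_i^a \neq k_j^a$, I would further refine $S$ by discarding $(a,b)$ at which $k^a$ collides. A collision $k_i^a = k_j^a = k$ (combined with $\ell_i^b = k_i^a$ from the previous step) forces both $(X_i, Y_i)$ and $(X_j, Y_j)$ to be approximately signed copies of the single $\dsbs(p)$ sample $(U_k, V_k)$; in particular, the quartic statistic $X_i Y_i X_j Y_j$ is pinned to $1$ on the collision event, in stark contrast with its target value $(1-2p)^2$ under $\dsbs(p)^{\tn}$. The pairwise mutual information bound $I(X_i Y_i; X_j Y_j) \leq \epsilon$ implied by Lemma~\ref{lem:Divbound2}, together with Pinsker's inequality, would then give a per-pair upper bound on the collision probability; the refined $S$ is obtained by removing the union of all such collision events and relabeled as the $S$ in the statement.

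The hard part is closing this last step quantitatively under the relatively weak hypothesis $\epsilon = o(1/n)$, because the naive per-pair collision probability of $O(\sqrt{\epsilon})$ summed over $\binom{n}{2}$ pairs gives $O(n^2 \sqrt{\epsilon})$, which is not $o(1)$. I expect to replace the union bound by an aggregate second-moment argument on the statistic $\E[(\sum_i X_i Y_i)^2]$: under the simulated law this moment gains an additive contribution of order $p(1-p)$ per collision pair (because $\sum_i X_iY_i \approx \sum_k m_k U_k V_k$ on $S$, with $\sum_k m_k^2 = n + 2 N_c$), while its deviation from the target value $n + n(n-1)(1-2p)^2$ can be controlled directly via the conditional mutual information bounds in Lemma~\ref{lem:Divbound2}, yielding a tighter estimate of the expected collision count than the naive union bound allows.
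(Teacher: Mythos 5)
Your argument for the coordination part ($k_i^a = \ell_i^b$, $\alpha_i^a = \beta_i^b$) is essentially the paper's: it uses the same ingredients (the on-$S$ Hamming-sum bound from Lemma~\ref{lem:probably_p} plus the FKN dictator approximations), and your ``count mismatches against the slack'' framing is a harmless reformulation of the contradiction stated there. Where you diverge is the injectivity claim, and here there is a genuine gap.

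The paper's crucial move is to exploit the \emph{independence of $A$ and $B$}. Having established $k_i^a = \ell_i^b$ on $S$, it treats $k_i^A$ and $\ell_i^B$ as independent $[n]$-valued random variables (functions of $A$ alone, resp.\ $B$ alone), and observes that $\Pr(k_i^A = \ell_i^B) = \sum_j \Pr(k_i^A = j)\Pr(\ell_i^B = j) \leq \max_j \Pr(k_i^A = j)$. Since this probability is $1 - O(\sqrt[4]{n\epsilon})$, the index $k_i^A$ is in fact nearly deterministic — the same for almost every $(a,b)$. This collapses the randomized scheme to (effectively) a single fixed signed permutation, so the deterministic collision argument from Lemma~\ref{lem:bijection} plus $I(X_1;X_2)\le\epsilon$ (via Lemma~\ref{lem:Divbound2} and Fano) finishes the job with no combinatorial losses. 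Your proposal never establishes this near-determinacy; you keep $k_i^a$ allowed to vary with $a$ and try to control collisions directly through moments.

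That direct route does not close quantitatively under $\epsilon = o(1/n)$, and you half-acknowledge this. Your fallback, the second moment of $\sum_i X_iY_i$: the target value is $n + n(n-1)(1-2p)^2$, a $\Theta(n^2)$ quantity, while control via the KL budget gives at best a total-variation bound of order $\sqrt{\epsilon}$ and hence an additive slack of order $n^2\sqrt{\epsilon} = o(n^{3/2})$ on this statistic. Each collision pair contributes an excess of order $p(1-p)$, so this only bounds the collision count by $o(n^{3/2})$ — far from the needed $0$. Without first pinning the permutation via the $A\perp B$ product decomposition, the moment-method loses polynomially in $n$, and the hypothesis $\epsilon=o(1/n)$ is not strong enough to absorb that loss. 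The missing idea is precisely the independence trick, which converts the problem back into the deterministic one where per-coordinate arguments suffice.
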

\begin{proof}
From  lemma~\ref{lem:coordination}, if $(a,b) \in S$ and $k_i^a \neq \ell_i^b$, then  
\begin{align}
    \Pr(X_i\neq Y_i) \geq \frac{1}{2} - \left(\varepsilon_i^{f_a}+\varepsilon_i^{g_b}\right)
\end{align}
In a similar fashion, if $k_i^a = \ell_i^b$, $\alpha_i^a \neq \beta_i^b$, then 
\begin{align}
    \Pr(X_i\neq Y_i) \geq 1-p - \left(\varepsilon_i^{f_a}+\varepsilon_i^{g_b}\right)
\end{align} 
It follows that $\Pr(X_i\neq Y_i)-p\geq \frac{1}{2}-p-\frac{O \left(\sqrt[4]{n\epsilon}\right)}{p(1-2p)^2}$, in contradiction to lemma~\ref{lem:probably_p} for $n\epsilon$ small enough, implying $k_i^a = \ell_i^b$, $\alpha_i^a=\beta_i^b$.  

In order to prove uniqueness, we first show that the signed permutation does not depend on the local randomization. Let us define the pair of r.v.s $(k_i^A,\ell_i^B)$ in the following manner: When $(a,b)\notin S$, these r.v.s are some arbitrary index in $[n]$, and when $(a,b)\in S$ they are both equal to the same index $k_i^a$. Hence, from lemma~\ref{lem:probably_dictator} and the distribution of $(k_i^A,\ell_i^B)$ we have:
\begin{align}
    1-O\left(\sqrt[4]{n\epsilon}\right)&= \Pr\left(k_i^A=\ell_i^B\right)\\& = \sum_{j=1}^{n}\Pr\left(k_i^A=j\right)\Pr\left(\ell_i^B=j\right) \label{eq:a_b_independent}\\ &\leq \underset{j}{\max}\Pr\left(k_i^A=j\right) \sum_{j=1}^{n}\Pr\left(\ell_i^B=j\right)\\&=\underset{j}{\max}\Pr\left(k_i^A=j\right),\label{eq:max_index}
\end{align}	
Where~\eqref{eq:a_b_independent} follows from the independence of $A$ and $B$. This means $k_i^A$ (resp. $\ell_i^B$) will be constant with high probability, and the same holds for $\alpha_i^A$ (resp. $\beta_i^B$).
 We now show that from this simple fact it follows that, with high probability, $k_i\leftarrow i$ is a bijection from $[n]$ to $[n]$. Assume without loss of generality that the index that maximizes~\eqref{eq:max_index} for both $k_1$ and $k_2$ is $1$, and that also $\alpha_1=\alpha_2=1$. Then, according to Lemma~\ref{lem:bijection},
	\begin{align}
	\Pr(X_1\neq X_2|(A,B)\in S) \leq \frac{O \left(\sqrt[4]{n\epsilon}\right)}{p(1-2p)^2},
	\end{align}
	and since $\Pr((A,B)\in S) = 1-O\left(\sqrt[4]{n\epsilon }\right)$, it follows from total probability that $\Pr(X_1\neq X_2) \leq \frac{O \left(\sqrt[4]{n\epsilon}\right)}{p(1-2p)^2}$.
	Using the divergence bound of Lemma~\ref{lem:Divbound2}, we have
	\begin{align}
 	\epsilon &\geq D\left( p_{X^nY^n}\| p^\tn_{XY}\right)\\&\geq I(X_1;X_2)\\&=H(X_1)-H(X_1|X_2)\\&\geq h_b \left(\frac{1}{2}+\sqrt{\epsilon \cdot 2\ln2}\right)-\Pr(X_1\neq X_2) -h_b (\Pr(X_1\neq X_2)) \label{eq:fano2}\\&\geq 1-4\epsilon-\frac{O \left(\sqrt[4]{n\epsilon}\right)}{p(1-2p)^2}-h_b \left(\frac{O \left(\sqrt[4]{n\epsilon}\right)}{p(1-2p)^2}\right)\label{eq:binary_ineq}
	\end{align}
	where we have used Fano's inequality and the fact that lemma~\ref{lem:manouver} implies $\left|\Pr\left(X_i=\frac{1}{2}\right)-\frac{1}{2}\right|\leq \sqrt{\epsilon \cdot 2\ln2}$ in~\eqref{eq:fano2} and the inequality $h_b\left(\frac{1+x}{2}\right)\geq 1-\frac{x^2}{2}\log e,|x|<1$ in~\eqref{eq:binary_ineq}. It is evident that when $\epsilon$ is small enough we arrive at a contradiction. 
\end{proof}
In order to show the relation to the total variation distance, again we set the scalar noiseless channels
\begin{align}	
&q_{X_i|U_i}(x_i |v_i)=\ind (x_i=u_i)\\&q_{Y_i|V_i}(y_i | v_i)=\ind (y_i=v_i)
\end{align}
Appealing again to lemma~\ref{lem:TV_from_const}, we can write the total variation distance as
\begin{align}	
\E\dtv\left(p_{X^n|U^n}(\cdot\mid U^n)\,,\, \prod_{i=1}^nq_{X_i|U_i}(\cdot | U_i)\right)& =\Pr (X^n\neq U^n)
\\&= \frac{O\left(\sqrt[4]{n\epsilon}\right)}{p(1-2p)^2},
\end{align}
hence when $\epsilon =  o(1/n)$ the total variation distance goes to zero in probability.

\subsubsection{The General Symmetric Case ($\epsilon>0, \delta>0$)}\label{subsec:theorem2}
	Lemma~\ref{lem:manouver} now implies that 
	\begin{align}
    |\E (X_iY_i)-(1-2p)| \leq \epsilon_i+2\delta
	\end{align}
	which means we can incorporate $\delta$ into the simulation distortion $\epsilon$, and since $\sum_{i=1}^{n} \epsilon_i^2 \leq  \epsilon =  o(1/n)$ due to the condition on $\epsilon$, we want also that $\sum_{i=1}^{n} \delta^2 = o(1/n)$ so that the conditions for successful simulation will hold, making $\delta=o(1/n)$ and also
	\begin{align} 
	\E \dtv\left(p_{\sigma(X^n)|U^n}(\cdot\mid U^n)\,,\, \prod_{i=1}^nq_{X_i|U_i}(\cdot | U_i)\right)= \frac{O\left(\sqrt{n\delta+\sqrt{n\epsilon}}\right)}{p(1-2p)^2}.
	\end{align}
Appealing to the Markov inequality once again, we conclude the direct part of Theorem~\ref{thrm:DSBS}. A counterexample for when $\epsilon=\omega (1/\sqrt{n})$ or $\delta=\omega (1/\sqrt{n})$ is provided in the following section. 
	
\subsection{Counterexample: $\epsilon=\omega (1/\sqrt{n}), \delta=\omega (1/\sqrt{n})$, with total variation bounded away from zero}\label{subsec:converse}
Consider some real numbers $0<\alpha<\beta<1$ such that $\alpha+\beta=1$, and suppose we are interested in simulation of a $\dsbs(p+n^{-\alpha})$. We propose the following algorithm: Partition the $n$ length sequence into $n^{\alpha}$ disjoint subsequences of length $n^{\beta}$, marked $S_1$ to $S_{n^{\alpha}}$. 
Set $X_i=U_i$ to be a clean channel for all $i$. $Y_i$ is determined in the following manner: If $i \in S_j$ and is also not the last coordinate in the subsequence, then $Y_i$ is a scalar channel $Y_i=V_i\tilde{Z}_i$. If $i $ is the last coordinate in the subsequence $S_j$, then
\[Y_i= \left\{ \begin{array}{ll}
V_i, & if \hspace{2mm} r_j=0,\\
\prod_{\underset{k \neq i}{k \in S_j}}V_k, & if \hspace{2mm} r_j=1 ,\end{array} \right. \]
where:
\begin{enumerate}
    \item $\{r_j\}_{j=1}^{n^{\alpha}}$ is an i.i.d. sequence of $\bern{\left(\mu\right)}$ r.v.s.
    \item $\{\tilde{Z}_i\}_{i=1}^n$ is an i.i.d. sequence of $\bern{\left(q\right)}$ r.v.s.
    \item $\{r_j\}_{j=1}^{n^{\alpha}}, \{\tilde{Z}_i\}_{i=1}^n$ are independent of each other and of $\{V_i\}_{i=1}^n$.
    \item $q,\mu$ are real parameters.
\end{enumerate}
Let us pick $q,\mu$ such that $\Pr(X_i\neq Y_i) = p+n^{-\alpha}$, i.e.,
\begin{align}
p \ast q= p+n^{-\alpha} &\Longrightarrow q=\frac{n^{-\alpha}}{1-2p}, \\ \left(1-\mu\right)p+\mu \cdot \frac{1}{2}=p+n^{-\alpha} &\Longrightarrow \mu=\frac{2n^{-\alpha}}{1-2p}.
\end{align}
We analyze this example using Lemma~\ref{lem:Divbound2}, which showed that
\begin{align}
D\left( p_{X^nY^n}\| p^\tn_{XY}\right) =  \sum_{i=1}^{n}I\left(X_i,Y_i;X^{i-1},Y^{i-1}\right) + \sum_{i=1}^{n}D\left( p_{X_iY_i}\| p_{XY}\right). \label{eq:div_muI} 
\end{align}
With the above choice of $q,\mu$, the divergence expression in~\eqref{eq:div_muI} is zeroed. Furthermore, all pairs $(X_i,Y_i), i \in S_k$,$(X_j,Y_j), j \in S_m, m\neq k$ are independent, so the mutual information in~\eqref{eq:div_muI} is reduced to:
\begin{align}
\sum_{k=1}^{n^{\alpha}}\sum_{i\in S_k }I\left(X_i,Y_i;X^{i-1},Y^{i-1}\right)
=&\sum_{k=1}^{n^{\alpha}}\sum_{i\in S_k }I\left(U_i,Y_i;U^{i-1},Y^{i-1}\right)\\=&\sum_{k=1}^{n^{\alpha}}I\left(Y_{kn^{\beta}};U^{kn^{\beta}-1}\left|U_{kn^{\beta}}\right.\right)+I\left(U_{kn^{\beta}};Y^{kn^{\beta}-1}\left|U^{kn^{\beta}-1}\right.\right)\\&+I\left(Y_{kn^{\beta}};Y^{kn^{\beta}-1}\left|U^{kn^{\beta}}\right.\right),
\end{align}
where the last equality follows since all coordinates inside a subsection, excluding the last coordinate, form an i.i.d. set. Since the distribution within any subsection is the same, we need only show that
\begin{align}
I\left(Y_{n^{\beta}};U^{n^{\beta}-1}\left|U_{n^{\beta}}\right.\right)+I\left(U_{n^{\beta}};Y^{n^{\beta}-1}\left|U^{n^{\beta}-1}\right.\right)+I\left(Y_{n^{\beta}};Y^{n^{\beta}-1}\left|U^{n^{\beta}}\right.\right)
\end{align}  
goes to zero fast enough. It is clear that $U_{n^{\beta}}$ is independent of $\left(U^{n^{\beta}-1},Y^{n^{\beta}-1}\right)$, hence the second term is zero. For the first term, we claim that $U^{n^{\beta}-1}$ is independent of $\left(U_{n^{\beta}},Y_{n^{\beta}}\right)$ in the limit of large $n$. This follows since $Y_{n^\beta}$ is either $V_{n^\beta}$, or 
\begin{align}
\prod_{\underset{k \neq n^\beta}{k \in S_1}}V_k=\prod_{k=1}^{n^{\beta}-1}U_k\prod_{k=1}^{n^{\beta}-1}Z_k=\prod_{k=1}^{n^{\beta}-1}U_k\hat{Z}_1,\label{eq:independent_Unbeta}
\end{align}
where $\hat{Z}_1 \sim \bern \left(\frac{1}{2}\left(1-(1-2p)^{n^\beta-1}\right)\right)$, implying that $\hat{Z}_1$ approaches $\bern\left(\frac{1}{2}\right)$ in distribution exponentially fast in $n^\beta$. For the third term, we claim that $\left(U_{n^\beta},Y_{n^\beta}\right)$ are independent of $\left(U^{n^\beta-1},Y^{n^\beta-1}\right)$ in the limit of large $n$. This follows since we can write~\eqref{eq:independent_Unbeta} as
\begin{align}
\prod_{\underset{k \neq n^\beta}{k \in S_1}}V_k=\prod_{k=1}^{n^{\beta}-1}Y_k\prod_{k=1}^{n^{\beta}-1}\tilde{Z}_k=\prod_{k=1}^{n^{\beta}-1}Y_k\hat{Z}_2,
\end{align}
and since $\tilde{Z}_i \sim \bern\left(\frac{n^{-\alpha}}{1-2p}\right)$, we have that $\hat{Z}_2$ approaches $\bern\left(\frac{1}{2}\right)$ in distribution exponentially fast in $n^{\beta-\alpha}$. However, it is clear that with some positive probability, at least one of the coordinates $Y_i$ is the parity of $n^\beta$ bits of $V^n$, hence the total variation of $Y^n$ from any memoryless channel is bounded away from zero. 

This scheme can also be used for the case when $\epsilon$ and $\delta$ are swapped, i.e., when $\delta=0,\epsilon=n^{-\zeta}$ for some $\zeta<0.5$. Let us pick this time $0.5<\alpha<\beta<1$ and set $q=n^{-\alpha}$. This time we partition the $n$ length sequence into $n^{1-\beta}$ disjoint subsequences of length $n^{\beta}$ and use the same law as before, with the adjustment of $\mu=n^{-1+\beta}$. Then the conditions for the mutual information to zero out and for the total variation of $Y^n$ from any memoryless channel to be bounded away from zero still hold. However, the KL divergence of~\eqref{eq:div_muI} is not zero, since there is a discrepancy between $\Pr(X_i\neq Y_i)$ and $p$. It is well known that, for small enough $q$, 
\begin{align}
D\left( p+q\| p\right) \approx \frac{q^2}{p(1-p)}.
\end{align}
As a consequence, the divergence is about $n^{-2\alpha}$ for $n-n^{1-\beta}$ coordinates, and for the other $n^{1-\beta}$ the divergence is about $n^{-2+2\beta}$, making the total sum result in 
\begin{align}
\left(n-n^{-1+\beta}\right)\cdot n^{-2\alpha}+n^{1-\beta}\cdot n^{-2+2\beta}=n^{-1+\beta}\left(1-n^{-2\alpha}+n^{1-\beta-2\alpha}\right) \approx n^{-1+\beta},
\end{align}
implying that any $\zeta=1-\beta<0.5$ achieves the converse, thus concluding our example.

\section{Summary and Discussion}\label{sect:Sum}
We considered a distributed source simulation problem: Alice and Bob, observing two jointly distributed i.i.d. sequences according to some $p_{UV}$, are required to simulate two jointly distributed i.i.d. sequences according to some $p_{XY}$, with no communication between them and no shared randomness. Motivated by Wyner's result for centralized source simulation, we were able to characterize a new region of simulable distributions $\mathcal{S}(p_{UV})$, which integrates Wyner's digital scheme with an analog scheme. This hybrid construction allowed us to achieve a generally larger set of simulable distributions than the union of digital and analog schemes, but due to the hybrid nature of our scheme, our simulable region is nontrivial only in the case where $\cgk(U,V)$ is positive. In other words, when $U$ and $V$ lack a common part, the agents cannot cooperate via codebooks, leaving them with only the analog option. This brought us to conjecture that if $\cgk(U,V)=0$, then truly only analog simulation is possible. This conjecture proves very difficult to verify, mainly due to the difficulty in formulating a measure of closeness between a general function and a scalar function and determining unequivocally whether a distribution achieved via vector simulation is outside the analog simulation achievable region. Hence, we addressed the $\dsbs$ case, specifically the simulation of a $\dsbs(p+\delta)$ from a $\dsbs(p)$. For this case, it is known that $\delta \in[0,1-2p]$ is both necessary and sufficient, and can be attained by a scalar Markov chain. We showed that if $\delta$ and the simulation distortion $\epsilon$ are taken to be small enough, then any successful simulation will be close to scalar, in the sense that it would be virtually impossible to tell it apart from a scalar one with any statistical test. While that result is well known for the case of $\epsilon=\delta=0$, we extended it to the case of $\epsilon,\delta=o(1/n)$, and further showed that this is close to being tight.
\section{Appendix}\label{sect:Appendix}
\subsection{Proof of Lemma~\ref{lem:manouver}} \label{app:lem1}
\begin{proof}
\begin{align}
 \epsilon &\geq \sum_{i=1}^{n}D\left( p_{X_iY_i}\| p_{XY}\right) \label{eq:SumDiv}\\
&\geq \sum_{i=1}^{n}D\left( \Pr(X_i\neq Y_i)\parallel \Pr(X \neq Y)\right) \label{eq:DPI}\\
&\geq \frac{2}{\ln 2} \sum_{i=1}^{n}\dtv^2(\Pr(X_i\neq Y_i), \Pr(X\neq Y)) \label{eq:pinsk2}\\
&= \frac{2}{\ln 2} \sum_{i=1}^{n} \left(\Pr(X_i \neq Y_i)-p\right)^2 \\&=\frac{1}{2\ln 2}\sum_{i=1}^{n} \left(\E (X_iY_i)-(1-2p)\right)^2  
\end{align}
where~\eqref{eq:SumDiv} follows from Lemma~\ref{lem:Divbound2} ,~\eqref{eq:DPI} follows from the data-processing inequality for divergences and~\eqref{eq:pinsk2} is from Pinsker's inequality. In a similar fashion:
\begin{align}
\epsilon &\geq \sum_{i=1}^{n}D\left( p_{X_i}\parallel \frac{1}{2}\right)\\ 
&\geq \frac{2}{\ln 2}\sum_{i=1}^{n} \left(\Pr(X_i=1)-\frac{1}{2}\right)^2 \\&=\frac{1}{2\ln 2}\sum_{i=1}^{n} \E^2 (X_i)
\end{align}
\end{proof}

\subsection{Proof of Lemma~\ref{lem:power_bound}}\label{app:lem2}
\begin{proof} Using the Markov chains $X-A-B$ and $Y-B-A$, and recalling that $X,Y\in\{-1,1\}$, we can rewrite~\eqref{eq:bounded_rho_assumption} as 
\begin{align}
\E(XY|AB)\leq \sqrt{\left(1-\E(X|A)^2\right)\left(1-\E(Y|B)^2\right)}(1-2p)+\E(X|A)\E(Y|B)
\end{align}
with probability $1$. By taking the expectation on both sides, we get 
\begin{align}
\E(XY)& \leq \E \left[\sqrt{\left(1-\E(X|A)^2\right)\left(1-\E(Y|B)^2\right)}\right](1-2p)+\E \left[\E(X|A)\right]\E \left[\E(Y|B)\right]\label{eq:A_and_B_are_indp6}
\\ &\leq \E \left[\sqrt{\left(1-\E(X|A)^2\right)\left(1-\E(Y|B)^2\right)}\right](1-2p)+|\E(X)|\cdot |\E(Y)|\\
&\leq \E \left[\frac{1-\E(X|A)^2+1-\E(Y|B)^2}{2}\right](1-2p)+\epsilon^2 \label{eq:arigeo1} \\
&=1-2p-\frac{1-2p}{2}\E \left[\E(X|A)^2+\E(Y|B)^2\right]+\epsilon^2. 
\end{align}
We have used the assumption that $A,B$ are independent in~\eqref{eq:A_and_B_are_indp6}, and~\eqref{eq:pcorr} together with the arithmetic-geometric mean inequality in~\eqref{eq:arigeo1}. Rearranging the above and using~\eqref{eq:pcorr} again, the result follows.
\end{proof}

\subsection{Alternative proof of Lemma~\ref{lem:soft}}\label{app:soft}
\begin{proof}
	We draw a random encoding $A$ independently of $U^n$, where $A(u^n)\sim p_{W|U}^\tn(\cdot|u^n)$ and all the encoded vectors are mutually independent. We will show that 
	\begin{align}
		\lim_{n\to\infty}\E_A D\left(p_{X^n|A}\parallel p_X^\tn\right) = 0,
	\end{align}
	This would immediately imply the existence of a desired sequence of encodings. 
	
	Set some $\tau>0$ and define the following typical set:
	\begin{align}
	\mathcal{A}_{\tau}^{(n)} \dfn \left\{(x^n,u^n,w^n):\frac{1}{n}\log \left(\frac{p_U^\tn(u^n)\cdot p_{X|UW}^\tn(x^n|u^n,w^n)}{p_X^\tn(x^n)}\right) < \tau \right\} 
	\end{align}
	For later reference, note that we can upper bound the indicator function of the typical set by 
	\begin{align}\label{eq:typ_indbound}
	\ind_{\mathcal{A}_{\tau}^{(n)}}(x^n,u^n,w^n) \leq \left(\frac{2^{n\tau}}{\frac{p_U^\tn(u^n)\cdot p_{X|UW}^\tn(x^n|u^n,w^n)}{p_X^\tn(x^n)}}\right)^\beta.
	\end{align}
	for any  $\beta>0$. 
	
	We now separate the contribution to $p(x^n|a$) coming from typical triplets $(x^n,u^n,a(u^n)) \in\mathcal{A}_{\tau}^{(n)}$ and atypical triplets. To that end, define the following functions:
	\begin{align}
	p_1(x^n|a)=\sum_{u^n}p_U^\tn(u^n)p_{X|UW}^\tn(x^n|u^n,a(u^n))\ind_{\mathcal{A}_{\tau}^{(n)}}(x^n,u^n,a(u^n)),
	\end{align}
	\begin{align}
	p_2(x^n|a)=\sum_{u^n}p_U^\tn(u^n)p_{X|UW}^\tn(x^n|u^n,a(u^n))(1-\ind_{\mathcal{A}_{\tau}^{(n)}}(x^n,u^n,a(u^n))).
	\end{align}
	where $\ind_{\mathcal{A}_{\tau}^{(n)}}$ is the indicator function for the set $\mathcal{A}_{\tau}^{(n)}$. Note that $p(x^n|a) = p_1(x^n|a) + p_2(x^n|a)$, and that by construction 
	\begin{align}
	\E(p_1(x^n|A)+p_2(x^n|A))&=\E_A p(x^n|A) = p_X^\tn(x^n),\label{eq:ratio}
	\end{align}
	where the expectation is taken over the random choice of $A$. The divergence between the distribution of $X^n$ and the desired distribution given the encoding is:
	\begin{align}
	D(p_{X^n|A}(\cdot\mid a)\parallel p_X^\tn)&=\sum_{x^n}p(x^n|a)\log\frac{p(x^n|a)}{p_X^\tn(x^n)}\\
	&= \sum_{x^n}(p_1(x^n|a)+p_2(x^n|a))\log\frac{p_1(x^n|a) + p_2(x^n|a)}{\alpha p_X^\tn(x^n) + (1-\alpha)p_X^\tn(x^n)}\\
	\label{eq:eq1}&\leq\sum_{x^n}p_1(x^n|a)\log\frac{p_1(x^n|a)}{\alpha p_X^\tn(x^n)}+p_2(x^n|a)\log\frac{p_2(x^n|a)}{(1-\alpha)p_X^\tn(x^n)}\\
	& = \sum_{x^n}p_1(x^n|a)\log\frac{p_1(x^n|a)}{p_X^\tn(x^n)}+\sum_{x^n}p_2(x^n|a)\log\frac{p_2(x^n|a)}{p_X^\tn(x^n)} \\ &\quad +\log\left(\frac{1}{\alpha}\right)\sum_{x^n}p_1(x^n|a) + \log\left(\frac{1}{1-\alpha}\right)\left(1-\sum_{x^n}p_1(x^n|a)\right).
	\end{align}	
	for any $\alpha\in[0,1]$, where in~\eqref{eq:eq1} we have used the log-sum inequality. We can now minimize the bound by choosing $\alpha=\sum_{x^n}p_1(x^n|a)$, which yields 
	\begin{align}\label{eq:div_bound}
	D(p_{X^n|A}(\cdot\mid a)\parallel p_X^\tn) \leq \underbrace{\sum_{x^n}p_1(x^n|a)\log\frac{p_1(x^n|a)}{p_X^\tn(x^n)}}_{\dfn g_1(a)}
	+
	\underbrace{\sum_{x^n}p_2(x^n|a)\log\frac{p_2(x^n|a)}{p_X^\tn(x^n)}}_{\dfn g_2(a)} 
	+ 
	\underbrace{h\left(\sum_{x^n}p_2(x^n|a)\right)}_{\dfn g_3(a)}.
	\end{align}	
	where $h(\delta) = -\delta\log\delta - (1-\delta)\log(1-\delta)$ is the binary entropy function.  
	
We proceed by taking the expectation of both sides of~\eqref{eq:div_bound} over the choice of the encoding $A$, and examine the expectations of the three components $\E g_1(A),\E g_2(A)$ and $\E g_3(A)$, showing they all approach zero. For the first component, we have: 
	\begin{align}
	\E g_1(A) &= \E\left[\sum_{x^n}p_X^\tn(x^n)\frac{p_1(x^n|A)}{p_X^\tn(x^n)}\log\frac{p_1(x^n|A)}{p_X^\tn(x^n)}\right]\\
	& = \sum_{x^n}p_X^\tn(x^n)\E\left[\frac{p_1(x^n|A)}{p_X^\tn(x^n)}\log\frac{p_1(x^n|A)}{p_X^\tn(x^n)}\right]\\
	\label{eq:log_bound}&\leq \log (e) \sum_{x^n}p_X^\tn(x^n)\E \left[\frac{p_1(x^n|A)}{p_X^\tn(x^n)}\left(\frac{p_1(x^n|A)}{p_X^\tn(x^n)}-1\right)\right]\\
	&=\log (e)\sum_{x^n}p_X^\tn(x^n)\left[ \E \left[\left(\frac{p_1(x^n|A)}{p_X^\tn(x^n)}\right)^2\right]-\E \left[\frac{p_1(x^n|A)}{p_X^\tn(x^n)}\right]\right]\\ 
	\label{eq:avg_less_than_1}&\leq \log (e)\sum_{x^n}p_X^\tn(x^n)\left[\E \left[\left(\frac{p_1(x^n|A)}{p_X^\tn(x^n)}\right)^2\right]-\left(\E \left[\frac{p_1(x^n|A)}{p_X^\tn(x^n)}\right]\right)^2\right]\\
	&= \log (e)\sum_{x^n}p_X^\tn(x^n)\Var\left(\frac{p_1(x^n|A)}{p_X^\tn(x^n)}\right)\\
	\label{eq:g1_bound}&=\log(e)\sum_{x^n}\frac{\Var(p_1(x^n|A))}{p_X^\tn(x^n)}
	\end{align}
	where~\eqref{eq:log_bound} follows from the inequality $\log(x)\leq (x-1)\log (e)$, and~\eqref{eq:avg_less_than_1} follows by noting that~\eqref{eq:ratio} implies $\E\left(\frac{p_1(x^n|A)}{p_X^\tn(x^n)}\right)\leq 1$.	
	Let us now upper bound $\Var(p_1(x^n|A))$. 
	\begin{align} 
	\label{eq:additive_var}
	\Var(p_1(x^n|A)) &= \Var\left(\sum_{u^n}p_U^\tn(u^n)p_{X|UW}^\tn(x^n|u^n,A(u^n))\ind_{\mathcal{A}_{\tau}^{(n)}}(x^n,u^n,A(u^n))\right) 
	\\&=\sum_{u^n}\Var\left(p_U^\tn(u^n)p_{X|UW}^\tn(x^n|u^n,A(u^n))\ind_{\mathcal{A}_{\tau}^{(n)}}(x^n,u^n,A(u^n))\right)\\
	&\leq\sum_{u^n}\E \left[p_U^\tn(u^n)p_{X|UW}^\tn(x^n|u^n,A(u^n))\ind_{\mathcal{A}_{\tau}^{(n)}}(x^n,u^n,A(u^n))\right]^2 \\
	&=\sum_{u^n,w^n}p_{W|U}^\tn(w^n|u^n)\left[p_U^\tn(u^n)p_{X|UW}^\tn(x^n|u^n,w^n)\right]^2\ind_{\mathcal{A}_{\tau}^{(n)}}(x^n,u^n,w^n)\\
	&\leq\sum_{u^n,w^n}p_{W|U}^\tn(w^n|u^n)\left[p_U^\tn(u^n)p_{X|UW}^\tn(x^n|u^n,w^n)\right]^2\ind_{\mathcal{A}_{\tau}^{(n)}}(x^n,u^n,w^n)\\
	&=\sum_{u^n,w^n}p_{XUW}^\tn(x^n,u^n,w^n)p_U^\tn(u^n)p_{X|UW}^\tn(x^n|u^n,w^n)\ind_{\mathcal{A}_{\tau}^{(n)}}(x^n,u^n,w^n)\\
	&=\left(p_X^\tn(x^n)\right)^2\sum_{u^n,w^n}p_{UW|X}^\tn(u^n,w^n|x^n)\frac{p_U^\tn(u^n)\cdot p_{X|UW}^\tn(x^n|u^n,w^n)}{p_X^\tn(x^n)}\ind_{\mathcal{A}_{\tau}^{(n)}}(x^n,u^n,w^n)\\
	\label{eq:ind_bound}&\leq \left(p_X^\tn(x^n)\right)^2\cdot 2^{n\tau\beta}\sum_{u^n,w^n}p_{UW|X}^\tn(u^n,w^n|x^n)\left(\frac{p_U^\tn(u^n)\cdot p_{X|UW}^\tn(x^n|u^n,w^n)}{p_X^\tn(x^n)}\right)^{1-\beta}.
	\end{align}
	The equality in~\eqref{eq:additive_var} follows since the encoding is independent for each $u^n$, and  the inequality~\eqref{eq:ind_bound} follows from~\eqref{eq:typ_indbound}. Plugging this bound into~\eqref{eq:g1_bound} yields
	\begin{align}
	\E g_1(A)&\leq \log (e)2^{n\tau\beta}\sum_{x^n,u^n,w^n}p_X^\tn(x^n)p_{UW|X}^\tn(u^n,w^n|x^n)\left(\frac{p_U^\tn(u^n)\cdot p_{X|UW}^\tn(x^n|u^n,w^n)}{p_X^\tn(x^n)}\right)^{1-\beta}\\ 
	&=\log (e)2^{n\tau\beta}\left(\E \left[\left( \frac{p(U)p(X|W,U)}{p(X)}\right)^{1-\beta}\right]\right)^n
	\\&=\log (e)\exp{\left(n \left(\beta\tau +\log \E Z^{1-\beta} \right)\right)}\\
	&=\log (e)\exp{\left(n \left((1-\gamma)\tau +\log \E Z^\gamma \right)\right)} \label{eq:bound_g1}
	\end{align}
	where 
	\begin{align}
	Z \dfn \frac{p(U)p(X|U,W)}{p(X)}. 
	\end{align}
	and for simplicity we substituted $\gamma = 1-\beta$. Thus, if $\log\E{Z^\gamma} < 0$ for some $\gamma\in(0,1)$, then we can set $\tau>0$ small enough such that the bound~\eqref{eq:bound_g1} vanishes as $n\to \infty$. Expanding $\log\E Z^\gamma$ around $\gamma=0$ we have that
	\begin{align} 
	\log\E Z^\gamma &= \gamma\cdot \left. \frac{d\E Z^\gamma}{d\gamma}\right|_{\gamma =0} + O(\gamma^2)\\
	&=\gamma\cdot \E (\log Z) + O(\gamma^2)\\
	&=\gamma \cdot \E \left(\log \left(\frac{p(U)p(X|U,W)}{p(X)}\right)\right) + O(\gamma^2)\\
	&=-\gamma\cdot \left(H(U)-I(X;U,W)\right) + O(\gamma^2), 
	\end{align} 
	which by the assumption in the Lemma is negative for $\gamma>0$ small enough, hence indeed $\E g_1(A)\to 0$ as $n\to\infty$.  
	
	Proceeding to the second and third terms, we have 
	\begin{align} 
	\E g_2(A) &= \E\left[\sum_{x^n}p_2(x^n|A)\log\frac{p_2(x^n|Z)}{p_X^\tn(x^n)}\right]\\
	&\leq n\log\left(1/\min_x p_X(x)\right)\E\left[\sum_{x^n}p_2(x^n|A)\right],
	\end{align}
	and 
	\begin{align}\label{eq:g3_bound}
	\E g_3(A) &\leq h\left(\E \left[\sum_{x^n}p_2(x^n|A)\right]\right),
	\end{align} 	
	where in~\eqref{eq:g3_bound} we have used Jensen's inequality for the binary entropy function. To conclude our proof, it thus suffices to show that 
	\begin{align}\label{eq:suff_g23}
		\E \left[\sum_{x^n}p_2(x^n|A)\right] = o(1/n).
	\end{align} 
	To that end, write:
	\begin{align}
	\E\left[\sum_{x^n}p_2(x^n|A)\right] & = \sum_{x^n,u^n,w^n\not\in \mathcal{A}_{\tau}^{(n)}(x^n,u^n,w^n)}p_{XUW}^\tn(x^n,u^n,w^n) \\
	&=\Pr\left(\prod_{i=1}^n Z_i \geq 2^{n\tau}\right)\\
	&=\Pr\left(\prod_{i=1}^n Z_i^\gamma \geq 2^{n\tau\gamma}\right)\\
	\label{eq:markov_Z}& \leq 2^{-n\tau\gamma}\cdot\E\left(\prod_{i=1}^n Z_i^\gamma \right)\\
	&=  \left(2^{-\tau\gamma}\cdot\E Z^\gamma \right)^n.
	\end{align}	
	where $Z^n\sim P_Z^\tn$, $\gamma$ is an arbitrary positive constant, and Markov's inequality was used in~\eqref{eq:markov_Z}.  Hence, for~\eqref{eq:suff_g23} to hold it suffices to show that $\frac{\log\E Z^\gamma}{\gamma} < 0$ for some $\gamma >0$. Using L'Hospital's rule, this happens for $\gamma>0$ small enough if the derivative of $\log \E Z^\gamma$ is negative at $\gamma=0$, which as we have already seen holds under the conditions in the Lemma. 
	\end{proof}
\bibliography{Distributed_Source_Simulation_With_No_Communication}

\begin{thebibliography}{10}
\providecommand{\url}[1]{#1}
\csname url@samestyle\endcsname
\providecommand{\newblock}{\relax}
\providecommand{\bibinfo}[2]{#2}
\providecommand{\BIBentrySTDinterwordspacing}{\spaceskip=0pt\relax}
\providecommand{\BIBentryALTinterwordstretchfactor}{4}
\providecommand{\BIBentryALTinterwordspacing}{\spaceskip=\fontdimen2\font plus
\BIBentryALTinterwordstretchfactor\fontdimen3\font minus
  \fontdimen4\font\relax}
\providecommand{\BIBforeignlanguage}[2]{{%
\expandafter\ifx\csname l@#1\endcsname\relax
\typeout{** WARNING: IEEEtran.bst: No hyphenation pattern has been}%
\typeout{** loaded for the language `#1'. Using the pattern for}%
\typeout{** the default language instead.}%
\else
\language=\csname l@#1\endcsname
\fi
#2}}
\providecommand{\BIBdecl}{\relax}
\BIBdecl

\bibitem{wyner1975common}
A.~Wyner, ``The common information of two dependent random variables,''
  \emph{IEEE Transactions on Information Theory}, vol.~21, no.~2, pp. 163--179,
  1975.

\bibitem{gacs1973common}
P.~G{\'a}cs and J.~K{\"o}rner, ``Common information is far less than mutual
  information,'' \emph{Problems of Control and Information Theory}, vol.~2,
  no.~2, pp. 149--162, 1973.

\bibitem{witsenhausen1975sequences}
H.~S. Witsenhausen, ``On sequences of pairs of dependent random variables,''
  \emph{SIAM Journal on Applied Mathematics}, vol.~28, no.~1, pp. 100--113,
  1975.

\bibitem{frucht1949groups}
R.~Frucht, ``On the groups of repeated graphs,'' \emph{Bulletin of the American
  Mathematical Society}, vol.~55, no.~4, pp. 418--420, 1949.

\bibitem{ahlswede1993common}
R.~Ahlswede and I.~Csisz{\'a}r, ``Common randomness in information theory and
  cryptography. i. secret sharing,'' \emph{IEEE Transactions on Information
  Theory}, vol.~39, no.~4, pp. 1121--1132, 1993.

\bibitem{cuff2010coordination}
P.~W. Cuff, H.~H. Permuter, and T.~M. Cover, ``Coordination capacity,''
  \emph{IEEE Transactions on Information Theory}, vol.~56, no.~9, pp.
  4181--4206, 2010.

\bibitem{cover2007capacity}
T.~M. Cover and H.~H. Permuter, ``Capacity of coordinated actions,'' in
  \emph{Information Theory, 2007. ISIT 2007. IEEE International Symposium
  on}.\hskip 1em plus 0.5em minus 0.4em\relax IEEE, 2007, pp. 2701--2705.

\bibitem{abroshan2015zero}
M.~Abroshan, A.~Gohari, and S.~Jaggi, ``Zero error coordination,'' in
  \emph{Information Theory Workshop-Fall (ITW), 2015 IEEE}.\hskip 1em plus
  0.5em minus 0.4em\relax IEEE, 2015, pp. 202--206.

\bibitem{soljanin2002compressing}
E.~Soljanin, ``Compressing quantum mixed-state sources by sending classical
  information,'' \emph{IEEE Transactions on Information Theory}, vol.~48,
  no.~8, pp. 2263--2275, 2002.

\bibitem{bennett2002entanglement}
C.~H. Bennett, P.~W. Shor, J.~A. Smolin, and A.~V. Thapliyal,
  ``Entanglement-assisted capacity of a quantum channel and the reverse shannon
  theorem,'' \emph{IEEE Transactions on Information Theory}, vol.~48, no.~10,
  pp. 2637--2655, 2002.

\bibitem{bennett1999entanglement}
------, ``Entanglement-assisted classical capacity of noisy quantum channels,''
  \emph{Physical Review Letters}, vol.~83, no.~15, p. 3081, 1999.

\bibitem{bennett2014quantum}
C.~H. Bennett, I.~Devetak, A.~W. Harrow, P.~W. Shor, and A.~Winter, ``The
  quantum reverse shannon theorem and resource tradeoffs for simulating quantum
  channels,'' \emph{IEEE Transactions on Information Theory}, vol.~60, no.~5,
  pp. 2926--2959, 2014.

\bibitem{berta2011quantum}
M.~Berta, M.~Christandl, and R.~Renner, ``The quantum reverse shannon theorem
  based on one-shot information theory,'' \emph{Communications in Mathematical
  Physics}, vol. 306, no.~3, p. 579, 2011.

\bibitem{cuff2013distributed}
P.~Cuff, ``Distributed channel synthesis,'' \emph{IEEE Transactions on
  Information Theory}, vol.~59, no.~11, pp. 7071--7096, 2013.

\bibitem{4595216}
------, ``Communication requirements for generating correlated random
  variables,'' in \emph{2008 IEEE International Symposium on Information
  Theory}, July 2008, pp. 1393--1397.

\bibitem{haddadpour2017simulation}
F.~Haddadpour, M.~H. Yassaee, S.~Beigi, A.~Gohari, and M.~R. Aref, ``Simulation
  of a channel with another channel.'' \emph{IEEE Trans. Information Theory},
  vol.~63, no.~5, pp. 2659--2677, 2017.

\bibitem{gohari2011generating}
A.~A. Gohari and V.~Anantharam, ``Generating dependent random variables over
  networks,'' in \emph{2011 IEEE Information Theory Workshop}.\hskip 1em plus
  0.5em minus 0.4em\relax IEEE, 2011, pp. 698--702.

\bibitem{yassaee2015channel}
M.~H. Yassaee, A.~Gohari, and M.~R. Aref, ``Channel simulation via interactive
  communications,'' \emph{IEEE Transactions on Information Theory}, vol.~61,
  no.~6, pp. 2964--2982, 2015.

\bibitem{haddadpour2012coordination}
F.~Haddadpour, M.~H. Yassaee, A.~Gohari, and M.~R. Aref, ``Coordination via a
  relay,'' in \emph{2012 IEEE International Symposium on Information Theory
  Proceedings}.\hskip 1em plus 0.5em minus 0.4em\relax IEEE, 2012, pp.
  3048--3052.

\bibitem{satpathy2016secure}
S.~Satpathy and P.~Cuff, ``Secure cascade channel synthesis,'' \emph{IEEE
  Transactions on Information Theory}, vol.~62, no.~11, pp. 6081--6094, 2016.

\bibitem{neuhoff1979channels}
D.~Neuhoff and P.~Shields, ``Channels with almost finite memory,'' \emph{IEEE
  Transactions on Information Theory}, vol.~25, no.~4, pp. 440--447, 1979.

\bibitem{neuhoff1982channel}
D.~L. Neuhoff and P.~C. Shields, ``Channel entropy and primitive
  approximation,'' \emph{The Annals of Probability}, pp. 188--198, 1982.

\bibitem{steinberg1994channel}
Y.~Steinberg and S.~Verdu, ``Channel simulation and coding with side
  information,'' \emph{IEEE Transactions on Information Theory}, vol.~40,
  no.~3, pp. 634--646, 1994.

\bibitem{bogdanov2011extracting}
A.~Bogdanov and E.~Mossel, ``On extracting common random bits from correlated
  sources,'' \emph{IEEE Transactions on information theory}, vol.~57, no.~10,
  pp. 6351--6355, 2011.

\bibitem{kumar2014exact}
G.~R. Kumar, C.~T. Li, and A.~El~Gamal, ``Exact common information,'' in
  \emph{Information Theory (ISIT), 2014 IEEE International Symposium on}.\hskip
  1em plus 0.5em minus 0.4em\relax IEEE, 2014, pp. 161--165.

\bibitem{DBLP:journals/corr/GhaziKS16}
\BIBentryALTinterwordspacing
B.~Ghazi, P.~Kamath, and M.~Sudan, ``Decidability of non-interactive simulation
  of joint distributions,'' \emph{CoRR}, vol. abs/1607.04322, 2016. [Online].
  Available: \url{http://arxiv.org/abs/1607.04322}
\BIBentrySTDinterwordspacing

\bibitem{de2018non}
A.~De, E.~Mossel, and J.~Neeman, ``Non interactive simulation of correlated
  distributions is decidable,'' in \emph{Proceedings of the Twenty-Ninth Annual
  ACM-SIAM Symposium on Discrete Algorithms}.\hskip 1em plus 0.5em minus
  0.4em\relax SIAM, 2018, pp. 2728--2746.

\bibitem{kamath2016non}
S.~Kamath and V.~Anantharam, ``On non-interactive simulation of joint
  distributions.'' \emph{IEEE Trans. Information Theory}, vol.~62, no.~6, pp.
  3419--3435, 2016.

\bibitem{renyi1959measures}
A.~R{\'e}nyi, ``On measures of dependence,'' \emph{Acta mathematica hungarica},
  vol.~10, no. 3-4, pp. 441--451, 1959.

\bibitem{o2014analysis}
R.~O'Donnell, \emph{Analysis of boolean functions}.\hskip 1em plus 0.5em minus
  0.4em\relax Cambridge University Press, 2014.

\bibitem{friedgut2002boolean}
E.~Friedgut, G.~Kalai, and A.~Naor, ``Boolean functions whose fourier transform
  is concentrated on the first two levels,'' \emph{Advances in Applied
  Mathematics}, vol.~29, no.~3, pp. 427--437, 2002.

\bibitem{cover2012elements}
T.~M. Cover and J.~A. Thomas, \emph{Elements of information theory}.\hskip 1em
  plus 0.5em minus 0.4em\relax John Wiley \& Sons, 2012.

\bibitem{cuff2015stronger}
P.~Cuff, ``A stronger soft-covering lemma and applications,'' \emph{arXiv
  preprint arXiv:1508.01602}, 2015.

\bibitem{minero2015unified}
P.~Minero, S.~H. Lim, and Y.-H. Kim, ``A unified approach to hybrid coding.''
  \emph{IEEE Trans. Information Theory}, vol.~61, no.~4, pp. 1509--1523, 2015.

\bibitem{soundararajan2009hybrid}
R.~Soundararajan and S.~Vishwanath, ``Hybrid coding for gaussian broadcast
  channels with gaussian sources,'' in \emph{Information Theory, 2009. ISIT
  2009. IEEE International Symposium on}.\hskip 1em plus 0.5em minus
  0.4em\relax IEEE, 2009, pp. 2790--2794.

\bibitem{csiszar2011information}
I.~Csiszar and J.~K{\"o}rner, \emph{Information theory: coding theorems for
  discrete memoryless systems}.\hskip 1em plus 0.5em minus 0.4em\relax
  Cambridge University Press, 2011.

\bibitem{el2011network}
A.~El~Gamal and Y.-H. Kim, \emph{Network information theory}.\hskip 1em plus
  0.5em minus 0.4em\relax Cambridge university press, 2011.

\end{thebibliography}
\bibliographystyle{IEEEtran}
\end{document}